\theoremstyle{plain}
\newtheorem{theorem}{Theorem}[section]
\newtheorem{lemma}[theorem]{Lemma}
\newtheorem{proposition}[theorem]{Proposition}
\newtheorem{corollary}[theorem]{Corollary}
\theoremstyle{definition}
\newtheorem{definition}[theorem]{Definition}
\newtheorem{remark}[theorem]{Remark}
\newtheorem{example}[theorem]{Example}
\numberwithin{equation}{section}
\newcommand{\dda}{\mathord{\mbox{\makebox[0pt][l]{\raisebox{-.4ex}{$\downarrow$}}$\downarrow$}}}
\newcommand{\da}{\mathord{\downarrow}}
\newcommand{\rom}[1]{\rm{\uppercase\expandafter{\romannumeral #1}}}
\newcommand{\set}[2]{\{#1\mid#2\}}
\def\ps@pprintTitle{%
  \let\@oddhead\@empty
  \let\@evenhead\@empty
  \def\@oddfoot{\reset@font\hfil\thepage\hfil}
  \let\@evenfoot\@oddfoot
}
\begin{document}

\begin{frontmatter}

\title{Continuous Domains in Formal Concept Analysis\tnoteref{t1}}
\tnotetext[t1]{Supported by the National Natural Science Foundation of China(11771134).}

\author[]{Longchun Wang$^{a,b}$}
\address[1]{College of Mathematics and Econometrics, Hunan University, Changsha, Hunan, 410082, China}
\address[2]{School of Mathematical Sciences, Qufu Normal University, Qufu, Shandong, 273165, China}

\author[3]{Lankun Guo\corref{a1}}
\address[3]{College of Mathematics and Computer Sciences, Hunan Normal University, Changsha, Hunan, 410012, China}
\author[1]{Qingguo Li\corref{a1}}
\cortext[a1]{Corresponding author.}
\ead{lankun.guo@gmail.com(Lankun Guo);liqingguoli@aliyun.com}

\begin{abstract}

Formal Concept Analysis has proven to be an effective method of restructuring complete lattices and various algebraic domains. In this paper, the notions of attribute continuous formal context and continuous formal concept are introduced by considering a selection~$\mathcal{F}$ of finite subsets of attributes. Our  decision of a selection~$\mathcal{F}$ relies on a kind of generalized interior operators.  It is shown that the set of  continuous formal concepts forms a continuous domain, and every continuous domain can be obtained in this way. Moreover, an  notion of $\mathcal{F}$-morphism is also identified to produce a category  equivalent to that of continuous domains with Scott-continuous functions. This paper also consider the representations of various subclasses of continuous domains such as algebraic domains, bounded complete domains and stably continuous semilattices. These results explore the fundamental idea of domain theory in Formal Concept Analysis from a categorical viewpoint.

\end{abstract}

\begin{keyword}
 domain theory \sep Formal Concept Analysis \sep continuous formal concept  \sep categorical equivalence
\end{keyword}
\end{frontmatter}
\section{Introduction}

The notion of formal concept evolved in the philosophical
theory and logical theory.  In the early 1980s, to better understand lattice theory for potential users of lattice-based methods for data management, Formal Concept Analysis (FCA) was initiated by Wille\cite{ganter99}.
 A central theme of
FCA is  to restructure  lattice theory by formal concept.
\emph{The basic theorem on concept lattices}~\cite{ganter99} has shown that the set of formal concepts ordered by set inclusion forms a complete lattice, called a concept lattice, and that every complete lattice can be restructured as a concept lattice. Since then, FCA  has developed into a interdisciplinary research area  with a thriving theoretical community and an increasing number of
applications in computer science and artificial intelligence~\cite{guo11, lai09, li17, li17a, poelmans13a}.

Domain theory was introduced by Scott as an abstract mathematical model of formal languages. The key idea of domain theory is partiality and approximation, which  makes sure that a higher order object can be successively approximated by computing those ordinary objects. Then domain theory can be specified as a computationally relevant framework  and as such has found applications in the theory of denotational semantics as well as in aspects of knowledge representation and reasoning.  An important topic in domain theory  is to build the interrelation between  domains and various mathematical structures. As can be expected, a great deal of  effort has gone into characterizations of various domains.
  To name a few examples, we have Scott's representations of Scott domains as information systems~\cite{Scott82}, Abramsky's flexible work  \emph{domain theory in logical form} for SFP-domains \cite{abramsky91} and its extensions \cite{Chen06, jung99},  some topological investigations by Vickers~\cite{Vickers89}.  More articles about this issue can be found in \cite{jung13, Lawson97,  Spreen12, Spreen08}.

In~\cite{zhang06}, Zhang and Shen brought the two independent areas, FCA and Domain theory, to a common meeting point. They generalized the notion of concept to approximable concept and obtained a representation theorem of algebraic lattices based on approximable concepts. Hitzler et al.~\cite{hitzler06} built  the notion of approximable concept with a category which is equivalent to that of algebraic lattices and Scott-continuous functions.
In \cite{huang14}, Huang et al. presented a notion of $F$-approximble concept and provided  an approach to representing algebraic domains.  Almost at the same time, Guo et al. \cite{guo14} developed two variations of rough approximable concepts and obtained the representation of algebraic domains by FCA in the rough setting. These works enrich the link between domain theory and FCA. For further information on the relationship between domain theory and FCA, see, for example, \cite{guo18, hitzler,  lei09, li13a}.

However, all these generalizations of  formal concepts mentioned above  represent only subclasses of algebraic domains. And  general continuous  domains are highly relevant to lots of tasks in data analysis and knowledge reasoning, covering important example based on the real line. To the best of our knowledge, the intimate relationship between general continuous domains and FCA were not explicated until the current paper. The aim of this paper is to explore the continuity in FCA from a category-theoretical viewpoint and give a novel approach to representing continuous domains by means of FCA. For this purpose, we  generalize the technique of classical formal concepts and propose the notions of attribute continuous formal context and continuous formal attribute concept. A new tool used in our definitions is a family of nonempty finite subsets of attributes in a formal context. It is shown that attribute continuous formal concepts generate exactly continuous domains. In order to represent the Scott-continuous functions between continuous domains, a notion of $\mathcal{F}$-morphism between attribute continuous formal contexts is introduced. Then the category of attribute continuous formal contexts with $\mathcal{F}$-morphisms is  established which is equivalent to that  of continuous domains with Scott-continuous functions. Furthermore, the representations of many subclasses of continuous domains are studied by driving some appropriate conditions into an attribute continuous formal context. Especially, we capture stably continuous semilattices, a special kinds of continuous semilattices, in term of FCA.
  All these results demonstrate the capacity of FCA in representing continuous domains.

The paper is organized as follows: In Section 2, the necessary definitions and results from domain theory and FCA are recalled. In Section~3, notions of attribute continuous formal  contexts and continuous formal concepts are introduced, which generalized formal contexts and approximable concepts in some sense. It is proved that each continuous attribute formal  contexts generate a continuous domain. For a great variety of subclasses of continuous domains, how they can be represented as attribute continuous formal  contexts is also studied.
In Section~4,  the appropriate morphism for continuous attribute formal  contexts is investigated. Then the categorical equivalence between continuous domains and attribute continuous formal  contexts is established. Some remarks can be found in Section 5.
\section{Preliminaries}
\subsection{Domain theory}
Let us first recall some basic definitions and results of domain theory. Our main references will be~\cite{ davey02, gierz03, goubault13a}.

  A \emph{poset} $P$ is a set equipped with a reflexivity, antisymmetry and transitivity binary relation $\leq$ on it. If a poset~$(P,\leq)$ has a least element, it is called \emph{pointed}. A semilattice is a poset in which every two elements $x,y$ have a greatest lower bound $x\wedge y$. For any subset $A$ of $P$, the \emph{down set}~$\da A$ of $A$  is a set $\set{x\in P}{\exists a\in A, x\leq a}$. We  abbreviate a \emph{principal idea}~$\da \{x\}$ as $\da x$. A subset  $D$ of  $P$ is \emph{directed} if it is nonempty and every finite subset of $D$ has an upper bound in $D$. A poset $P$  is said to be a \emph{dcpo} if  each directed subset~$D$ of  $P$ has a  least upper bound~$\bigvee D$ in  $P$.  A  \emph{complete lattice} is a poset $P$ in which every subset has a least upper bound.

Let $P$ be a dcpo. Then $x$ is \emph{ way below} $y$, in notation $x\ll y$, if and only if for any directed subset $D$ of $P$, the relation $y\leq \bigvee D$ always implies that $x\leq d$ for some $d\in D$. Obviously, $x\ll y$ implies that $x\leq y$.
For any $X\subseteq P$, $\dda X$ is defined to be the set
$\set{y\in P}{ (\exists x\in X)y\ll x}$. And $\dda \{x\}$  is abbreviated as $\dda x$.
   An element $x\in P$ is said to be \emph{compact} if $x\ll x$. We write $K(P)$ to stand for the set of compact elements of $P$. A \emph{basis} $B_P$ of $P$ is a subset of $P$ such that,
for every $x\in P$, the collection $B_P \cap \dda  x$  is directed, and has $x$ as a least upper bound.
  \begin{definition}\label{d2.1}
   \begin{enumerate}[(1)]
  \item A dcpo $P$ is said to be a \emph{continuous domain} if  it has a basis.
  \item A \emph{bounded complete domain} is a continuous domain in which every bounded above subset has a least upper bound.
  \item A \emph{continuous lattice} is a continuous domain which is also a complete lattice.
  \item A \emph{stably continuous semilattice} is a continuous domain such that it is a semilattice and the way below relation on it is \emph{multiplicative}, that is, $x\ll y,z$ implies $x\ll \wedge z$.
   \item A  dcpo $P$ is said to be an \emph{algebraic domain} if $K(P)$ forms a basis.
   \end{enumerate}
   \end{definition}

The way-below relation on a continuous domain  satisfies the following \emph{interpolation property}:
\begin{proposition}
Let $P$ be a continuous domain and let $M \subseteq P$ be a finite set with $M\ll y$. Then there exists $z\in P$ such that $M \ll z \ll y$ holds, where $M\ll y$ means that $x\ll y$ for all elements $x\in M$.
\end{proposition}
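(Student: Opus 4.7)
The plan is to reduce the finite-set interpolation to the single-element case and then patch the individual interpolants together using directedness of the set $\dda y$.

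First I would establish the single-element interpolation: if $x \ll y$ in a continuous domain $P$ with basis $B_P$, there exists $z \in P$ with $x \ll z \ll y$. The idea is to use the basis twice. Let $D_y = B_P \cap \dda y$, so $D_y$ is directed with $\bigvee D_y = y$, and for each $d \in D_y$ let $D_d = B_P \cap \dda d$, which is directed with $\bigvee D_d = d$. Set $E = \bigcup_{d \in D_y} D_d$. Then $\bigvee E = y$, because every $e \in E$ satisfies $e \leq d \leq y$ for some $d \in D_y$, and conversely $y = \bigvee D_y = \bigvee_{d \in D_y} \bigvee D_d$. The main verification is that $E$ is directed: given $e_1, e_2 \in E$ lying in $D_{d_1}, D_{d_2}$ respectively, use directedness of $D_y$ to pick $d_3 \in D_y$ above both $d_1, d_2$; since $e_i \ll d_i \leq d_3 = \bigvee D_{d_3}$, each $e_i$ is below some element of $D_{d_3}$, and directedness of $D_{d_3}$ produces a common upper bound in $D_{d_3} \subseteq E$. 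Since $x \ll y = \bigvee E$, some $e \in E$ lies above $x$, so $e \in D_d$ for some $d \in D_y$, and then $x \leq e \ll d \ll y$; take $z = d$.

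Next I would observe that $\dda y$ is itself directed in a continuous domain. Indeed, given $u_1, u_2 \ll y$, since $D_y = B_P \cap \dda y$ is directed with supremum $y$, each $u_i$ lies below some $b_i \in D_y$, and a common upper bound $b \in D_y$ of $b_1, b_2$ satisfies $u_1, u_2 \leq b \ll y$, so $b \in \dda y$.

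Finally, for a finite $M$ with $M \ll y$, apply the single-element interpolation to each $x \in M$ to obtain $z_x$ with $x \ll z_x \ll y$. The finite family $\{z_x : x \in M\} \subseteq \dda y$ has, by directedness of $\dda y$, a common upper bound $z \in \dda y$. Then for every $x \in M$ we have $x \ll z_x \leq z$, which by the standard property of the way-below relation gives $x \ll z$, while $z \ll y$ by construction. Hence $M \ll z \ll y$.

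The main obstacle is the single-element interpolation, specifically verifying that the two-step basis set $E$ is directed; the rest is a straightforward application of directedness of $\dda y$. A minor subtlety is handling the case $M = \emptyset$, which is trivial since one can take any $z \in \dda y$ (nonempty because $y = \bigvee D_y$ with $D_y$ directed, hence nonempty, provided $P$ has at least the element $y$ available; if $M = \emptyset$ and $\dda y$ is empty the statement $M \ll z$ is vacuous but we still need some $z \ll y$, which one obtains from $D_y \neq \emptyset$).
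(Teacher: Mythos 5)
Your argument is correct. Note that the paper itself gives no proof of this proposition---it is quoted as a standard fact from the domain-theory references (Gierz et al.)---and your proof is essentially the textbook argument: establish single-element interpolation by forming the two-step basis set $E=\bigcup_{d\in B_P\cap\dda y}(B_P\cap\dda d)$, verify it is directed with supremum $y$, and then handle finite $M$ by combining the individual interpolants via directedness of $\dda y$. All the individual steps check out, including the use of $a\ll b\leq c\Rightarrow a\ll c$ and the nonemptiness observations needed for the case $M=\emptyset$.
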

\begin{definition}\label{d21}
A function~$f:P\rightarrow Q$ between two continuous domains is said to be \emph{Scott-continuous} if, for all directed subset $D\subseteq P$, $f(\bigvee D)=\bigvee_{x\in D}f(x)$.
\end{definition}
\subsection{Formal concept analysis}

 A fundamental contribution of FCA is to provide  an alternative formulation of lattice theory, which starts from the notions of formal context and formal concept. We highly recommend \cite{ganter99} as an introduction to classical FCA.

A \emph{formal context}  is a triple~$(P_o,P_a,\vDash_P)$ where  $P_o$ is a set of \emph{objects} and $P_a$ is a set of \emph{attributes}. The  relation $\vDash_P$ is a subset of $P_o\times P_a$.
In this case, two functions can be
 defined:
 \begin{equation}\label{e2.1}
 \alpha:\mathcal{P}(P_o)\rightarrow \mathcal{P}(P_a), A\mapsto\set{n\in P_a}{\forall m\in A, m\vDash n},
 \end{equation}
\begin{equation}\label{e2.2}
\omega:\mathcal{P}(P_a)\rightarrow \mathcal{P}(P_o), B\mapsto\set{m\in P_o}{\forall n\in B, m\vDash n}.
\end{equation}
A \emph{formal (attribute) concept} of a formal context $(P_o,P_a,\vDash)$ is a subset $B\subseteq P_a$ which is a fixed-point of $\alpha\circ\omega$. Dually, a \emph{formal object concept} of a formal context $(P_o,P_a,\vDash)$ is a subset $A\subseteq P_o$ if it  is a fixed point of $\omega\circ\alpha$.
The set of all formal attribute concepts and the set of all formal object concepts of a formal context $(P_o,P_a,\vDash)$ are denoted by $\mathfrak{B}(P_a)$ and $\mathfrak{B}(P_o)$, respectively. 

For any set $X$, let $\mathcal{P}(X)$ and $\mathcal{F}(X)$ denote the powerset of  $X$ and the family of all finite subsets of $X$, respectively. The notion $F\sqsubseteq X$  means that $F\in\mathcal{F}(X)$.

A \emph{closure operator} on $X$ is a function $\gamma$ on $\mathcal{P}(X)$ which is  extensive ($A\subseteq \gamma(A)$), monotone ($A\subseteq B\Rightarrow \gamma(A)\subseteq \gamma(B)$) and idempotent ($\gamma(A)=\gamma(\gamma(A))$).
An \emph{interior operator} on $X$ is a function $\tau$ on $\mathcal{P}(X)$ which is contractive ($\tau(A)\subseteq A$), monotone and idempotent.

\begin{proposition}[The basic theorem on concept lattices]~\cite{ganter99}
Let $(P_o,P_a,\vDash)$ be a formal context.
 \begin{enumerate}[(1)]
 \item Both $(\mathfrak{B}(P_a),\subseteq)$ and $(\mathfrak{B}(P_o),\subseteq)$ are complete lattices, and are anti-isomorphic to each other.
 \item The compositions $\alpha\circ\omega$ and $\omega\circ\alpha$ are closure operators.
 \end{enumerate}
\end{proposition}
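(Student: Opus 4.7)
My plan is to derive both parts of the theorem from the single observation that the pair $(\alpha,\omega)$ forms an (antitone) Galois connection between the powerset lattices $(\mathcal{P}(P_o),\subseteq)$ and $(\mathcal{P}(P_a),\subseteq)$. Concretely, I would first verify directly from the definitions in~(\ref{e2.1}) and~(\ref{e2.2}) that for any $A\subseteq P_o$ and $B\subseteq P_a$,
\begin{equation*}
A\subseteq \omega(B) \iff (\forall m\in A)(\forall n\in B)\, m\vDash n \iff B\subseteq \alpha(A).
\end{equation*}
This symmetric unfolding is essentially a tautology, so the main content is just bookkeeping. From this equivalence, the standard Galois connection machinery yields immediately that $\alpha$ and $\omega$ are antitone, that $\alpha\omega\alpha=\alpha$ and $\omega\alpha\omega=\omega$, and that $\alpha\circ\omega$ and $\omega\circ\alpha$ are extensive, monotone, and idempotent — that is, closure operators. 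This settles part~(2).

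For part~(1), I would first observe that $\mathfrak{B}(P_a)$ and $\mathfrak{B}(P_o)$ are precisely the sets of fixed points of the closure operators $\alpha\circ\omega$ and $\omega\circ\alpha$, respectively. A general and well-known fact is that the fixed points of any closure operator on a complete lattice form a complete lattice in which meets are given by intersection (the intersection of any family of closed sets is closed, by monotonicity and extensivity) and joins are given by applying the closure to the union. Applying this to $\mathcal{P}(P_a)$ and $\mathcal{P}(P_o)$ gives the completeness of both $(\mathfrak{B}(P_a),\subseteq)$ and $(\mathfrak{B}(P_o),\subseteq)$. For the anti-isomorphism, I would show that $\alpha$ restricts to a map $\mathfrak{B}(P_o)\to \mathfrak{B}(P_a)$ and $\omega$ restricts to a map in the other direction; the identities $\alpha\omega\alpha=\alpha$ and $\omega\alpha\omega=\omega$ then force these restrictions to be mutually inverse, and antitonicity is inherited from the original maps.

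There is no substantial obstacle here — the whole argument is a direct application of Galois connection theory. The only place to be careful is to check that $\alpha$ indeed sends formal object concepts (fixed points of $\omega\circ\alpha$) to formal attribute concepts (fixed points of $\alpha\circ\omega$), which follows from $\alpha\omega\alpha=\alpha$: for any $A$ with $\omega(\alpha(A))=A$, one has $\alpha\omega(\alpha(A))=\alpha(A)$, so $\alpha(A)\in\mathfrak{B}(P_a)$. The symmetric verification for $\omega$ is analogous. Since this is a classical result cited from~\cite{ganter99}, I would expect the authors' proof to be essentially the same sketch, possibly compressed to a reference.
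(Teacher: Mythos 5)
Your proposal is correct: the Galois-connection argument (verifying $A\subseteq\omega(B)\iff B\subseteq\alpha(A)$, deducing that $\alpha\circ\omega$ and $\omega\circ\alpha$ are closure operators, taking fixed points to get complete lattices, and using $\alpha\omega\alpha=\alpha$, $\omega\alpha\omega=\omega$ for the anti-isomorphism) is the standard proof of this classical result. The paper itself gives no proof, citing \cite{ganter99} instead, and your sketch matches the argument found there, so there is nothing further to compare.
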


Much research has shown that we can also restructure some order structures by means of FCA. For example, Zhang and Shen \cite{zhang06} represented algebraic lattices by approximable  concepts, Huang et al. \cite{huang14} developed the notion of $F$-approximable  concept to generate algebraic domains.
\begin{definition}~\label{d2.5}\cite{zhang06}
Let $(P_o,P_a,\vDash)$ be a formal context. A subset $Q\subseteq P_a$ is called an \emph{approximable  concept} if the following condition holds,
\begin{enumerate}[{\bf (AC)}]
\item $M\sqsubseteq Q\Rightarrow\alpha(\omega(M))\subseteq Q.$
\end{enumerate}
\end{definition}

\begin{definition} \cite{huang14}\label{d2.6}
Let  $(P_o,P_a,\vDash)$ be a formal context and $\mathcal{F}$  a nonempty finite subset of $P_a$ which satisfies the following condition,
 \begin{enumerate}[{(\bf FC)}]
 \item $(\forall F\in \mathcal{F}) M\sqsubseteq \alpha\circ \omega(F)\Rightarrow (\exists G\in \mathcal{F})M\subseteq G\subseteq \alpha\circ \omega(F).$
 \end{enumerate}
Then $(P_o,P_a,\vDash, \mathcal{F})$ is called a \emph{conditional formal context}. And a subset $Q\subseteq P_a$ is called an \emph{F-approximable concept} if it satisfies the following conditions,
\begin{enumerate}[{(\bf F{A}1)}]
\item $(\forall M \sqsubseteq Q)(\exists F\in \mathcal{F}) M\subseteq F\subseteq Q,$
\item $(\forall F\in \mathcal{F}) F\subseteq Q \Rightarrow \alpha(\omega(F))\subseteq Q.$
\end{enumerate}
\end{definition}

Given a formal context~$(P_o,P_a,\vDash)$, let $\mathcal{F}=\mathcal{F}(P_a)$. Then by Definition~\ref{d2.6}, $(P_o,P_a,\vDash,\mathcal{F})$ is a conditional formal  context. And for any subset~$Q$ of $P_a$, it is an approximable concept of~$(P_o,P_a,\vDash)$ if and only if it is  an F-approximable  concept of~$(P_o,P_a,\vDash,\mathcal{F})$.

\section{Representations of various domains}
The main purposes of this section is to establish a systematic connection between continuous domains and FCA.
\subsection{Continuous formal concept}
In this subsection, based on a formal context~$(P_o,P_a,\vDash)$ and a selection $\mathcal{F}$ of finite subsets of $P_a$, we present the notion of continuous formal concept, and investigate the relationship between this notion and those of approximable  concept, F-approximable  concept and classical formal  concept.  A new technique is that the selection $\mathcal{F}$ relies on a  generalized interior operator introduced in the following.

\begin{definition}~\label{d5.5}
Let $\mathbb{P}=(P_o,P_a,\vDash)$ be a formal context. A \emph{kernel attribute operator} on $\mathbb{P}$ is a mapping  $\tau:\mathcal{P}(P_a)\rightarrow\mathcal{P}(P_a)$ which fulfils the following conditions, for any  $B, B_1\subseteq P_a$,
\begin{enumerate}[{(\bf{A}1)}]
\item $\tau(\alpha(\omega(B)))\subseteq \alpha(\omega(B))$;
\item $\tau(\tau(\alpha(\omega(B))))=\tau(\alpha(\omega(B)))$;
\item $\tau(\alpha(\omega(B)))\subseteq\tau(\alpha(\omega(B_1)))$ whenever $B\subseteq B_1$.
\end{enumerate}
\end{definition}

It is easy to see that each interior operator on $P_a$, particularly the identity map~\rm{id}$_{\mathcal{P}(P_a)}$,  is a  kernel attribute operator. 

 Given a kernel attribute operator $\tau$ on a formal context~$(P_o,P_a,\vDash)$, we often abbreviate $\tau\circ\alpha\circ\omega(B)$ as $\lceil B\rceil$ for any $B\subseteq P_a$. Condition~(A3) indicates that the operator~$\lceil\cdot\rceil$ is monotone. Moreover, it is idempotent.  In fact, since $\lceil\lceil B\rceil\rceil=\tau\circ\alpha\circ\omega\circ\tau\circ\alpha\circ\omega(B)$ and $\alpha\circ\omega$ is a closure operator on $P_a$, we have
 $$\lceil\lceil B\rceil\rceil\subseteq\tau\circ\alpha\circ\omega\circ\alpha\circ\omega(B)=
 \tau\circ\alpha\circ\omega(B)=\lceil B\rceil=\tau\circ\tau\circ\alpha\circ\omega(B)\subseteq\lceil\lceil B\rceil\rceil.$$ And hence for any $B\subseteq P_a$,
 \begin{equation}\label{e3.1}
 B_1\subseteq \lceil B\rceil\Rightarrow\lceil B_1\rceil\subseteq\lceil B\rceil.
 \end{equation}

 \begin{definition}\label{d3.2}
  Let $(P_o,P_a,\vDash)$  be a formal context and $\tau$ a kernel attribute operator on $(P_o,P_a,\vDash)$. A $\tau$-\emph{consistent selection} is a nonempty family $\mathcal{F}$  of nonempty subsets of $P_a$ which satisfies the following condition,
\begin{enumerate}[{\bf (C{A}1)}]
\item  $(\forall F\in \mathcal{F}) M\sqsubseteq\lceil F\rceil\Rightarrow (\exists G\in \mathcal{F}) (M\subseteq\lceil G\rceil, G\subseteq \lceil F\rceil)$.
\end{enumerate}
\end{definition}

In what follows, we denote a $\tau$-consistent selection $\mathcal{F}$
by $\mathcal{F}_{\tau}$.  For any $F\in\mathcal{F}_{\tau}$, since $F\neq \emptyset$, by condition~(CA1), it is clear that $\lceil F\rceil\neq \emptyset$ and $\set{\lceil G\rceil}{G\in \mathcal{F}_{\tau}, G\subseteq\lceil F\rceil}\neq \emptyset$.

\begin{definition} A structure $\mathbb{P}_{\tau}=(P_o,P_a,\vDash,\mathcal{F}_{\tau})$ is said to be an \emph{attribute continuous formal context} if $(P_o,P_a,\vDash)$  is a formal context and $\mathcal{F}_{\tau}$  is a $\tau$-consistent selection.
\end{definition}
 For any family $\mathcal{F}$  of nonempty subsets of $P_a$, perhaps there exist two different kernel attribute operators~$\tau$ and $\upsilon$ on $(P_o,P_a,\vDash)$ such that $\mathcal{F}$ is both a $\tau$-consistent selection  and a $\upsilon$-consistent selection. In this case, $\mathbb{P}_{\tau}=(P_o,P_a,\vDash,\mathcal{F}_{\tau})$ and $\mathbb{P}_{\upsilon}=(P_o,P_a,\vDash,\mathcal{F}_{\upsilon})$ are two different  attribute continuous formal contexts.
\begin{proposition}\label{p3.4}
In Definition~\ref{d3.2}, condition (CA1) is equivalent to saying that for any $F\in \mathcal{F}_{\tau}$, there exists a directed set $\set{\lceil F_i\rceil}{F_i\in\mathcal{F}_{\tau},F_i\subseteq \lceil F\rceil,i\in I}$ such that $\lceil F\rceil=\bigcup_{i\in I}\lceil F_i\rceil.$
\end{proposition}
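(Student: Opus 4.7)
The plan is to prove the equivalence by establishing both directions separately, with the monotonicity-and-idempotence consequence~(3.1) of $\lceil\cdot\rceil$ serving as the main technical tool to convert set-inclusions of the form $X\subseteq\lceil F\rceil$ into $\lceil X\rceil\subseteq\lceil F\rceil$.

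For the forward direction, assume (CA1) and fix $F\in\mathcal{F}_\tau$. I would set
$$\mathcal{D}_F=\{\lceil F_i\rceil : F_i\in\mathcal{F}_\tau,\ F_i\subseteq\lceil F\rceil\}$$
and check three things. First, $\mathcal{D}_F$ is nonempty: since $\lceil F\rceil\neq\emptyset$ (as noted just after Definition~\ref{d3.2}), pick any $x\in\lceil F\rceil$ and apply (CA1) to $M=\{x\}$ to obtain some $G\in\mathcal{F}_\tau$ with $G\subseteq\lceil F\rceil$. Second, $\mathcal{D}_F$ is directed under inclusion: given $\lceil F_1\rceil,\lceil F_2\rceil\in\mathcal{D}_F$, the union $F_1\cup F_2$ is a finite subset of $\lceil F\rceil$, so (CA1) yields some $G\in\mathcal{F}_\tau$ with $F_1\cup F_2\subseteq\lceil G\rceil$ and $G\subseteq\lceil F\rceil$; then (3.1) gives $\lceil F_1\rceil,\lceil F_2\rceil\subseteq\lceil G\rceil$, so $\lceil G\rceil\in\mathcal{D}_F$ is the desired upper bound. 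Third, $\bigcup\mathcal{D}_F=\lceil F\rceil$: the inclusion $\subseteq$ follows because each $F_i\subseteq\lceil F\rceil$ entails $\lceil F_i\rceil\subseteq\lceil F\rceil$ via~(3.1), while the reverse inclusion is obtained by applying (CA1) with $M=\{x\}$ for each $x\in\lceil F\rceil$.

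For the backward direction, assume the stated directed-union representation and let $M\sqsubseteq\lceil F\rceil$. Since $M$ is finite and covered by the directed family $\{\lceil F_i\rceil\}_{i\in I}$, I would choose, for each $x\in M$, an index $i(x)\in I$ with $x\in\lceil F_{i(x)}\rceil$, and then use directedness to find a single $i^*\in I$ with $\lceil F_{i(x)}\rceil\subseteq\lceil F_{i^*}\rceil$ for every $x\in M$. Setting $G=F_{i^*}$ gives $G\in\mathcal{F}_\tau$, $G\subseteq\lceil F\rceil$ and $M\subseteq\lceil G\rceil$, which is exactly~(CA1).

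The only subtle point that could trip one up is that condition~(A1) gives $\lceil F\rceil\subseteq\alpha(\omega(F))$ but does \emph{not} guarantee $F\subseteq\lceil F\rceil$; consequently one cannot simply declare $\lceil F\rceil$ itself to be a witness for nonemptiness of $\mathcal{D}_F$, and nonemptiness must be extracted from (CA1) as above. Beyond that, the argument is a routine application of~(3.1) together with the standard trick of promoting finite covers to single indices via directedness.
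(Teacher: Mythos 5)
Your proposal is correct and follows essentially the same route as the paper's own proof: the same candidate family $\{\lceil G\rceil : G\in\mathcal{F}_\tau,\ G\subseteq\lceil F\rceil\}$, directedness via (CA1) applied to $F_1\cup F_2$, the union identity via (CA1) on singletons together with (3.1), and the converse by promoting the finite cover of $M$ to a single index by directedness. The only cosmetic difference is that the paper extracts nonemptiness of the family by applying (CA1) to $M=\emptyset$ (as noted right after Definition~\ref{d3.2}) rather than to a singleton, but this does not affect correctness.
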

\begin{proof}
Suppose that condition (CA1) holds. For any $F\in\mathcal{F}_{\tau}$, as has been mentioned,  $\set{\lceil G\rceil}{G\in\mathcal{F}_{\tau}, G\subseteq \lceil F\rceil}$ is not empty.
Let $G_1,G_2\in \mathcal{F}_{\tau}$ and $G_1,G_2\subseteq\lceil F\rceil$. Then $G_1\cup G_2\sqsubseteq \lceil F\rceil$. By condition~(CA1), there exists some $G_3\in\mathcal{F}_{\tau}$ such that $G_3\subseteq\lceil F\rceil$ and $G_1\cup G_2\subseteq \lceil G_3 \rceil$. This implies that $\lceil G_1\rceil\subseteq \lceil G_3 \rceil$ and $\lceil G_2 \rceil\subseteq \lceil G_3 \rceil$. This show that $\set{\lceil G\rceil}{G\in\mathcal{F}_{\tau}, G\subseteq \lceil F\rceil}$ is directed. We now show that $\lceil F\rceil=\bigcup\set{\lceil G\rceil}{G\in\mathcal{F}_{\tau}, G\subseteq \lceil F\rceil}$.
Since  $\bigcup\set{\lceil G\rceil}{G\in\mathcal{F}_{\tau}, G\subseteq \lceil F\rceil}\subseteq \lceil F\rceil$ is clear, it suffices to prove the reverse inclusion. For any $x\in \lceil F\rceil$, by condition~(CA1), there exists some $G_x\in\mathcal{F}_{\tau}$ such that $G_x\subseteq\lceil F\rceil$ and $\{x\}\subseteq \lceil G_x \rceil$. From $G_x\subseteq\lceil F\rceil$, it follows that $\lceil G_x\rceil\subseteq\lceil F\rceil$, and thus $\lceil F\rceil=\bigcup\set{\{x\}}{x\in\lceil F\rceil}\subseteq \bigcup\set{\lceil G_x\rceil}{x\in\lceil F\rceil}\subseteq \bigcup\set{\lceil G\rceil}{G\subseteq\lceil F\rceil}$. 

Conversely, suppose $F\in \mathcal{F}_{\tau}$ and there exists a directed $\set{\lceil F_i\rceil}{F_i\in\mathcal{F}_{\tau},F_i\subseteq \lceil F\rceil,i\in I}$ such that $\lceil F\rceil=\bigcup_{i\in I}\lceil F_i\rceil$. Then for any $M\sqsubseteq\lceil F\rceil=\bigcup_{i\in I}\lceil F_i\rceil$, we have some $j\in I$ such that $F_j\in\mathcal{F}_{\tau}$, $F_j\subseteq \lceil F\rceil$ and $M\subseteq \lceil F_j\rceil$.  Hence condition~(CA1) follows.
\end{proof}
\begin{definition} Let $\mathbb{P}_{\tau}=(P_o,P_a,\vDash,\mathcal{F}_{\tau})$ be an attribute continuous formal context. A \emph{continuous formal (attribute) concept} of $\mathbb{P}_{\tau}$ is a subset~$Q$ of $P_a$ which satisfies the following condition,
\begin{enumerate} [{\bf (CA2)}]
\item  $ M\sqsubseteq Q\Rightarrow (\exists F\in \mathcal{F}_{\tau}) M\subseteq\lceil F\rceil\subseteq Q$.
\end{enumerate}
\end{definition}
We denote the set of all the continuous formal concepts of
by $\mathfrak{B}(P_a,\mathcal{F}_{\tau})$. Taking $M=\emptyset$, condition~(CA2) yields that every continuous formal concept $Q$ of $\mathbb{P}_{\tau}$ is not empty.

 The following two propositions tell us that the notion of continuous formal concept is a generalization of approximable  concept as well as of F-approximable concept.

\begin{proposition}

Each conditional formal context~$(P_o,P_a,\vDash,\mathcal{F})$ can induce an attribute continuous formal context~$(P_o,P_a,\vDash,\mathcal{F}_{\tau})$, where $\mathcal{F}_{\tau}=\mathcal{F}$. Moreover, a subset~$Q\subseteq P_a$ is an F-approximable concept of $(P_o,P_a,\vDash,\mathcal{F})$ if and only if it is  a continuous formal concept of $(P_o,P_a,\vDash,\mathcal{F}_{\tau})$.
\end{proposition}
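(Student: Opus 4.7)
The plan is to exhibit a specific kernel attribute operator $\tau$ that witnesses the induction, then verify that with this choice the two notions of ``F-approximable concept'' and ``continuous formal concept'' literally become the same definitional statements (modulo the closure properties of $\alpha\circ\omega$). Given the remark after Definition~\ref{d5.5} that the identity map is always a kernel attribute operator, the natural candidate is $\tau := \mathrm{id}_{\mathcal{P}(P_a)}$. Under this choice we have $\lceil B\rceil = \tau\circ\alpha\circ\omega(B) = \alpha\circ\omega(B)$ for every $B\subseteq P_a$, so the abstract data of the attribute continuous formal context reduces to the ordinary double operator of FCA.

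Next, I would show that (FC) forces (CA1) with $\mathcal{F}_{\tau} := \mathcal{F}$. Given $F\in\mathcal{F}$ and $M\sqsubseteq\lceil F\rceil = \alpha\circ\omega(F)$, condition (FC) supplies $G\in\mathcal{F}$ with $M\subseteq G\subseteq \alpha\circ\omega(F)$. Using the extensivity of the closure operator $\alpha\circ\omega$, we have $G\subseteq\alpha\circ\omega(G)=\lceil G\rceil$, so $M\subseteq\lceil G\rceil$, and meanwhile $G\subseteq\lceil F\rceil$ as required by (CA1). Together with the nonemptiness of each $F\in\mathcal{F}$ (hence of $\lceil F\rceil$), this makes $(P_o,P_a,\vDash,\mathcal{F}_{\tau})$ a genuine attribute continuous formal context.

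For the equivalence of the two concept notions, fix $Q\subseteq P_a$. With $\lceil F\rceil=\alpha\circ\omega(F)$, condition (CA2) reads: for every $M\sqsubseteq Q$ there exists $F\in\mathcal{F}$ with $M\subseteq\alpha\circ\omega(F)\subseteq Q$. If $Q$ is F-approximable, given $M\sqsubseteq Q$, (FA1) supplies $F\in\mathcal{F}$ with $M\subseteq F\subseteq Q$, and (FA2) yields $\alpha\circ\omega(F)\subseteq Q$; extensivity gives $M\subseteq F\subseteq\alpha\circ\omega(F)\subseteq Q$, so (CA2) holds. Conversely, assume (CA2). For (FA1), take $M\sqsubseteq Q$; by (CA2) there is $F\in\mathcal{F}$ with $M\subseteq\alpha\circ\omega(F)\subseteq Q$, and then (FC) applied to $M\sqsubseteq\alpha\circ\omega(F)$ produces $G\in\mathcal{F}$ with $M\subseteq G\subseteq\alpha\circ\omega(F)\subseteq Q$, delivering (FA1). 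For (FA2), let $F\in\mathcal{F}$ with $F\subseteq Q$; since $F$ is finite, $F\sqsubseteq Q$, and (CA2) gives $G\in\mathcal{F}$ with $F\subseteq\alpha\circ\omega(G)\subseteq Q$; monotonicity and idempotence of $\alpha\circ\omega$ then force $\alpha\circ\omega(F)\subseteq\alpha\circ\omega(\alpha\circ\omega(G))=\alpha\circ\omega(G)\subseteq Q$, which is (FA2).

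The only nonroutine step is the direction (CA2) $\Rightarrow$ (FA1): (CA2) by itself only produces an $F$ with $M\subseteq\alpha\circ\omega(F)\subseteq Q$, whereas (FA1) demands $M\subseteq F\subseteq Q$. The bridge is exactly (FC), which is why we can only expect the equivalence on conditional formal contexts rather than on arbitrary formal contexts equipped with a family $\mathcal{F}$. Everything else is a direct unwinding of definitions together with the standard closure properties of $\alpha\circ\omega$.
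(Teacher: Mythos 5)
Your proposal is correct and follows essentially the same route as the paper: take $\tau=\mathrm{id}_{\mathcal{P}(P_a)}$ so that $\lceil\cdot\rceil=\alpha\circ\omega$, derive (CA1) from (FC) and the closure properties, and then unwind (FA1)/(FA2) against (CA2) in both directions, using (FC) to recover a member of $\mathcal{F}$ inside $Q$ for (FA1). Your explicit flagging of that last step as the only nonroutine one is accurate and, if anything, states the paper's implicit use of (FC) more cleanly than the paper does.
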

\begin{proof} Let $(P_o,P_a,\vDash,\mathcal{F})$ be a conditional formal context, and  $\tau$
 the identity map on ${\mathcal{P}(P_a)}$. Then $\tau$ is a kernel attribute operator on $(P_o,P_a,\vDash)$ and  $\lceil B\rceil= \tau\circ\alpha\circ\omega(B)=\alpha\circ\omega(B)$ for any $B\subseteq P_a$. Thus by condition~(FC) and the fact that $\alpha\circ \omega$ is a closure operator on $P_a$, condition~(CA1) follows. Therefore, $\mathcal{F}$ is a $\tau$-consistent selection, and hence $(P_o,P_a,\vDash,\mathcal{F}_{\tau})$ is an attribute continuous formal context.

 If $Q$ is an $\mathcal{F}$-approximable concept of~$(P_o,P_a,\vDash,\mathcal{F})$ and $M\sqsubseteq Q$. Then by condition~(FA1), there exists some $F\in \mathcal{F}$ such that $M\subseteq F\subseteq Q$. It follows from condition~(FA2) that $\alpha(\omega(F))=\lceil F\rceil\subseteq Q$, which implies that  condition~(CA2) holds. Therefore, $Q$ is a continuous formal  concept of attribute continuous formal context~$(P_o,P_a,\vDash,\mathcal{F}_{\tau})$.

 Conversely, suppose that $Q$ is a continuous formal  concept of ~$(P_o,P_a,\vDash,\mathcal{F}_{\tau})$. To finish the proof, it suffices to prove that $Q$ satisfies conditions~(FA1) and (FA2). For condition~(FA1), let $M\sqsubseteq Q$. By condition~(CA2), $M\subseteq\lceil G\rceil\subseteq Q$ for some $G\in \mathcal{F}_{\tau}$. Since $Q$ is a continuous formal  concept of ~$(P_o,P_a,\vDash,\mathcal{F}_{\tau})$, there exists some $F\in \mathcal{F}_{\tau}$ such that $M\subseteq F \subseteq \lceil G\rceil$. Consequently, $M\subseteq F\subseteq Q$. For condition~(FA2), suppose that $F\in\mathcal{F}_{\tau}$ and $F\subseteq Q$. Then by condition~(CA2), we have some $G\in \mathcal{F}_a$ with $F\subseteq \lceil G\rceil= \alpha\circ\omega(G)\subseteq Q$. Since $\alpha\circ\omega$ is a closure operator, it follows that $\alpha\circ\omega(F)\subseteq Q$.
\end{proof}

\begin{proposition}\label{p3.7}
Each  formal context~$(P_o,P_a,\vDash)$ can induce an attribute continuous formal context~$(P_o,P_a,\vDash,\mathcal{F}_{\tau})$, where $\mathcal{F}=\mathcal{F}(P_a)$. Moreover, a subset~$Q\subseteq P_a$ is an approximable  concept of $(P_o,P_a,\vDash)$ if and only if it is  a continuous formal  concept of $(P_o,P_a,\vDash,\mathcal{F}_{\tau})$.
\end{proposition}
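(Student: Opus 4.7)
The plan is to exhibit a specific kernel attribute operator that turns the powerset data of $(P_o,P_a,\vDash)$ into an attribute continuous formal context of exactly the required form, and then to check that under this operator the condition (\textbf{AC}) defining approximable concepts coincides with the condition (\textbf{CA2}) defining continuous formal concepts.

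First, I would take $\tau$ to be the identity map $\mathrm{id}_{\mathcal{P}(P_a)}$. Conditions (\textbf{A1})--(\textbf{A3}) then reduce to the trivial statements that $\alpha\circ\omega(B)\subseteq\alpha\circ\omega(B)$, that $\alpha\circ\omega$ is idempotent (which is part of it being a closure operator), and that $\alpha\circ\omega$ is monotone. Hence $\tau$ is a kernel attribute operator and $\lceil B\rceil=\alpha\circ\omega(B)$ for every $B\subseteq P_a$. With $\mathcal{F}$ equal to the family of nonempty finite subsets of $P_a$, I would verify (\textbf{CA1}) as follows: given $F\in\mathcal{F}$ and $M\sqsubseteq\lceil F\rceil=\alpha\circ\omega(F)$, if $M\neq\emptyset$ take $G:=M$, which is a nonempty finite subset of $P_a$, satisfies $G\subseteq\lceil F\rceil$ by assumption, and satisfies $M=G\subseteq\alpha\circ\omega(G)=\lceil G\rceil$ by extensivity of $\alpha\circ\omega$; if $M=\emptyset$, pick $G$ to be any singleton in $\lceil F\rceil$ (nonempty because $F$ is). Thus $\mathcal{F}$ is a $\tau$-consistent selection, and $(P_o,P_a,\vDash,\mathcal{F}_\tau)$ is an attribute continuous formal context.

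For the equivalence of the concept notions, I would argue both directions directly from the definitions, exploiting that $\lceil\cdot\rceil=\alpha\circ\omega$ under our choice of $\tau$. For the ``if $Q$ is approximable then $Q$ is continuous'' direction, given $M\sqsubseteq Q$ I take $F:=M\in\mathcal{F}$ and obtain $M\subseteq\alpha\circ\omega(M)=\lceil F\rceil\subseteq Q$ using extensivity of $\alpha\circ\omega$ and condition (\textbf{AC}); this is exactly (\textbf{CA2}). Conversely, if $Q$ satisfies (\textbf{CA2}) and $M\sqsubseteq Q$, then some $F\in\mathcal{F}$ gives $M\subseteq\alpha\circ\omega(F)\subseteq Q$, and applying $\alpha\circ\omega$ to $M\subseteq\alpha\circ\omega(F)$ and using monotonicity and idempotence yields $\alpha\circ\omega(M)\subseteq\alpha\circ\omega(F)\subseteq Q$, which is (\textbf{AC}). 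The empty-$M$ case can be covered by the same $G$-trick used above, or simply absorbed into the argument since every approximable concept arising from a formal context in the classical sense is nonempty.

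I do not expect substantive obstacles: the main work is purely bookkeeping with the definitions of closure and kernel attribute operators, and the only subtle point is the cosmetic nonemptiness clause in the definition of a $\tau$-consistent selection, which is handled by the auxiliary singleton choice above. The essential reason the proposition holds is that the identity $\tau$ makes $\lceil\cdot\rceil$ agree with the usual Galois closure $\alpha\circ\omega$, so condition (\textbf{CA2}) collapses to (\textbf{AC}) once one uses that a single generator $F:=M$ already witnesses the required approximation.
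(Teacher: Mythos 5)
Your proposal is correct and follows essentially the same route as the paper: take $\tau=\mathrm{id}_{\mathcal{P}(P_a)}$ so that $\lceil\cdot\rceil=\alpha\circ\omega$, and use $F:=M$ as the witness in both directions of the equivalence between (AC) and (CA2). You are merely more explicit than the paper in verifying (A1)--(A3) and (CA1) and in flagging the empty-set/nonemptiness bookkeeping, which the paper dismisses with ``it is easy to see.''
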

\begin{proof}
Suppose that $(P_o,P_a,\vDash)$ is a formal context. Let $\tau$ be
 the identity map on ${\mathcal{P}(P_a)}$ and $\mathcal{F}=\mathcal{F}(P_a)$. It is easy to see that $\tau$ is a kernel attribute operator on $(P_o,P_a,\vDash)$ and  $\mathcal{F}$ is a $\tau$-consistent selection. Then $(P_o,P_a,\vDash,\mathcal{F}_{\tau})$ is an attribute continuous formal context, which is  said to be  \emph{the induced attribute continuous formal context by $(P_o,P_a,\vDash)$}.

 Let $Q$  be an approximable  concept of $(P_o,P_a,\vDash)$. For any $M\sqsubseteq Q$, since $\alpha\circ\omega$ is a closure operator on $P_a$ and $\tau$ is the identity map on $\mathcal{P}(P_a)$, we have
 $M\subseteq \alpha(\omega(M))=\tau(\alpha(\omega(M)))=\lceil M\rceil\subseteq Q.$
 Note that $M\in \mathcal{F}_{\tau}$, it follows that condition~(CA2) holds, and then  $Q$ is  a continuous formal  concept of $(P_o,P_a,\vDash,\mathcal{F}_{\tau})$. Conversely, suppose that  $Q$ is  a continuous formal concept of $(P_o,P_a,\vDash,\mathcal{F}_{\tau})$. Then for any $M\sqsubseteq Q$, there exists some $F\in\mathcal{F}_{\tau}$ such that $M\subseteq \lceil F\rceil=\alpha(\omega(F))\subseteq Q$, which implies that $\alpha(\omega(M))\subseteq Q$. Therefore, condition~(AC) follows and $Q$  is an approximable  concept of $(P_o,P_a,\vDash)$.
\end{proof}

\begin{remark}\label{r3.8}
Consider a formal context~$(P_o,P_a,\vDash)$ in which  $P_a$ is finite. Then formal concepts and approximable  concepts coincide. So a subset $Q$ of  $P_a$ is a formal  concept of $(P_o,P_a,\vDash)$ if and only if it is  a continuous formal  concept of $(P_o,P_a,\vDash,\mathcal{F}_{\tau})$, where $(P_o,P_a,\vDash,\mathcal{F}_{\tau})$ is the induced attribute continuous formal context by $(P_o,P_a,\vDash)$.
\end{remark}

 The finiteness of $P_a$ is necessary for Remark~\ref{r3.8}.

\begin{example}
Consider the complete lattice~$(L_1,\leq)$ in  Figure~1, where the least upper bound of the set $\{a_1,a_2,a_3,\cdots\}$ is $\top$. Then~$(L_1,L_1,\geq_1)$ is a formal context. 
 Let $(L_1,L_1,\geq_1,\mathcal{F}_{\tau_1})$ be the induced attribute continuous formal context by $(L_1,L_1,\geq)$. It is not difficult to see that the set~$\{a_1,a_2,a_3,\cdots\}\cup\{\bot\}$ is a continuous formal concept of $(L_1,L_1,\geq_1,\mathcal{F}_{\tau_1})$ but not a formal concept of $(L_1,L_1,\geq_1)$. And the concept lattice $(\mathfrak{B}(L_1),\subseteq)$
 is isomorphic to $(L_1,\leq)$ but not isomorphic to $(\mathfrak{B}(L_1,\mathcal{F}_{\tau_1}),\subseteq)$.  For  formal context~$(L_2,L_2,\geq_2)$, the set~$\{a_1',a_2',a_3',\cdots\}\cup\{\bot'\}$ is a continuous formal concept of $(L_2,L_2,\geq_2,\mathcal{F}_{\tau_2})$  but not a formal concept of $(L,L,\geq)$.

In addition, as a consequence of Theorem~\ref{t3.14},  there is no
 attribute continuous formal context~$(P_o,P_a,\vDash,\mathcal{F}_{\tau})$ such that
 $(\mathfrak{B}(P_a,\mathcal{F}_{\tau}),\subseteq)$ is isomorphic to $(L_1,\leq_1)$,
 but there is an attribute continuous formal context~$(P_o,P_a,\vDash,\mathcal{F}_{\tau})$ such that
 $(\mathfrak{B}(P_a,\mathcal{F}_{\tau}),\subseteq)$ is isomorphic to $(L_2,\leq_2)$,
 as well as the concept lattice $(\mathfrak{B}(L_2),\subseteq)$. 
 \begin{figure}[htbp]
\centering
\begin{tikzpicture}[scale=1.15]
 \path (-3.2,2) node[left] {$\top$} coordinate (-3.2,2);
\fill (-3.2,2) circle (1pt);
   \path (-3.2,0.8) node[left] {$a_{3}$} coordinate (a_{3});
\fill (a_{3})circle (1pt);
\path (-3.2,0.2) node[left] {$a_{2}$} coordinate (a_{2});
 \fill (a_{2})circle (1pt);
  \path (-3.2,-0.4) node[left] {$a_{1}$} coordinate (a_{1});
\fill (a_{1}) circle (1pt);
    \path (-1.5,1.5) node[right] {$b$} coordinate (b);
\fill (b) circle (1pt);
    \path (-3.2,-1) node[below] {$\bot$} coordinate (-3.2,-1);
\fill (-3.2,-1) circle (1pt);
\path (-2,-1) node[below] {$(L_1,\leq_1)$} coordinate (-2,-1);
 \draw (a_{1}) -- (-3.2,-1);
  \draw (a_{2}) -- (a_{1});
  \draw (a_{3}) -- (a_{2});
   \draw (b) -- (-3.2,-1);
    \draw (b) -- (-3.2,2);
  \draw [dotted][thick](a_{3}) -- (-3.2,2);

\path (3.2,2) node[left] {$\top{'}$} coordinate (-3.2,2);
\fill (3.2,2) circle (1pt);
\path (3.2,1.5) node[left] {$\top'_{1}$} coordinate (3.2,1.5);
\fill (3.2,1.5) circle (1pt);
   \path (3.2,0.8) node[left] {$a'_{3}$} coordinate (a_{3});
\fill (a_{3})circle (1pt);
\path (3.2,0.2) node[left] {$a'_{2}$} coordinate (a_{2});
 \fill (a_{2})circle (1pt);
  \path (3.2,-0.4) node[left] {$a'_{1}$} coordinate (a_{1});
\fill (a_{1}) circle (1pt);
    \path (4.9,1.5) node[right] {$b{'}$} coordinate (b);
\fill (b) circle (1pt);
    \path (3.2,-1) node[below] {$\bot'$} coordinate (3.2,-1);
\fill (3.2,-1) circle (1pt);
\path (4.5,-1) node[below] {$(L_2,\leq_2)$} coordinate (4.5,-1);
 \draw (a_{1}) -- (3.2,-1);
  \draw (a_{2}) -- (a_{1});
  \draw (a_{3}) -- (a_{2});
   \draw (b) -- (3.2,-1);
    \draw (b) -- (3.2,2);
    \draw (3.2,2) -- (3.2,1.5);
  \draw [dotted][thick](a_{3}) -- (3.2,1.5);
    \end{tikzpicture}
\caption{}
\label{fig 1}
\end{figure}
\end{example}
\subsection{Representation theorem of continuous domains}
In this subsection, we provide a new approach to representing continuous domains in term of FCA. Let us start by showing there are enough many  continuous formal concepts of~$\mathbb{P}_{\tau}$.
\begin{proposition}\label{p3.10}
Let $\mathbb{P}_{\tau}=(P_o,P_a,\vDash,\mathcal{F}_{\tau})$ be an attribute continuous formal context.  Then $\lceil F\rceil$ is a continuous formal concept of~$\mathbb{P}_{\tau}$ for any $F\in \mathcal{F}_{\tau}$.
\end{proposition}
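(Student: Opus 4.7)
The plan is to verify condition (CA2) directly for the candidate concept $\lceil F\rceil$, where $F$ is an arbitrary element of $\mathcal{F}_{\tau}$. Unpacking the definition, I need to show that for every finite subset $M\sqsubseteq\lceil F\rceil$ there exists some $G\in\mathcal{F}_{\tau}$ with $M\subseteq\lceil G\rceil\subseteq\lceil F\rceil$. So I would fix an arbitrary $F\in\mathcal{F}_{\tau}$ and an arbitrary $M\sqsubseteq\lceil F\rceil$ at the outset.

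The first (and essentially only) real step is to invoke condition (CA1) of Definition~\ref{d3.2} applied to $F$ and $M$. This immediately yields some $G\in\mathcal{F}_{\tau}$ such that $M\subseteq\lceil G\rceil$ and $G\subseteq\lceil F\rceil$. The first inclusion already gives half of what (CA2) demands. What remains is to upgrade $G\subseteq\lceil F\rceil$ to $\lceil G\rceil\subseteq\lceil F\rceil$.

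That upgrade is exactly the content of the implication~(\ref{e3.1}) established just after Definition~\ref{d5.5}, which was derived from the idempotence/monotonicity of $\lceil\cdot\rceil$. Applying it to $B_1=G$ and $B=F$, the hypothesis $G\subseteq\lceil F\rceil$ yields $\lceil G\rceil\subseteq\lceil F\rceil$. Chaining this with $M\subseteq\lceil G\rceil$ produces $M\subseteq\lceil G\rceil\subseteq\lceil F\rceil$, which is precisely condition~(CA2) for $Q=\lceil F\rceil$ with witness $G$. There is no real obstacle here; the proposition is essentially a repackaging of (CA1) using the idempotence of $\lceil\cdot\rceil$, and the only mild point worth stating explicitly is the appeal to~(\ref{e3.1}), so that readers see why $G\subseteq\lceil F\rceil$ suffices rather than needing $G\subseteq F$.
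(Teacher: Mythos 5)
Your proof is correct and follows exactly the route the paper intends: the paper's own proof is just the remark that $\lceil F\rceil$ ``easily'' satisfies (CA2), and the intended verification is precisely your combination of condition (CA1) with the implication~(\ref{e3.1}) to upgrade $G\subseteq\lceil F\rceil$ to $\lceil G\rceil\subseteq\lceil F\rceil$. You have simply supplied the details the paper leaves to the reader.
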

\begin{proof}
 It is easy to see that $\lceil F\rceil$ satisfies condition~(CA2). 
\end{proof}

 The following proposition is frequently used later.
\begin{proposition}\label{p3.11}
Let $Q$ be a continuous formal concept of an attribute continuous formal context~$\mathbb{P}_{\tau}$.
\begin{enumerate}[(1)]
\item Then for any $M\sqsubseteq Q$, $\lceil M\rceil\subseteq Q$, and
\item there exists some $F\in \mathcal{F}_{\tau}$ such that $F\subseteq Q$ and $M\subseteq \lceil F\rceil\subseteq Q$.
\end{enumerate}
\end{proposition}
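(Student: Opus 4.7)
The plan is to prove both parts as direct consequences of condition~(CA2) used to supply an initial member of~$\mathcal{F}_\tau$, combined with the monotonicity/idempotence property~(3.1) of the operator~$\lceil\cdot\rceil$, and with~(CA1) used as a ``refinement'' step for part~(2). I expect no genuine obstacle here; each conclusion follows by chasing the definitions once one has the right $F\in\mathcal{F}_\tau$ in hand.

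For part~(1), I would start from an arbitrary $M\sqsubseteq Q$ and apply condition~(CA2) to obtain some $F\in\mathcal{F}_\tau$ with $M\subseteq\lceil F\rceil\subseteq Q$. Since $M\subseteq\lceil F\rceil$, the implication in~(3.1) (established right after Definition~\ref{d5.5} from the idempotence of~$\lceil\cdot\rceil$) yields $\lceil M\rceil\subseteq\lceil F\rceil\subseteq Q$, which is exactly what part~(1) demands.

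For part~(2), the issue is that the $F$ delivered by~(CA2) only satisfies $\lceil F\rceil\subseteq Q$, not necessarily $F\subseteq Q$. To remedy this, I would first invoke~(CA2) to pick $F_0\in\mathcal{F}_\tau$ with $M\subseteq\lceil F_0\rceil\subseteq Q$, and then apply condition~(CA1) to the pair $(F_0,M)$: since $M\sqsubseteq\lceil F_0\rceil$, there exists $F\in\mathcal{F}_\tau$ with $M\subseteq\lceil F\rceil$ and $F\subseteq\lceil F_0\rceil$. The inclusion $F\subseteq\lceil F_0\rceil\subseteq Q$ gives $F\subseteq Q$, and a second appeal to~(3.1) (now with $F\subseteq\lceil F_0\rceil$) yields $\lceil F\rceil\subseteq\lceil F_0\rceil\subseteq Q$. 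Thus $F\subseteq Q$ and $M\subseteq\lceil F\rceil\subseteq Q$, completing part~(2).

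The whole argument is essentially a two-step unfolding: (CA2) produces an approximant whose ``closure'' sits inside $Q$, and (CA1) refines this approximant to one whose underlying set also sits inside $Q$; property~(3.1) glues the two steps together. There is no case analysis or delicate estimate, and the proof as actually given in the paper should be very short.
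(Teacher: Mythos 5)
Your proposal is correct and follows essentially the same route as the paper: part~(1) is (CA2) followed by the monotonicity/idempotence property~(\ref{e3.1}), and part~(2) refines the set obtained from (CA2) via (CA1), exactly as in the paper's proof (where ``(AC1)'' is evidently a typo for (CA1)). No gaps.
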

\begin{proof}
 Suppose that $Q$ is a continuous formal concept of~$\mathbb{P}_{\tau}$ with $M\sqsubseteq Q$. By condition~(CA2), $M\subseteq\lceil G\rceil\subseteq Q$ for some $G\in \mathcal{F}_{\tau}$. Then $\lceil M\rceil\subseteq\lceil G\rceil\subseteq Q$, and hence part~(1) follows.
  For part~(2), using condition~(AC1) to the above $G\in \mathcal{F}_{\tau}$ with $M\subseteq\lceil G\rceil$, there exists some $F\in \mathcal{F}_{\tau}$ such that $M\subseteq \lceil F\rceil$ and $F\subseteq\lceil G\rceil$. As consequence, $F\subseteq Q$ and $M\subseteq \lceil F\rceil\subseteq Q$.
\end{proof}

\begin{proposition}\label{p3.12}
Let $\mathbb{P}_{\tau}=(P_o,P_a,\vDash,\mathcal{F}_{\tau})$ be an attribute continuous formal context and $Q$ a subset of $P_a$. Then the following statements are equivalent:
\begin{enumerate} [(1)]
\item $Q$ is a continuous formal concept of~$\mathbb{P}_{\tau}$.
\item The set $\set{\lceil F\rceil}{F\in \mathcal{F}_{\tau}, F\subseteq Q}$ is directed and  $Q=\bigcup\set{\lceil F\rceil}{F\in \mathcal{F}_{\tau},  F\subseteq Q}$.
\end{enumerate}
\end{proposition}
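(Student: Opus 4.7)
The plan is to show (1) $\Rightarrow$ (2) by directly unpacking the definition of continuous formal concept with the aid of Proposition~\ref{p3.11}, and to show (2) $\Rightarrow$ (1) by extracting a single $F\in\mathcal{F}_\tau$ from the directed union that witnesses condition~(CA2) for any given finite $M\sqsubseteq Q$.

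For the direction (1) $\Rightarrow$ (2), assume $Q$ is a continuous formal concept. To see that $Q=\bigcup\set{\lceil F\rceil}{F\in\mathcal{F}_\tau,F\subseteq Q}$, the inclusion $\supseteq$ follows because, for any such $F$, Proposition~\ref{p3.11}(1) (applied to $F\sqsubseteq Q$) yields $\lceil F\rceil\subseteq Q$; the inclusion $\subseteq$ follows by applying Proposition~\ref{p3.11}(2) to the singleton $\{x\}\sqsubseteq Q$ for each $x\in Q$, which produces some $F\in\mathcal{F}_\tau$ with $F\subseteq Q$ and $x\in\lceil F\rceil$. To check directedness, take $F_1,F_2\in\mathcal{F}_\tau$ with $F_i\subseteq Q$. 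Then $F_1\cup F_2\sqsubseteq Q$, so Proposition~\ref{p3.11}(2) supplies some $F_3\in\mathcal{F}_\tau$ with $F_3\subseteq Q$ and $F_1\cup F_2\subseteq \lceil F_3\rceil$. Inclusions~(\ref{e3.1}) then give $\lceil F_1\rceil\cup\lceil F_2\rceil\subseteq\lceil F_3\rceil$, completing directedness.

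For (2) $\Rightarrow$ (1), assume the displayed set is directed with union $Q$; we verify condition~(CA2). Given $M\sqsubseteq Q$, for each $x\in M$ there is some $F_x\in\mathcal{F}_\tau$ with $F_x\subseteq Q$ and $x\in\lceil F_x\rceil$. Since $M$ is finite and the family $\set{\lceil F\rceil}{F\in\mathcal{F}_\tau,F\subseteq Q}$ is directed, a standard finite upward iteration produces a single $F\in\mathcal{F}_\tau$ with $F\subseteq Q$ and $\lceil F_x\rceil\subseteq\lceil F\rceil$ for every $x\in M$; consequently $M\subseteq\lceil F\rceil$, and $\lceil F\rceil\subseteq Q$ because $\lceil F\rceil$ is one of the sets in the union equal to $Q$. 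For the boundary case $M=\emptyset$, any $F\in\mathcal{F}_\tau$ with $F\subseteq Q$ works, and such an $F$ exists since a directed family is nonempty.

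I do not expect a genuine obstacle here; the argument is essentially routine given Proposition~\ref{p3.11} and~(\ref{e3.1}). The only point requiring a little care is the induction step producing a common upper bound $F\in\mathcal{F}_\tau$ for finitely many $F_x$'s in the directed family, and the observation that condition~(CA2) must be explicitly checked for $M=\emptyset$, which is where the nonemptiness hidden in the definition of directed sets is used.
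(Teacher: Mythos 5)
Your proof is correct and follows essentially the same route as the paper's: both directions rest on Proposition~\ref{p3.11} together with the monotonicity/idempotency property~(\ref{e3.1}), with directedness obtained from the finite union $F_1\cup F_2\sqsubseteq Q$. Your version merely spells out two details the paper leaves implicit in the converse direction (the finite iteration inside the directed family to get a single $\lceil F\rceil$ covering $M$, and the $M=\emptyset$ case), which is fine.
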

\begin{proof}

(1) implies (2): Obviously, $\set{\lceil F\rceil}{F\in \mathcal{F}_{\tau}, F\subseteq Q}\neq \emptyset$. Let $F_1,F_2\in \mathcal{F}_{\tau}$ and $F_1,F_2\subseteq Q$. Then $F_1\cup F_2\sqsubseteq Q$. By part~(2) of Proposition~\ref{p3.11}, there exists some $F_3\in \mathcal{F}_{\tau}$ such that $F_3\subseteq Q$ and $F_1\cup F_2\subseteq \lceil F_3\rceil\subseteq Q$. Because $F_1\cup F_2\subseteq \lceil F_3\rceil$, it follows that $\lceil F_1\rceil\subseteq \lceil F_3\rceil$ and $\lceil F_2\rceil\subseteq \lceil F_3\rceil$.  This yields that the set $\set{\lceil F\rceil}{F\in \mathcal{F}_{\tau}, F\subseteq Q}$ is directed. According to part~(1) of Proposition~\ref{p3.11},  it is obvious that $\bigcup\set{\lceil F\rceil}{F\in \mathcal{F}_{\tau}, F\subseteq Q}\subseteq Q$. For the reverse inclusion, let $x\in Q$. Then by part~(2) of Proposition~\ref{p3.11}, we get some $F\in \mathcal{F}_{\tau}$ such that $F\subseteq Q$ and $\{x\}\subseteq\lceil F\rceil\subseteq Q$. Hence $Q\subseteq\bigcup\set{\lceil F\rceil}{F\in \mathcal{F}_{\tau}, F\subseteq Q}$.

(2) implies (1): Suppose that $M\sqsubseteq Q$.  Since the set $\set{\lceil F\rceil}{F\in \mathcal{F}_{\tau}, F\subseteq Q}$ is directed and $Q=\bigcup\set{\lceil F\rceil}{F\in \mathcal{F}_{\tau}, F\subseteq Q}$, there exists some $F\in \mathcal{F}_{\tau}$ such that $F\subseteq Q$ and $M\subseteq \lceil F\rceil\subseteq Q$. By condition~(CA1), we have $\lceil F\rceil\subseteq Q$. This implies that condition~(CA2) holds, and thus $Q$ is a continuous formal concept of~$\mathbb{P}_{\tau}$.
\end{proof}

Before stating the main result in this section, we
give a lemma which is of interest
in its own right.
\begin{lemma}\label{l3.13}
 Let $\mathbb{P}_{\tau}=(P_o,P_a,\vDash,\mathcal{F}_{\tau})$ be an attribute continuous formal context.
\begin{enumerate}[(1)]
\item For any directed subset~$\set{Q_i}{i\in I}$ of $\mathfrak{B}(P_a,\mathcal{F}_{\tau})$, the least upper bound $\bigvee_{i\in I}Q_i$ exists in $\mathfrak{B}(P_a,\mathcal{F}_{\tau})$, and $\bigvee_{i\in I}Q_i=\bigcup_{i\in I}Q_i$.
\item For all
 continuous formal concepts $Q_1$ and $Q_2$ of~$\mathbb{P}_{\tau}$, we have
\begin{equation}\label{e3.2}
Q_1\ll Q_2 \Leftrightarrow (\exists F\in\mathcal{F}_{\tau})(Q_1\subseteq \lceil F\rceil, F\subseteq Q_2).
\end{equation}
\item For any $F\in \mathcal{F}_{\tau}$,
\begin{equation}\label{e3.21}
\lceil F\rceil\ll \lceil F\rceil \Leftrightarrow (\exists G\in\mathcal{F}_{\tau})(\lceil F\rceil=\lceil G\rceil,G\subseteq \lceil G\rceil).
\end{equation}
\end{enumerate}
\end{lemma}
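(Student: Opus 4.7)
The plan is to prove the three parts in order, using Propositions~\ref{p3.10}, \ref{p3.11} and \ref{p3.12} (which express each continuous formal concept as a directed union of concepts of the form $\lceil F\rceil$) as the main machinery. Each part reduces essentially to exploiting the finiteness argument built into condition~(CA2).

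For part~(1), I would verify directly that $Q:=\bigcup_{i\in I}Q_i$ satisfies condition~(CA2). Given $M\sqsubseteq Q$, since $M$ is finite and the family $\{Q_i\}_{i\in I}$ is directed, some single $Q_j$ must contain $M$; then condition~(CA2) applied to $Q_j$ produces $F\in\mathcal{F}_\tau$ with $M\subseteq\lceil F\rceil\subseteq Q_j\subseteq Q$. Hence $Q\in\mathfrak{B}(P_a,\mathcal{F}_\tau)$, and since $Q$ is obviously an upper bound contained in every other upper bound, it is the least upper bound.

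For part~(2), the $(\Leftarrow)$ direction goes by taking any directed family $\{R_j\}_{j\in J}\subseteq\mathfrak{B}(P_a,\mathcal{F}_\tau)$ with $Q_2\subseteq\bigvee_j R_j=\bigcup_j R_j$ (using part~(1)): since $F$ is finite and the union is directed, some $R_k$ contains $F$, and then Proposition~\ref{p3.11}(1) gives $\lceil F\rceil\subseteq R_k$, so $Q_1\subseteq\lceil F\rceil\subseteq R_k$, i.e.\ $Q_1\ll Q_2$. The $(\Rightarrow)$ direction is the crux, and this is where I expect to lean hardest on Proposition~\ref{p3.12}: it represents $Q_2$ as the directed union $\bigcup\{\lceil F\rceil : F\in\mathcal{F}_\tau,\, F\subseteq Q_2\}$, whose members belong to $\mathfrak{B}(P_a,\mathcal{F}_\tau)$ by Proposition~\ref{p3.10}, so the hypothesis $Q_1\ll Q_2$ directly yields some $F\in\mathcal{F}_\tau$ with $F\subseteq Q_2$ and $Q_1\subseteq\lceil F\rceil$.

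For part~(3), the $(\Leftarrow)$ direction is immediate from part~(2) applied with $Q_1=Q_2=\lceil F\rceil$: the given $G$ satisfies $\lceil F\rceil\subseteq\lceil G\rceil=\lceil F\rceil$ and $G\subseteq\lceil G\rceil=\lceil F\rceil$. For $(\Rightarrow)$, applying part~(2) to $\lceil F\rceil\ll\lceil F\rceil$ produces some $G\in\mathcal{F}_\tau$ with $\lceil F\rceil\subseteq\lceil G\rceil$ and $G\subseteq\lceil F\rceil$; but $G\subseteq\lceil F\rceil$ together with the monotonicity/idempotency consequence~\eqref{e3.1} forces $\lceil G\rceil\subseteq\lceil F\rceil$, so $\lceil F\rceil=\lceil G\rceil$, and hence $G\subseteq\lceil F\rceil=\lceil G\rceil$, as required. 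The main subtlety throughout is being careful that condition~(CA1) is used (via Proposition~\ref{p3.12}) to guarantee that the candidate $F$ in the forward direction of part~(2) actually lies in $\mathcal{F}_\tau$, rather than merely being some finite subset of $Q_2$.
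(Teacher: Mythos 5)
Your proof is correct and follows essentially the same route as the paper's: part~(1) by a direct (CA2) check using finiteness of $M$ and directedness of $\{Q_i\}$, part~(2) via the directed-union representation of Proposition~\ref{p3.12} for the forward direction and Proposition~\ref{p3.11}(1) for the converse, and part~(3) by specializing part~(2) to $Q_1=Q_2=\lceil F\rceil$. The only difference is that you spell out the deduction $\lceil F\rceil=\lceil G\rceil$ via equation~(\ref{e3.1}) in part~(3), which the paper leaves implicit as ``clear from part~(2).''
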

\begin{proof}
(1) It suffices to show that $\bigcup_{i\in I}Q_i$ is a continuous formal concept of $\mathbb{P}_{\tau}$.  Let $Q=\bigcup_{i\in I}Q_i$ and $M\sqsubseteq Q$. Then there exists some $j\in I$ such that $M\sqsubseteq Q_j$. Since $Q_j$ is a  continuous formal concept of $\mathbb{P}_{\tau}$, by condition~(CA2), there exists some $F\in \mathcal{F}_{\tau}$ with $M\subseteq \lceil F\rceil\subseteq Q_j\subseteq Q$. Thus, $Q$ is a continuous formal concept.

(2)
``$\Rightarrow $'', let $Q_1\ll Q_2 $. From  part~(2) of Proposition~\ref{p3.12},  we know that the set $\set{\lceil F\rceil}{F\in \mathcal{F}_{\tau}, F\subseteq Q_2}$ is directed and  $Q_2=\bigcup\set{\lceil F\rceil}{F\in \mathcal{F}_{\tau}, F\subseteq Q_2}$. Then by the definition of way-below relation, $Q_1\subseteq \lceil F\rceil$ for some $F\in \mathcal{F}_{\tau}$ with $F\subseteq Q_2$.

``$\Leftarrow $'', let $Q_2\subseteq \bigcup_{i\in I}U_i$, where $\set{U_i}{i\in I}$ is a directed subset of $\mathfrak{B}(P_a,\mathcal{F}_{\tau})$. Then there exists some $F\in \mathcal{F}_{\tau}$ such that $F\sqsubseteq Q_2\subseteq \bigcup_{i\in I}U_i$ and $Q_1\subseteq \lceil F\rceil$. Since$\set{U_i}{i\in I}$ is directed, there exists some $j\in I$ with $F\subseteq U_j$. Thus $Q_1\subseteq\lceil F\rceil\subseteq U_j$, which implies  $Q_1\ll Q_2 $.

(3) Let $F\in \mathcal{F}_{\tau}$ and
$\lceil F\rceil\ll \lceil F\rceil$. By Proposition~\ref{p3.10},  $\lceil F\rceil$ is a continuous formal concept.  It is clear  from part~(2) that part~(3) holds.
\end{proof}
\begin{remark}\label{r314}
Consider an element $F\in\mathcal{F}_{\tau}$. If $F\subseteq \lceil F\rceil$, then by Lemma~\ref{l3.13}, it is easy to see that $\lceil F\rceil\ll \lceil F\rceil$. Conversely, under the condition of $\lceil F\rceil\ll \lceil F\rceil$, we do not know whether $F$ is a subset of  $\lceil F\rceil$, but we have an element $G\in\mathcal{F}_{\tau}$ such that $\lceil F\rceil=\lceil G\rceil$ and $G\subseteq\lceil G\rceil$.
\end{remark}
\begin{theorem}[Representation theorem]\label{t3.14}
 Let $\mathbb{P}_{\tau}=(P_o,P_a,\vDash,\mathcal{F}_{\tau})$ be an attribute continuous formal context. Then $\mathfrak{B}(P_a,\mathcal{F}_{\tau})$ ordered by set inclusion forms a continuous domain.

 Conversely, for every continuous domain~$(D,\leq)$, there exists some attribute continuous formal context~$\mathbb{P}_{\tau}$ such that $(D,\leq)$ is order isomorphic to $(\mathfrak{B}(P_a,\mathcal{F}_{\tau}),\subseteq)$.
\end{theorem}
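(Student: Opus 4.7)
The plan has two parts, corresponding to the two halves of the statement.

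For the forward direction, Lemma~\ref{l3.13}(1) already shows $(\mathfrak{B}(P_a,\mathcal{F}_\tau),\subseteq)$ is a dcpo with directed joins given by unions, so I only need a basis. I claim $\mathcal{B}:=\{\lceil F\rceil:F\in\mathcal{F}_\tau\}$, which lies in $\mathfrak{B}(P_a,\mathcal{F}_\tau)$ by Proposition~\ref{p3.10}, does the job. Fix $Q\in\mathfrak{B}(P_a,\mathcal{F}_\tau)$. Proposition~\ref{p3.12} supplies a directed subfamily $\{\lceil F\rceil:F\in\mathcal{F}_\tau,\,F\subseteq Q\}$ with union $Q$, so the only remaining ingredient is that each such $\lceil F\rceil$ lies way below $Q$. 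For this I apply Proposition~\ref{p3.11}(2) with the finite set $F$ itself playing the role of $M$: this produces $F'\in\mathcal{F}_\tau$ with $F'\subseteq Q$ and $F\subseteq\lceil F'\rceil$, whence $\lceil F\rceil\subseteq\lceil F'\rceil$ by the monotonicity-idempotence of $\lceil\cdot\rceil$, and Lemma~\ref{l3.13}(2) (with witness $F'$) then delivers $\lceil F\rceil\ll Q$. A directed family of way-below approximations with supremum $Q$ makes $\mathcal{B}$ a basis by the usual domain-theoretic routine.

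For the converse, given a continuous domain $(D,\leq)$, I would construct $\mathbb{P}_\tau=(D,D,\geq,\mathcal{F}_\tau)$ by setting $\tau(B):=\dda B=\{x\in D:\exists y\in B,\,x\ll y\}$ and $\mathcal{F}_\tau:=\{\{y\}:y\in D\}$. To see $\tau$ is a kernel attribute operator: $\alpha(\omega(B))$ is a lower set and $\ll$ refines $\leq$, giving contractiveness; transitivity of $\ll$ together with the interpolation property yields $\dda\dda S=\dda S$ for arbitrary $S$, hence idempotence; monotonicity is immediate. Since $\alpha(\omega(\{y\}))=\da y$, one computes $\lceil\{y\}\rceil=\dda y$, so condition (CA1) for $F=\{y\}$ reduces exactly to interpolation: given a finite $M\ll y$, pick $z$ with $M\ll z\ll y$ and take $G:=\{z\}$.

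The isomorphism $\phi:D\to\mathfrak{B}(P_a,\mathcal{F}_\tau)$ is then defined by $\phi(d):=\dda d$. Well-definedness under (CA2) is once more interpolation, and injectivity uses the identity $d=\sup\dda d$ from continuity. For surjectivity, let $Q$ be any continuous formal concept. Proposition~\ref{p3.12} writes $Q=\bigcup_{y\in Q}\dda y$ as a directed union, and the equivalence $\dda y_1\subseteq\dda y_2\iff y_1\leq y_2$ (valid in continuous domains) translates the directedness into directedness of $Q\subseteq D$; hence $d:=\sup Q$ exists in $D$. The inclusion $Q\subseteq\dda d$ follows from $x\in Q\Rightarrow x\ll y\leq d$ for some $y\in Q$, and $\dda d\subseteq Q$ combines the rounded property $y\in Q\Rightarrow\exists y'\in Q,\,y\ll y'$ (a further application of Proposition~\ref{p3.11}(2)) with the approximation $x\ll\sup Q\Rightarrow x\leq y$ for some $y\in Q$, chained as $x\leq y\ll y'$ to give $x\in\dda y'\subseteq Q$. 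Order preservation and reflection are exactly $d\leq d'\iff\dda d\subseteq\dda d'$.

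The main obstacle is the design of the converse construction. With the default $\tau=\id$ one would have $\lceil\{y\}\rceil=\da y$, so continuous formal concepts would correspond to principal ideals---enough for bounded-complete or algebraic targets but incapable of representing continuous domains lacking binary joins. Replacing $\tau$ by $\dda$ shifts $\lceil\{y\}\rceil$ to the $\dda$-neighbourhood $\dda y$, so the continuous formal concepts coincide with the rounded ideals of $(D,\ll)$, and the bijection $d\leftrightarrow\dda d$ falls out from standard continuous-domain facts.
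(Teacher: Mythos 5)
Your argument is correct and follows essentially the same route as the paper: the forward half rests on Lemma~\ref{l3.13} and Proposition~\ref{p3.12} exactly as in the text (your detour through Proposition~\ref{p3.11}(2) is unnecessary, since $\lceil F\rceil\subseteq\lceil F\rceil$ and $F\subseteq Q$ already witness $\lceil F\rceil\ll Q$ in Lemma~\ref{l3.13}(2)), and the converse builds the context on $(D,\geq)$ with $\tau=\dda(\cdot)$ and the mutually inverse maps $d\mapsto\dda d$, $Q\mapsto\bigvee Q$. The only deviation is that you take $P_a=D$ and $\mathcal{F}_\tau$ to consist of singletons rather than the paper's $\mathcal{F}=\set{F\sqsubseteq B_D}{\bigvee F\in F}$ over a basis $B_D$; this is a harmless simplification, since the paper's own verifications of (CA1) and of (R1)--(R2) already only ever use singleton witnesses.
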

\begin{proof}

We first show that~$(\mathfrak{B}(P_a,\mathcal{F}_{\tau}),\subseteq)$ is a continuous domain. With part (1) of Lemma~\ref{l3.13}, $(\mathfrak{B}(P_a,\mathcal{F}_{\tau}),\subseteq)$ is clearly a dcpo. So that it
suffices to show that $(\mathfrak{B}(P_a,\mathcal{F}_a),\subseteq)$ has a basis. For any $Q\in \mathfrak{B}(P_a,\mathcal{F}_{\tau})$,  let $F\in \mathcal{F}_{\tau}$ with $F\subseteq Q$. Then by equation~(\ref{e3.2}), we have $\lceil F\rceil\ll Q$ in  $(\mathfrak{B}(P_a,\mathcal{F}_{\tau}),\subseteq)$. Proposition~\ref{p3.12} has proven that the set $\set{\lceil F\rceil}{F\in \mathcal{F}_{\tau}, F\subseteq Q}$ is directed and  $Q$ is its union. As a result, $\set{\lceil F\rceil}{F\in  \mathcal{F}_{\tau}}$ is a basis.

For the second part,
suppose that~$(D,\leq)$ is a continuous domain with a basis~$B_D$. Define the formal context $(P_o,P_a,\vDash)$, where $P_o=D$, $P_a=B_D$ and
$$x\vDash b\Leftrightarrow b\leq x.$$
Then by the definitions of $\omega$ and $\alpha$ introduced in section 2.2, for any $X\subseteq P_a$,  we have
\begin{align*}
\omega(X)&=\set{d\in D}{\forall x\in X, d\vDash x} \\
&=\set{d\in D}{\forall x\in X, x\leq d},
\end{align*}
and for any $Y\subseteq P_o$,
\begin{align*}
\alpha(Y)&=\set{b\in B_D}{\forall y\in Y, y\vDash b} \\
&=\set{b\in B_D}{\forall y\in Y, b\in \da y\cap B_D} \\
&=\bigcap\set{\da y\cap B_D}{ y\in Y}.
\end{align*}

For any $X\subseteq P_a$, define
\begin{equation}\label{e3.3}
\tau(X)=\dda X\cap B_D.
\end{equation}
Then
$\alpha(\omega(X))=\bigcap\set{\da y\cap B_D}{ y\in \set{d\in D}{\forall x\in X, x\leq d}}.$
Since $\dda X\subseteq \da X$, it is routine to check that the operator $\tau$ defined by equation~(\ref{e3.3}) is a kernel attribute operator on~$(P_o,P_a,\vDash)$. Let
\begin{equation}\label{e3.4}
\mathcal{F}=\set{F\sqsubseteq P_a}{\bigvee F\in F}.
\end{equation}
Then for any $F\in\mathcal{F}$, we have
\begin{equation}\label{e3.5}
\lceil F\rceil=B_D\cap\dda (\bigvee F).
\end{equation}

We now show that $\mathcal{F}$ is a $\tau$-consistent selection, and hence~$(P_o,P_a,\vDash,\mathcal{F}_{\tau})$ is an attribute continuous formal context. For any $F\in\mathcal{F}$, let $M\sqsubseteq \lceil F\rceil$. Since $\lceil F\rceil=B_D\cap\dda (\bigvee F)$, by the interpolation property of way-below relation, there exists some $b_0\in B_D$ such that $M\ll b_0\ll \bigvee F$. By setting $G=\{b_0\}$, condition~(CA1) follows.

We next prove that a subset~$Q$ of $P_a$ is a continuous formal attribute concept of $(P_o,P_a,\vDash,\mathcal{F}_{\tau})$ if and only if it  satisfies the following two conditions:
\begin{enumerate}[{\bf (R1)}]
\item $ (\forall u\in B_D )(\forall v\in Q)(u\ll v\Rightarrow u\in Q)$,
\item $(\forall M\sqsubseteq Q)(\exists u\in Q)M\ll u$.
\end{enumerate}

Suppose that $Q$ is a continuous formal attribute concept of $(P_o,P_a,\vDash,\mathcal{F}_{\tau})$. First, let $u\in B_D$ and $v\in Q$ with $u\ll v$. Then $\lceil \{v\}\rceil=\dda v\cap B_D\subseteq Q$. This implies that $u\in Q$. Second, if $M\sqsubseteq Q$, then there exists $F\in\mathcal{F}_D$ such that
$M\subseteq\lceil F\rceil\subseteq U$. Since $F\in \mathcal{F}_D$, it follows that
$\bigvee F\in F$. Define $u=\bigvee F$, then $M\ll u$.
Conversely, suppose that $Q\subseteq B_D$ which satisfies conditions~(R1) and (R2). For any $M\sqsubseteq Q$, there exists some $u\in Q$ such that $M\ll u$. Set $F=\{u\}$, then $F\in\mathcal{F}_D$ and $M\subseteq \dda u\cap B_D=\lceil F\rceil \subseteq Q$. Thus $Q$ is a continuous formal attribute concept of $(P_o,P_a,\vDash,\mathcal{F}_{\tau})$.

 For any $d\in D$, trivial checks verify that $\dda d \cap B_D$ satisfies conditions~(R1) and (R2), and thus it  is a continuous formal attribute concept of $(P_o,P_a,\vDash,\mathcal{F}_{\tau})$. On the other hand,  for any continuous formal attribute concept~$Q$ of $(P_o,P_a,\vDash,\mathcal{F}_{\tau})$, condition~(R2) indicates that $Q$ is a directed subset of $B_D$. Thus $\bigvee Q$ exists in $D$.  These allow us to define the following two functions:
$$f :D\rightarrow \mathfrak{B}(P_a,\mathcal{F}_{\tau}), x \mapsto \dda x\cap B_D,$$
$$g: \mathfrak{B}(P_a,\mathcal{F}_{\tau})\rightarrow D, Q \mapsto \bigvee Q.$$
Since it is not difficult to check that $f$ and $g$ are order preserving and mutually inverse, $(D,\leq)$ is isomorphic to $(\mathfrak{B}(P_a,\mathcal{F}_{\tau}),\subseteq)$, as required.
\end{proof}

Consider a continuous domain~$(D,\leq)$ with a basis $B_D$, the above theorem has associated it with an attribute continuous formal context~$(D,B_D,\geq,\mathcal{F}_{\tau})$, where the relation~$\geq$ is the dual order of $\leq$, further, $\tau$ and $\mathcal{F}_{\tau}$ are defined by equations~(\ref{e3.3}) and~(\ref{e3.4}), respectively.  In the sequel, we denote this associated attribute continuous formal context $(D,B_D,\geq,\mathcal{F}_{\tau})$ by $Rep(D)$.

\subsection{Representations of various subclasses of continuous domains}
In this subsection,  we investigate how to represent some important subclasses of continuous domains in term of FCA.

As well known, a large number of possible representations for  algebraic domains have been established. We add to this family  a characterization of algebraic domains based on the frame work of
attribute continuous formal contexts.
\begin{proposition}\label{p3.15}
Let $\mathbb{P}_{\tau}=(P_o,P_a,\vDash,\mathcal{F}_{\tau})$ be an attribute continuous formal context. Then
$(\mathfrak{B}(P_a,\mathcal{F}_{\tau}),\subseteq)$ is algebraic if and only if $\mathbb{P}_{\tau}$ satisfies the following additional condition,
\begin{enumerate}[{\bf(AD)}]
\item $(\forall F_1,F_2\in \mathcal{F}_{\tau})(F_2\subseteq \lceil F_1\rceil)\Rightarrow(\exists F\in\mathcal{F}_{\tau})(F_2\subseteq\lceil F\rceil, F\subseteq\lceil F\rceil, F\subseteq\lceil F_1\rceil)$.
\end{enumerate}
\end{proposition}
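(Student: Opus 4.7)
The plan is to exploit the description of compact elements in $(\mathfrak{B}(P_a,\mathcal{F}_{\tau}),\subseteq)$ supplied by part~(3) of Lemma~\ref{l3.13}: a basis element $\lceil F\rceil$ is compact precisely when some representative $G\in\mathcal{F}_{\tau}$ of the same concept satisfies $G\subseteq\lceil G\rceil$. Since $\{\lceil F\rceil:F\in\mathcal{F}_{\tau}\}$ is already a basis of the continuous domain $(\mathfrak{B}(P_a,\mathcal{F}_{\tau}),\subseteq)$ by Theorem~\ref{t3.14}, algebraicity is equivalent to the statement that for each continuous formal concept $Q$, the compact elements contained in $Q$ form a directed family with union $Q$.

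For the forward direction, suppose $(\mathfrak{B}(P_a,\mathcal{F}_{\tau}),\subseteq)$ is algebraic and take $F_1,F_2\in\mathcal{F}_{\tau}$ with $F_2\subseteq\lceil F_1\rceil$. Applying (CA1) to $F_2\sqsubseteq\lceil F_1\rceil$, I first produce $G\in\mathcal{F}_{\tau}$ with $F_2\subseteq\lceil G\rceil$ and $G\subseteq\lceil F_1\rceil$. By equation~(\ref{e3.2}) of Lemma~\ref{l3.13}, this gives $\lceil G\rceil\ll\lceil F_1\rceil$. Algebraicity then yields a compact element $c$ with $\lceil G\rceil\subseteq c\subseteq\lceil F_1\rceil$, and equation~(\ref{e3.21}) lets me write $c=\lceil F\rceil$ for some $F\in\mathcal{F}_{\tau}$ with $F\subseteq\lceil F\rceil$. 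The conclusion $F_2\subseteq\lceil G\rceil\subseteq\lceil F\rceil$ together with $F\subseteq\lceil F\rceil\subseteq\lceil F_1\rceil$ is exactly~(AD).

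For the converse, assuming (AD), I would first prove a \emph{rounding lemma}: for every $F\in\mathcal{F}_{\tau}$ with $F\subseteq Q$ there exists a compact $\lceil F'\rceil$ satisfying $\lceil F\rceil\subseteq\lceil F'\rceil\subseteq Q$. To obtain this, use Proposition~\ref{p3.11}(2) to get $G\in\mathcal{F}_{\tau}$ with $G\subseteq Q$ and $F\subseteq\lceil G\rceil$, then apply (AD) with $F_1:=G$ and $F_2:=F$ to produce $F'\in\mathcal{F}_{\tau}$ with $F\subseteq\lceil F'\rceil$, $F'\subseteq\lceil F'\rceil$, and $F'\subseteq\lceil G\rceil\subseteq Q$; Lemma~\ref{l3.13}(3) makes $\lceil F'\rceil$ compact. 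Combined with Proposition~\ref{p3.12}, the rounding lemma immediately shows that $Q$ is the union of the compact elements contained in it. For directedness, given two such compacts $c_1,c_2\subseteq Q$, the directedness of $\{\lceil H\rceil:H\in\mathcal{F}_{\tau},H\subseteq Q\}$ from Proposition~\ref{p3.12} supplies a common basic upper bound $\lceil F_3\rceil\subseteq Q$, and one final application of the rounding lemma promotes $\lceil F_3\rceil$ to a compact element of $Q$ dominating both $c_1$ and $c_2$.

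The main obstacle, and the reason (AD) is phrased with $F_2\subseteq\lceil F\rceil$ rather than merely $\lceil F_2\rceil\subseteq\lceil F\rceil$, is that the kernel operator $\tau$ may strip elements off $\alpha\circ\omega(F_2)$, so $F_2$ need not lie inside $\lceil F_2\rceil$. A naive ``interpolate a compact between $\lceil F_2\rceil$ and $\lceil F_1\rceil$'' argument would only deliver $\lceil F_2\rceil\subseteq\lceil F\rceil$ and miss the set-containment of $F_2$ itself. The reliance on (CA1) in the forward direction, and symmetrically on Proposition~\ref{p3.11}(2) in the converse, is precisely the device that bypasses this mismatch by reducing everything to the way-below relation between bona fide continuous formal concepts of the form $\lceil G\rceil$.
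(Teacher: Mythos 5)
Your proof is correct and follows essentially the same route as the paper: both directions hinge on the characterization of compact basis elements via equation~(\ref{e3.21}) (some representative $F$ with $F\subseteq\lceil F\rceil$), with the forward direction interpolating a compact element below $\lceil F_1\rceil$ and the converse exhibiting each concept $Q$ as a directed union of such compacts; your ``rounding lemma'' is just a repackaging of the paper's construction of the set $K(Q)$. The one substantive difference is in your favor: by first applying (CA1) to $F_2\sqsubseteq\lceil F_1\rceil$ to obtain $G$ with $F_2\subseteq\lceil G\rceil$ before interpolating, you secure the set-containment $F_2\subseteq\lceil F\rceil$ honestly, whereas the paper passes from $\lceil F_2\rceil\ll\lceil G\rceil$ directly to $F_2\subseteq\lceil F\rceil$, which silently presupposes $F_2\subseteq\lceil F_2\rceil$ — exactly the pitfall you flag in your closing paragraph.
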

\begin{proof}
Theorem~\ref{t3.14} has shown that the set~$\set{\lceil F\rceil}{F\in \mathcal{F}_{\tau}}$ is a basis of $(\mathfrak{B}(P_a,\mathcal{F}_{\tau}),\subseteq)$. Thus the set  $K(\mathfrak{B}(P_a,\mathcal{F}_{\tau}))$ of all compact elements of $(\mathfrak{B}(P_a,\mathcal{F}_{\tau}),\subseteq)$ is a subset of $\set{\lceil F\rceil}{F\in \mathcal{F}_{\tau}}$. Further, for any $F\in\mathcal{F}_{\tau}$ with $F\subseteq\lceil F\rceil$, by equation~\ref{e3.21}, $\lceil F\rceil$ is a compact element of $\mathfrak{B}(P_a,\mathcal{F}_{\tau})$.

We first show the only if part,
suppose that $(\mathfrak{B}(P_a,\mathcal{F}_{\tau}),\subseteq)$ is an algebraic domain.  Let $F_1,F_2\in \mathcal{F}_{\tau}$ with $F_2\subseteq \lceil F_1\rceil$.
Then by equation~\ref{e3.2}, we have $\lceil F_2\rceil\ll \lceil F_1\rceil$. Since
$(\mathfrak{B}(P_a,\mathcal{F}_{\tau}),\subseteq)$ is algebraic,
 there exists some $G\in \mathcal{F}_{\tau}$ such that $\lceil F_2\rceil\ll \lceil G\rceil \ll \lceil G\rceil\ll \lceil F_1\rceil$. Then according to equation~\ref{e3.21}, $F_2\subseteq\lceil F\rceil$, $F\subseteq\lceil F\rceil$ and $ F\subseteq\lceil F_1\rceil$ for some $F\in \mathcal{F}_{\tau}$ with $\lceil F\rceil=\lceil G\rceil$.

Conversely, suppose that $\mathbb{P}_{\tau}$ satisfies condition~(AD). For any $Q\in \mathfrak{B}(P_a,\mathcal{F}_{\tau})$, put
$$K(Q)=\set{\lceil F\rceil}{F\in\mathcal{F}_{\tau},F\subseteq\lceil F\rceil,\lceil F\rceil\ll Q}.$$
It is not difficult to show that $K(Q)\subseteq K(\mathfrak{B}(P_a,\mathcal{F}_{\tau}))$ and
$$\set{\lceil F\rceil}{F\in\mathcal{F}_{\tau},F\subseteq\lceil F\rceil,\lceil F\rceil\ll Q}=\set{\lceil F\rceil}{F\in\mathcal{F}_{\tau},F\subseteq\lceil F\rceil\cap Q}.$$
 We now claim that the set $K(Q)$ is directed. In fact, let $\lceil G_1\rceil,\lceil G_2\rceil\in K(Q)$, where $G_1,G_2\in \mathcal{F}_{\tau}$ satisfy $G_1\subseteq\lceil G_1\rceil\cap Q$ and $G_2\subseteq\lceil G_2\rceil\cap Q$.
 By condition (CA2), we have some $F_1\in \mathcal{F}_{\tau}$ with $G_1\cup G_2\subseteq\lceil F_1\rceil\subseteq Q$. And by condition (CA1), there exists some $F_2\in  \mathcal{F}_{\tau}$ such that $G_1\cup G_2\subseteq\lceil F_2\rceil$ and $F_2\subseteq\lceil F_1\rceil\subseteq Q$. Then by condition (AD), $F_2\subseteq\lceil G_3\rceil$, $G_3\subseteq\lceil G_3\rceil$ and $G_3\subseteq\lceil F_1\rceil\subseteq Q$ for some $G_3\in \mathcal{F}_{\tau}$. Therefore, $\lceil G_3\rceil\in K(Q)$ and $\lceil G_1\rceil,\lceil G_2\rceil\subseteq\lceil G_3\rceil$. The remainder of the proof is to show that $Q=\bigcup K(Q)$. Since $\bigcup K(Q)\subseteq Q$ is clear, it suffices to show that  $Q\subseteq\bigcup K(Q)$. To this end, let $x\in Q$. Using condition (CA2), we have some $F_1\in\mathcal{F}_{\tau}$ with $x\in \lceil F_1\rceil\subseteq Q$. By condition (CA1), there exists some $F_2\in\mathcal{F}_{\tau}$ such that $x\in \lceil F_2\rceil$ and $F_2\subseteq \lceil F_1\rceil\subseteq Q$. According to condition (AD), there exists some $F\in\mathcal{F}_{\tau}$ such that $F_2\subseteq \lceil F\rceil$, $F\subseteq \lceil F\rceil$ and $F\subseteq \lceil F_1\rceil$. So  $x\in \lceil F\rceil$ and $\lceil F\rceil\in K(Q)$. Therefore, $x\in \bigcup K(Q)$ and hence $Q\subseteq \bigcup K(Q)$.
\end{proof}
In most applications of domain theory, it is frequently discussed whether a domain has a least element or a greatest element.
\begin{proposition}\label{p3.16}
Let $\mathbb{P}_{\tau}=(P_o,P_a,\vDash,\mathcal{F}_{\tau})$ be an attribute continuous formal context. Then
$(\mathfrak{B}(P_a,\mathcal{F}_{\tau}),\subseteq)$ is pointed if and only if,  for any $F\in \mathcal{F}_{\tau}$ there exists some  $G\in \mathcal{F}_{\tau}$ with $G\subseteq \lceil F\rceil$.
\end{proposition}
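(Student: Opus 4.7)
The proof is a biconditional that leans on the auxiliary results already established in this subsection.

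For the forward direction, I would assume $(\mathfrak{B}(P_a,\mathcal{F}_{\tau}),\subseteq)$ has a least element $Q_{\perp}$. Since every continuous formal concept is nonempty, I can fix any $x\in Q_{\perp}$ and apply Proposition \ref{p3.11}(2) with $M=\{x\}$ and $Q=Q_{\perp}$ to obtain some $G\in\mathcal{F}_{\tau}$ with $G\subseteq Q_{\perp}$ and $\{x\}\subseteq\lceil G\rceil\subseteq Q_{\perp}$. For every $F\in\mathcal{F}_{\tau}$, Proposition \ref{p3.10} makes $\lceil F\rceil$ a continuous formal concept, so minimality of $Q_{\perp}$ gives $Q_{\perp}\subseteq\lceil F\rceil$, and hence $G\subseteq Q_{\perp}\subseteq\lceil F\rceil$. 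The crucial point is that a single $G$ works for every $F\in\mathcal{F}_{\tau}$.

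For the backward direction, I would fix a $G\in\mathcal{F}_{\tau}$ provided by the hypothesis so that $G\subseteq\lceil F\rceil$ holds for every $F\in\mathcal{F}_{\tau}$, and then argue that $\lceil G\rceil$ is the least element of $(\mathfrak{B}(P_a,\mathcal{F}_{\tau}),\subseteq)$. First, by the monotonicity of $\lceil\cdot\rceil$ recorded in (\ref{e3.1}), $G\subseteq\lceil F\rceil$ gives $\lceil G\rceil\subseteq\lceil\lceil F\rceil\rceil=\lceil F\rceil$ for every such $F$. By Proposition \ref{p3.10}, $\lceil G\rceil$ is itself a continuous formal concept. To see it is least, let $Q\in\mathfrak{B}(P_a,\mathcal{F}_{\tau})$ be arbitrary; Proposition \ref{p3.12} expresses $Q=\bigcup\{\lceil F\rceil:F\in\mathcal{F}_{\tau},\,F\subseteq Q\}$ as a nonempty (directed) union, so for any $F\in\mathcal{F}_{\tau}$ with $F\subseteq Q$ I get $\lceil G\rceil\subseteq\lceil F\rceil\subseteq Q$. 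Hence $\lceil G\rceil\subseteq Q$, and $(\mathfrak{B}(P_a,\mathcal{F}_{\tau}),\subseteq)$ is pointed.

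The main obstacle lies in the backward direction, where one must read the hypothesis so as to deliver a $G$ that sits below every $\lceil F\rceil$ simultaneously (rather than a family of $G$'s varying with $F$). The forward argument supplies the justification: the $G$ extracted from a point of the least concept is automatically contained in every $\lceil F\rceil$, so the pointwise and uniform versions of the condition coincide whenever a least element exists, and matching that same uniform witness back to the construction of $\lceil G\rceil$ is what makes the converse go through.
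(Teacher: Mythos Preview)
Your proof is correct and follows essentially the same route as the paper. The only minor difference is in the forward direction: the paper extracts $G$ from the least element $\bot$ by invoking $\bot\ll\bot$ together with the characterization of $\ll$ in Lemma~\ref{l3.13}(2), whereas you obtain $G\subseteq Q_\perp$ more directly from nonemptiness of $Q_\perp$ and Proposition~\ref{p3.11}(2); both arguments then conclude $G\subseteq Q_\perp\subseteq\lceil F\rceil$ for all $F$ in the same way, and the backward directions are essentially identical.
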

\begin{proof}
Suppose that $(\mathfrak{B}(P_a,\mathcal{F}_{\tau}),\subseteq)$ has a least element $\bot$, Then $\bot\ll\bot$. this means that there exists some $G\in \mathcal{F}_{\tau}$ such that $\bot\subseteq \lceil G\rceil$ and $G\subseteq \bot$. Therefore, $G\subseteq \lceil F\rceil$ for any $F\in \mathcal{F}_{\tau}$.

For the converse implication, assume that $\mathcal{F}_{\tau}$ has some element $G$ such that $G\subseteq \lceil F\rceil$ for any $F\in \mathcal{F}_{\tau}$. Then $\lceil G\rceil\subseteq \lceil F\rceil$. For any $Q\in \mathfrak{B}(P_a,\mathcal{F}_{\tau}$), we have known that $Q$ is the union of the set $\set{\lceil F\rceil}{F\in \mathcal{F}_{\tau}, F\subseteq Q}$. Thus $\lceil G\rceil\subseteq Q$, that is, $\lceil G\rceil$ is a least element of $(\mathfrak{B}(P_a,\mathcal{F}_{\tau}),\subseteq)$.
\end{proof}

\begin{proposition}\label{p3.17}
Let $\mathbb{P}_{\tau}=(P_o,P_a,\vDash,\mathcal{F}_{\tau})$ be an attribute continuous formal context.
Then
$(\mathfrak{B}(P_a,\mathcal{F}_{\tau}),\subseteq)$ has a greatest element if and only if $ \bigcup_{F\in\mathcal{F}_{\tau}}\lceil F\rceil$ is a continuous formal concept.
\end{proposition}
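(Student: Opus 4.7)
The plan is to prove both directions using the characterization of continuous formal concepts given in Proposition~\ref{p3.12} together with the fact (Proposition~\ref{p3.10}) that every set of the form $\lceil F\rceil$ with $F\in\mathcal{F}_{\tau}$ is itself a continuous formal concept. Write $U=\bigcup_{F\in\mathcal{F}_{\tau}}\lceil F\rceil$ for brevity.

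For the sufficiency direction, I would assume $U$ is a continuous formal concept, so that $U\in\mathfrak{B}(P_a,\mathcal{F}_{\tau})$. Then for an arbitrary $Q\in \mathfrak{B}(P_a,\mathcal{F}_{\tau})$, Proposition~\ref{p3.12} yields
\[
Q=\bigcup\set{\lceil F\rceil}{F\in\mathcal{F}_{\tau},\,F\subseteq Q}\subseteq \bigcup_{F\in\mathcal{F}_{\tau}}\lceil F\rceil=U,
\]
which shows $U$ is the greatest element of $(\mathfrak{B}(P_a,\mathcal{F}_{\tau}),\subseteq)$.

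For the necessity direction, I would let $\top$ be the greatest element of $(\mathfrak{B}(P_a,\mathcal{F}_{\tau}),\subseteq)$. By Proposition~\ref{p3.10}, $\lceil F\rceil\in \mathfrak{B}(P_a,\mathcal{F}_{\tau})$ for every $F\in\mathcal{F}_{\tau}$, hence $\lceil F\rceil\subseteq\top$, giving $U\subseteq\top$. Applying Proposition~\ref{p3.12} to $\top$ itself gives $\top=\bigcup\set{\lceil F\rceil}{F\in\mathcal{F}_{\tau},\,F\subseteq\top}\subseteq U$. Thus $\top=U$, so $U$ is a continuous formal concept.

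Because both directions fall out directly from the representation of continuous formal concepts as directed unions of the sets $\lceil F\rceil$, there is no real obstacle; the only small care needed is to note that the sufficiency direction presupposes $U\in\mathfrak{B}(P_a,\mathcal{F}_{\tau})$ (which is exactly the hypothesis), while in the necessity direction one must invoke Proposition~\ref{p3.10} to know each $\lceil F\rceil$ lies in $\mathfrak{B}(P_a,\mathcal{F}_{\tau})$ before comparing with $\top$.
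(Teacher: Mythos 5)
Your proof is correct and follows essentially the same route as the paper, resting on the same two facts: each $\lceil F\rceil$ is a continuous formal concept (Proposition~\ref{p3.10}) and every concept is the union of the $\lceil F\rceil$ it contains (Proposition~\ref{p3.12}). The only cosmetic difference is in the necessity direction, where you conclude by showing $\bigcup_{F\in\mathcal{F}_{\tau}}\lceil F\rceil=\top$ outright rather than verifying condition (CA2) for the union directly, which is a slightly cleaner way to finish the same argument.
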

\begin{proof}
Assume that $\mathfrak{B}(P_a,\mathcal{F}_{\tau})$ has a greatest element $\top$. We show that $\bigcup_{F\in\mathcal{F}_{\tau}}\lceil F\rceil$ is a continuous formal concept by checking that $\bigcup_{F\in\mathcal{F}_{\tau}}\lceil F\rceil$ satisfies condition~(CA2). For this, let $M\sqsubseteq \bigcup_{F\in\mathcal{F}_{\tau}}\lceil F\rceil$. Then $M\sqsubseteq \top$. This implies that
  $M\subseteq \lceil F\rceil$ for some $F\in \mathcal{F}_{\tau}$, and hence
 $M\subseteq \lceil F\rceil\subseteq\bigcup_{F\in\mathcal{F}_{\tau}}\lceil F\rceil$.

For the converse implication, note that $Q=\bigcup\set{\lceil F\rceil}{F\in \mathcal{F}_{\tau}, F\subseteq Q}$ for any $Q\in \mathfrak{B}(P_a,\mathcal{F}_{\tau})$, it follows that $Q\subseteq \bigcup_{F\in\mathcal{F}_{\tau}}\lceil F\rceil$. Therefore, $\bigcup_{F\in\mathcal{F}_{\tau}}\lceil F\rceil$ is a greatest element of $\mathfrak{B}(P_a,\mathcal{F}_{\tau})$.
\end{proof}
\begin{definition}
Let $\mathbb{P}_{\tau}=(P_o,P_a,\vDash,\mathcal{F}_{\tau})$ be an attribute continuous formal context.
\begin{enumerate}[(1)]
\item $\mathbb{P}_{\tau}$ is called a dense attribute continuous formal context, if it satisfies condition~(AD).
\item $\mathbb{P}_{\tau}$ is called a pointed attribute continuous formal context if,  for any $F\in \mathcal{F}_{\tau}$ there exists some  $G\in \mathcal{F}_{\tau}$ with $G\subseteq \lceil F\rceil$.
\item $\mathbb{P}_{\tau}$ is called a topped attribute continuous formal context, if  $\bigcup_{F\in\mathcal{F}_{\tau}}\lceil F\rceil$ is a continuous formal concept.
    \end{enumerate}
\end{definition}
\begin{corollary}
Let $(D,\leq)$ be a continuous domain with a basis $B_D$ and
$Rep(D)$  the associated attribute continuous formal context.
\begin{enumerate}[(1)]
\item If $(D,\leq)$ is algebraic, then  $Rep(D)$ is a dense attribute continuous formal context.
\item If $(D,\leq)$ has a least element, then $Rep(D)$ is a pointed attribute continuous formal context.
\item If $(D,\leq)$ has a greatest element, then $Rep(D)$ is a topped attribute continuous formal context.
\end{enumerate}
\end{corollary}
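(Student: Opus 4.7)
The plan is to exploit the order isomorphism $(D,\leq)\cong(\mathfrak{B}(B_D,\mathcal{F}_{\tau}),\subseteq)$ supplied by Theorem~\ref{t3.14} and then invoke Propositions \ref{p3.15}, \ref{p3.16}, and \ref{p3.17} in turn. In every case the structural hypothesis on $D$ transfers across the isomorphism to $\mathfrak{B}(B_D,\mathcal{F}_{\tau})$, and the cited proposition then gives exactly the defining condition for $Rep(D)$.

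For part~(1), I first note that order isomorphisms preserve the way-below relation and compact elements, so algebraicity of $D$ yields algebraicity of $\mathfrak{B}(B_D,\mathcal{F}_{\tau})$. Proposition~\ref{p3.15} is an ``if and only if'' characterizing algebraicity by condition~(AD), so $Rep(D)$ satisfies (AD) and is therefore a dense attribute continuous formal context. (Alternatively, if one prefers an explicit witness, it is a straightforward check using the fact that every basis of an algebraic domain contains $K(D)$: given $F_1,F_2\in\mathcal{F}_{\tau}$ with $F_2\subseteq\lceil F_1\rceil$, pick a compact $k\in K(D)\subseteq B_D$ with $F_2\ll k\ll\bigvee F_1$ by interpolation together with algebraicity, and take $F=\{k\}$.)

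For parts~(2) and (3), the argument is one line each. A least element of $D$ corresponds under the isomorphism to a least element of $\mathfrak{B}(B_D,\mathcal{F}_{\tau})$, so Proposition~\ref{p3.16} delivers the defining property of a pointed attribute continuous formal context. Similarly, a greatest element of $D$ yields a greatest element of $\mathfrak{B}(B_D,\mathcal{F}_{\tau})$, and Proposition~\ref{p3.17} states precisely that this is equivalent to $\bigcup_{F\in\mathcal{F}_{\tau}}\lceil F\rceil$ being a continuous formal concept, i.e.\ to $Rep(D)$ being topped.

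There is no real obstacle: the heavy lifting has already been done in Theorem~\ref{t3.14} (the isomorphism) and in Propositions \ref{p3.15}--\ref{p3.17} (the equivalences between structural features of $\mathfrak{B}(P_a,\mathcal{F}_{\tau})$ and conditions on $\mathbb{P}_{\tau}$). The only thing to verify along the way is the routine observation that the concrete $\tau$ and $\mathcal{F}_{\tau}$ produced by formulas~(\ref{e3.3}) and (\ref{e3.4}) do constitute an attribute continuous formal context, which is already proven inside Theorem~\ref{t3.14}. The corollary is thus an immediate transfer statement.
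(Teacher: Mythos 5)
Your proposal is correct and follows exactly the paper's own argument: the paper proves each part by citing Theorem~\ref{t3.14} together with Propositions~\ref{p3.15}, \ref{p3.16}, and \ref{p3.17} respectively, which is precisely your transfer-across-the-isomorphism strategy. Your version merely makes explicit the routine observation that the order isomorphism preserves algebraicity, least elements, and greatest elements, which the paper leaves implicit.
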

\begin{proof}
(1) Use Theorem~\ref{t3.14} and Proposition~\ref{p3.15}.

(2) Use Theorem
~\ref{t3.14} and Proposition~\ref{p3.16}.

(3) Use Theorem
~\ref{t3.14} and Proposition~\ref{p3.17}.
\end{proof}
\begin{definition}
An attribute continuous formal context~$\mathbb{P}_{\tau}=(P_o,P_a,\vDash,\mathcal{F}_{\tau})$ is said to be a consistent attribute continuous formal formal context if it satisfies the following condition
\begin{enumerate}[{\bf(BC)}]
\item $(\forall F\in\mathcal{F}_{\tau})\emptyset\neq
X\sqsubseteq \lceil F\rceil\Rightarrow X\in \mathcal{F}_{\tau}$.
\end{enumerate}
\end{definition}

\begin{proposition}

(1) Let $\mathbb{P}_{\tau}=(P_o,P_a,\vDash,\mathcal{F}_{\tau})$ be a consistent attribute continuous formal  formal context. Then
$(\mathfrak{B}(P_{\tau},\mathcal{F}_{\tau}),\subseteq)$ is a  bounded complete domain.

(2) Let $(D,\leq)$ be a bounded complete domain with a basis~$B_D$. Then the associated attribute continuous formal context~$(D,B_D,\geq,\mathcal{F}_{\tau})$  is  consistent.
\end{proposition}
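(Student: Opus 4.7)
The plan is to handle the two directions in turn, both governed by how condition (BC) interacts with joins.

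For the forward direction, I would establish bounded completeness of $(\mathfrak{B}(P_a,\mathcal{F}_{\tau}),\subseteq)$ by explicitly exhibiting the least upper bound of an arbitrary bounded family $\{Q_i\}_{i\in I}$ with upper bound $U$. My candidate supremum is
\[
Q\;=\;\bigcup\bigl\{\lceil F\rceil \,:\, F\in\mathcal{F}_\tau,\; F\sqsubseteq\textstyle\bigcup_{i\in I}Q_i\bigr\}.
\]
The key step is to show that this indexing family is directed: given $\lceil F_1\rceil$ and $\lceil F_2\rceil$ in it, $F_1\cup F_2$ is a nonempty finite subset of $U$, so (CA2) applied to $U$ produces some $F\in\mathcal{F}_\tau$ with $F_1\cup F_2\subseteq\lceil F\rceil$, and then (BC) forces $F_1\cup F_2\in\mathcal{F}_\tau$. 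Monotonicity of $\lceil\cdot\rceil$ then supplies $\lceil F_1\cup F_2\rceil$ as an upper bound within the family. Part~(1) of Lemma~\ref{l3.13} now yields that $Q$ is itself a continuous formal concept.

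That $Q$ is the supremum is then a matter of bookkeeping. The inclusion $Q_i\subseteq Q$ is immediate from Proposition~\ref{p3.12}, which represents $Q_i$ as the directed union of its $\lceil F\rceil$-pieces with $F\subseteq Q_i$. For any other upper bound $U'$ and any $F$ in the indexing family, $F\sqsubseteq\bigcup_{i\in I} Q_i\subseteq U'$; applying (CA2) to $U'$ and then the idempotence relation~(\ref{e3.1}) gives $\lceil F\rceil\subseteq U'$, whence $Q\subseteq U'$.

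For the reverse direction, I would verify (BC) directly on $Rep(D)=(D,B_D,\geq,\mathcal{F}_\tau)$, reading it through equations~(\ref{e3.4}) and~(\ref{e3.5}). Given $F\in\mathcal{F}_\tau$ and a nonempty finite $X\sqsubseteq\lceil F\rceil=B_D\cap\dda(\bigvee F)$, every element of $X$ lies in $B_D$ and is way below $\bigvee F$, so $X$ is bounded above in $D$ by $\bigvee F$; bounded completeness then supplies $\bigvee X\in D$, and multiplicativity of the way-below relation for bounded joins in a bounded complete domain yields $\bigvee X\ll\bigvee F$. The technical hurdle, and the place where care is needed, is to conclude from these inputs that $X$ satisfies the defining condition $\bigvee X\in X$ of equation~(\ref{e3.4}); I would work with a basis $B_D$ closed under bounded joins (always achievable in a bounded complete domain) so that $\bigvee X\in B_D$, and absorb this element into $X$, thereby obtaining $X\in\mathcal{F}_\tau$ as required.
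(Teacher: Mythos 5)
Your part~(1) is correct and is essentially the paper's own argument: the paper takes a bounded pair $Q_1,Q_2\subseteq Q_3$, forms the candidate $Q=\set{x\in P_a}{(\exists F\in\mathcal{F}_{\tau})(F\subseteq Q_1\cup Q_2,\ x\in\lceil F\rceil)}$ (the same set as your union of $\lceil F\rceil$'s), and invokes (BC) at exactly the point where you do, to put the union of the finitely many witnessing members of $\mathcal{F}_{\tau}$ back into $\mathcal{F}_{\tau}$. Your phrasing for an arbitrary nonempty bounded family, routed through the directedness of $\set{\lceil F\rceil}{F\in\mathcal{F}_{\tau},\,F\sqsubseteq\bigcup_iQ_i}$ and Lemma~\ref{l3.13}(1), is the same proof and, if anything, hews closer to the stated definition of bounded completeness.

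Part~(2) does not go through, and the failure is at precisely the step you flag as the ``technical hurdle.'' Condition (BC) quantifies over \emph{every} nonempty finite $X\sqsubseteq\lceil F\rceil$ and demands that this very $X$ lie in $\mathcal{F}_{\tau}$, which by equation~(\ref{e3.4}) means $\bigvee X\in X$. ``Absorbing'' $\bigvee X$ into $X$ replaces $X$ by the different set $X\cup\{\bigvee X\}$; establishing that the enlarged set belongs to $\mathcal{F}_{\tau}$ says nothing about $X$ itself, and no choice of basis --- closed under bounded joins or not --- can repair this, because the quantification over $X$ in (BC) is unrestricted. Concretely, let $D$ be the four-element Boolean lattice $\{\bot,a,b,\top\}$ with $a,b$ incomparable and take $B_D=D$, $F=\{\top\}$, $X=\{a,b\}$: then $X\sqsubseteq\lceil F\rceil=\dda\top\cap B_D=D$, yet $\bigvee X=\top\notin X$, so $X\notin\mathcal{F}_{\tau}$ and (BC) fails even though $D$ is a bounded complete domain. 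For what it is worth, the paper's own proof of part~(2) commits the same error --- it asserts that $\bigvee X\in X$ follows merely from $X$ being nonempty and finite, which is false in any poset with two incomparable bounded elements --- so the defect lies in the statement (or in the choice of $\mathcal{F}$ in equation~(\ref{e3.4})) rather than in anything your patch could fix. A correct version would have to redefine $\mathcal{F}$ for $Rep(D)$ of a bounded complete domain as the nonempty finite \emph{bounded} subsets of $B_D$ (with $B_D$ closed under existing finite joins), under which $\lceil F\rceil$ is still $\dda(\bigvee F)\cap B_D$ and (BC) becomes immediate.
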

\begin{proof}

(1) To show $(\mathfrak{B}(P_{\tau},\mathcal{F}_{\tau}),\subseteq)$ is a  bounded complete domain, with Theorem~\ref{t3.14},
it suffices to show that any two elements of $\mathfrak{B}(P_{\tau},\mathcal{F}_{\tau})$ which bound above have a least upper bound. Let $Q_1, Q_2$ and $Q_3$ are elements of $\mathfrak{B}(P_{\tau},\mathcal{F}_{\tau})$ with $Q_1, Q_2\subseteq Q_3$. Put
$$Q=\set{x\in P_a}{(\exists F\in \mathcal{F}_{\tau})(F\subseteq Q_1\cup Q_2, x\in\lceil F\rceil)}.$$
We  show that $Q$ is also an elements of $\mathfrak{B}(P_{\tau},\mathcal{F}_{\tau})$ and it is the least upper bound of $Q_1$ and $Q_2$ in the following.

Let $M\subseteq Q$. If $M=\emptyset$, then $M\subseteq \lceil F\rceil\subseteq Q$ for any $F\in\mathcal{F}_{\tau}$ with $F\subseteq Q_1\cup Q_2$. Suppose now that $M\neq \emptyset$, then for any $x\in M$, there exists some $F_x\in \mathcal{F}_{\tau}$ such that $F_x\subseteq Q_1\cup Q_2$ and $x\in \lceil F_x\rceil$. Then $\bigcup_{x\in M}F_x \sqsubseteq Q_1\cup Q_2\subseteq Q_3$. By condition (BC), we have $\bigcup_{x\in M}F_x\in \mathcal{F}_{\tau}$. As a result, $M\subseteq \lceil\bigcup_{x\in M}F_x\rceil\subseteq Q_1\cup Q_2$. That is, $Q$ is an elements of $\mathfrak{B}(P_{\tau},\mathcal{F}_{\tau})$.

Since $Q_1, Q_2 \subseteq Q$ is clear, we have to show that $Q\subseteq Q_0$ for any $Q_0\in \mathfrak{B}(P_{\tau},\mathcal{F}_{\tau})$ with $Q_1,Q_2\subseteq Q_0$. For any $x\in Q$, there exists some $F\in\mathcal{F}_{\tau}$ such that $F\subseteq Q_1\cup Q_2$ and $x\in\lceil F\rceil$. Then $F\subseteq Q_0$. Note that $Q_0=\bigcup\set{\lceil F\rceil}{F\in\mathcal{F}_{\tau},F\subseteq Q_0}$, it follows that $Q\subseteq Q_0$.

(2) Let $(D,\leq)$ be a bounded complete domain with a basis~$B_D$. Suppose that $F\in\mathcal{F}_{\tau}$ with $\emptyset\neq X\sqsubseteq \lceil F\rceil$, where $\mathcal{F}_{\tau}$ is defined by  equation~\ref{e3.4}. This implies that $\bigvee F\in F$ and $X\leq \bigvee F$. Since $(D\leq)$ is a  bounded complete domain, $\bigvee X\in D$. Note that $X\neq \emptyset$ and $X$ is finite, it follows that $\bigvee X\in X$, which implies that $X\in \mathcal{F}_{\tau}$.
\end{proof}

With the above propositions stated in this subsection, the following corollary is obvious.
\begin{corollary}

 Let $\mathbb{P}_{\tau}=(P_o,P_a,\vDash,\mathcal{F}_{\tau})$ be a topped   consistent attribute continuous formal formal context. Then $(\mathfrak{B}(P_a,\mathcal{F}_{\tau}),\subseteq)$  is a continuous lattice. Moreover, if $\mathbb{P}_{\tau}$ is also dense, then $(\mathfrak{B}(P_a,\mathcal{F}_{\tau}),\subseteq)$ is an algebraic lattice.

 Conversely,
let $(D,\leq)$ be a continuous lattice with a basis~$B_D$. Then the associated attribute continuous formal context~$Rep(D)$  is a topped  consistent attribute continuous formal context. Moreover, if $(D,\leq)$ is an algebraic lattice, then $Rep(D)$  is a topped  dense consistent attribute continuous formal formal context.
\end{corollary}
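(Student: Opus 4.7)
The proof plan is to assemble the previous propositions and the preceding corollary; no new ideas are needed beyond one short observation about complete lattices. First I would apply part~(1) of the preceding proposition (on consistent attribute continuous formal contexts) to infer that $(\mathfrak{B}(P_a,\mathcal{F}_{\tau}),\subseteq)$ is a bounded complete domain, and then Proposition~\ref{p3.17} to infer that it has a greatest element, namely $\bigcup_{F\in\mathcal{F}_{\tau}}\lceil F\rceil$. The key observation is that a bounded complete domain equipped with a greatest element is automatically a complete lattice: every subset $S$ is bounded above by the top $\top$, so bounded completeness produces a least upper bound of $S$, while the empty set admits a join because every bounded complete domain is pointed. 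Combined with the continuity already supplied by Theorem~\ref{t3.14}, this yields a continuous lattice.

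For the ``moreover'' clause in the forward direction, the additional hypothesis that $\mathbb{P}_{\tau}$ is dense, i.e.\ satisfies condition~(AD), allows me to invoke Proposition~\ref{p3.15}, which makes $(\mathfrak{B}(P_a,\mathcal{F}_{\tau}),\subseteq)$ algebraic; combined with the lattice structure just established, it is an algebraic lattice by definition.

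For the converse direction, let $(D,\leq)$ be a continuous lattice with basis $B_D$. Since every continuous lattice is in particular a bounded complete domain and possesses a greatest element, part~(2) of the preceding proposition gives that $Rep(D)$ is consistent, while part~(3) of the prior corollary gives that $Rep(D)$ is topped. If in addition $(D,\leq)$ is an algebraic lattice, then it is algebraic as a continuous domain, and part~(1) of the prior corollary yields density of $Rep(D)$.

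The only mild subtlety worth flagging is the elementary lemma that a bounded complete domain with a greatest element is a complete lattice; once this is isolated, the entire corollary reduces to bookkeeping of the hypotheses against Propositions~\ref{p3.15}, \ref{p3.16}, \ref{p3.17}, the preceding proposition on consistency, and the corollary deriving properties of $Rep(D)$ from those of $(D,\leq)$. Hence I do not anticipate any substantive obstacle.
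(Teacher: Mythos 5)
Your proposal is correct and follows exactly the route the paper intends: the paper offers no written proof beyond declaring the corollary ``obvious'' from the preceding propositions, and your assembly of the consistency proposition, Proposition~\ref{p3.17}, Proposition~\ref{p3.15}, Theorem~\ref{t3.14}, and the prior corollary is precisely that intended bookkeeping. Your explicit isolation of the small lemma that a bounded complete domain with a greatest element is a complete lattice is a welcome addition the paper leaves implicit.
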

  We finish this section by providing an approach  to representing stably continuous semilattices in the sense of FCA.
  \begin{definition}
  An attribute continuous formal context is said to be  multiplicative  if for any $F_1,F_2,G_1,G_2\in\mathcal{F}_{\tau}$,
\begin{enumerate}[{\bf(SS1)}]
\item $M\sqsubseteq \lceil F_1\rceil\cap \lceil F_2\rceil\Rightarrow (\exists F\in\mathcal{F}_{\tau})(F\subseteq \lceil F_1\rceil\cap \lceil F_2\rceil, M\subseteq\lceil F\rceil)$,
\item $G_1\subseteq \lceil F_1\rceil, G_2\subseteq \lceil F_2\rceil\Rightarrow (\exists F,G\in\mathcal{F}_{\tau})(\lceil G_1\rceil\cap \lceil G_2\rceil\subseteq \lceil G\rceil,G\subseteq \lceil F\rceil\subseteq \lceil F_1\rceil\cap \lceil F_2\rceil)$,
\end{enumerate}

  \end{definition}
\begin{proposition}~\label{p3.24}
(1) Let $\mathbb{P}_{\tau}=(P_o,P_a,\vDash,\mathcal{F}_{\tau})$ be a multiplicative  attribute continuous formal context. Then
$(\mathfrak{B}(P_a,\mathcal{F}_{\tau}),\subseteq)$ is a stably continuous semilattice.

(2) For any stably continuous semilattice $(D,\leq)$ with a basis~$B_D$, the associated attribute continuous formal context~$(D,B_D,\geq,\mathcal{F}_{\tau})$ is multiplicative.

\end{proposition}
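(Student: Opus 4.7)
The plan is to take Theorem~\ref{t3.14} as a black box and reduce both directions to bookkeeping on the formal-context side, using the characterization of $\ll$ in $\mathfrak{B}(P_a,\mathcal{F}_{\tau})$ from Lemma~\ref{l3.13}(2) together with Propositions~\ref{p3.11} and~\ref{p3.12}.

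For part~(1), since Theorem~\ref{t3.14} already delivers a continuous domain, two things remain: constructing binary meets and verifying that $\ll$ is multiplicative. My candidate for $Q_1\wedge Q_2$ is the plain intersection $Q_1\cap Q_2$. To see this is a continuous formal concept, I take $M\sqsubseteq Q_1\cap Q_2$, apply (CA2) on each $Q_i$ to obtain $F_i\in\mathcal{F}_{\tau}$ with $M\subseteq\lceil F_i\rceil\subseteq Q_i$, then feed $M\sqsubseteq\lceil F_1\rceil\cap\lceil F_2\rceil$ into (SS1) to get a single $F\in\mathcal{F}_{\tau}$ with $M\subseteq\lceil F\rceil$ and $F\subseteq\lceil F_1\rceil\cap\lceil F_2\rceil\subseteq Q_1\cap Q_2$; Proposition~\ref{p3.11}(1) then promotes this to $\lceil F\rceil\subseteq Q_1\cap Q_2$, which is (CA2). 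Intersection is clearly the greatest lower bound in $(\mathfrak{B}(P_a,\mathcal{F}_{\tau}),\subseteq)$. For multiplicativity, starting from $Q\ll Q_1$ and $Q\ll Q_2$, Lemma~\ref{l3.13}(2) hands me $F_i\in\mathcal{F}_{\tau}$ with $Q\subseteq\lceil F_i\rceil$ and $F_i\subseteq Q_i$; I then apply Proposition~\ref{p3.11}(2) to each $F_i\sqsubseteq Q_i$ to interpolate via $H_i\in\mathcal{F}_{\tau}$ with $F_i\subseteq\lceil H_i\rceil$ and $H_i\subseteq Q_i$, and feed the pair $(G_1,G_2):=(F_1,F_2)$ with ``parent'' pair $(H_1,H_2)$ into (SS2). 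This produces $F,G\in\mathcal{F}_{\tau}$ with $\lceil F_1\rceil\cap\lceil F_2\rceil\subseteq\lceil G\rceil$ and $G\subseteq\lceil F\rceil\subseteq\lceil H_1\rceil\cap\lceil H_2\rceil$. Chaining these gives $Q\subseteq\lceil G\rceil$ and $G\subseteq\lceil H_1\rceil\cap\lceil H_2\rceil\subseteq Q_1\cap Q_2$, so Lemma~\ref{l3.13}(2) yields $Q\ll Q_1\cap Q_2$.

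For part~(2), recall that in $\mathrm{Rep}(D)$ one has $\lceil F\rceil=B_D\cap\dda(\bigvee F)$ and $\bigvee F\in F$ for every $F\in\mathcal{F}_{\tau}$. Condition (SS1) is essentially a single interpolation step: from $M\sqsubseteq\lceil F_1\rceil\cap\lceil F_2\rceil$ I obtain $m\ll\bigvee F_1\wedge\bigvee F_2$ for each $m\in M$ by multiplicativity of $\ll$ on $D$, interpolate once to find $b_0\in B_D$ with $M\ll b_0\ll\bigvee F_1\wedge\bigvee F_2$, and take $F=\{b_0\}$. For (SS2), from $G_i\subseteq\lceil F_i\rceil$ and $\bigvee G_i\in G_i$ I read off $\bigvee G_i\ll\bigvee F_i$; writing $d_1=\bigvee G_1\wedge\bigvee G_2$ and $d_2=\bigvee F_1\wedge\bigvee F_2$, multiplicativity gives $d_1\ll d_2$; two interpolations yield $b_1,b_0\in B_D$ with $d_1\ll b_1\ll b_0\ll d_2$, and $G=\{b_1\}$, $F=\{b_0\}$ satisfy all three inclusions required by (SS2). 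The key identity $\lceil G_1\rceil\cap\lceil G_2\rceil=B_D\cap\dda d_1$, which once more uses multiplicativity of $\ll$ on $D$, is what turns the last inclusion into the transitive step $b\ll d_1\ll b_1$.

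The \textbf{main obstacle} is picking the right witness in the multiplicativity half of part~(1). The natural first guess is to use $F$ from (SS2) as the witness in Lemma~\ref{l3.13}(2), but (SS2) only sandwiches $\lceil F\rceil$ (not $F$ itself) between the relevant $\lceil\cdot\rceil$-sets, and the kernel attribute operator need not be extensive, so in general $F\not\subseteq\lceil F\rceil$. One must therefore use $G$ as the witness and route through $G\subseteq\lceil F\rceil\subseteq\lceil H_1\rceil\cap\lceil H_2\rceil\subseteq Q_1\cap Q_2$. Getting the two inner pairs of (SS2) correctly aligned with $(F_i,H_i)$ is the only delicate bookkeeping in the argument.
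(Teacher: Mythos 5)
Your proof is correct, and for part~(1) it follows the paper's argument essentially verbatim: meets are realized as set intersections via (SS1), and multiplicativity of $\ll$ is obtained by first interpolating each $F_i\sqsubseteq Q_i$ to a pair $(F_i,H_i)$ and then feeding both pairs into (SS2), with the inner element $G$ (not $F$) serving as the witness in Lemma~\ref{l3.13}(2) --- exactly the routing the paper uses, and your closing remark about why $F$ cannot serve as the witness is the right observation. The only genuine divergence is in part~(2). The paper takes the witnesses to be the singletons $\{x_1\wedge x_2\}$ for (SS1) and $\{\bigvee F_1\wedge\bigvee F_2\}$, $\{\bigvee G_1\wedge\bigvee G_2\}$ for (SS2), which tacitly assumes that the meet of two basis elements again lies in $B_D$; this is not guaranteed for an arbitrary basis of a stably continuous semilattice. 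Your version instead applies multiplicativity to get $m\ll\bigvee F_1\wedge\bigvee F_2$ (resp. $d_1\ll d_2$) and then interpolates \emph{into the basis} once (resp. twice) before forming singletons, so that $F=\{b_0\}$ and $G=\{b_1\}$ are honestly elements of $\mathcal{F}_{\tau}$. This costs an extra use of the basis-interpolation property (which the paper already invokes in the proof of Theorem~\ref{t3.14}) but buys a witness construction that works for any basis, quietly repairing a small gap in the paper's own argument. Verify, as you do, that the identity $\lceil G_1\rceil\cap\lceil G_2\rceil= B_D\cap\dda(\bigvee G_1\wedge\bigvee G_2)$ really does use multiplicativity of $\ll$ on $D$; without it only the inclusion $\supseteq$ would hold, and the first inclusion of (SS2) would fail.
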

\begin{proof} (1) To show that $(\mathfrak{B}(P_a,\mathcal{F}_{\tau}),\subseteq)$ is a stably continuous semilattice, by Theorem~\ref{t3.14}, we only need to prove that
it is a semilattice and the way-below relation on it is multiplicative.

Assume that $Q_1,Q_2\in\mathfrak{B}(P_a,\mathcal{F}_{\tau})$ and $M\sqsubseteq Q_1\cap Q_2$. According to condition~(CA2), we have some $F_1,F_2\in \mathcal{F}_{\tau}$ such that $M\subseteq \lceil F_1\rceil\subseteq Q_1$ and
$M\subseteq \lceil F_2\rceil\subseteq Q_2$. Then $M\subseteq \lceil F_1\rceil\cap\lceil F_2\rceil\subseteq Q_1\cap Q_2$. By condition (SS1), there exists some $F\in\mathcal{F}_{\tau}$ such that $F\subseteq \lceil F_1\rceil\cap\lceil F_2\rceil\subseteq  Q_1\cap Q_2$ and $M\subseteq \lceil F\rceil$. This implies that $Q_1\cap Q_2\in\mathfrak{B}(P_a,\mathcal{F}_{\tau})$ and hence $(\mathfrak{B}(P_a,\mathcal{F}_{\tau}),\subseteq)$ is a semilattice.

Assume that $Q_1,Q_2$ and $Q_3$ are elements of $\mathfrak{B}(P_a,\mathcal{F}_{\tau})$ such that $Q_3\ll Q_1$ and $Q_3\ll Q_2$. Then by equation~(\ref{e3.2}), there exist some $G_i\in \mathcal{F}_{\tau}$ such that $Q_3 \subseteq \lceil G_i\rceil$ and $G_i\subseteq Q_i$, where $i=1,2$. According to condition~(CA2), we have some $F_i\in\mathcal{F}_{\tau}$ with $G_i\subseteq \lceil F_i\rceil\subseteq Q_i$, $i=1,2$. By condition~(SS2), it follows that
$Q_3\subseteq\lceil G_1\rceil\cap \lceil G_2\rceil\subseteq \lceil G\rceil~\text{and}~ G\subseteq \lceil F\rceil\subseteq \lceil F_1\rceil\cap \lceil F_2\rceil\subseteq Q_1\cap Q_2,$
for some $ F,G\in\mathcal{F}_{\tau}$. As a result, $Q_3\ll Q_1\cap Q_2$.

(2)
Let $(D,\leq)$ be a continuous bounded complete domain with a basis~$B_D$. It suffices to show that $Rep(D)$ satisfies condition~(SS1) and (SS2).

For condition~(SS1), let $F_1,F_2\in\mathcal{F}_{\tau}$ with $M\sqsubseteq \lceil F_1\rceil\cap \lceil F_2\rceil$. Then $M\subseteq\dda(\bigvee F_1)\cap \dda(\bigvee F_2)\cap B_D$. By interpolation of $\ll$, we have $M\subseteq \dda x_1\cap \dda x_2$ for some $x_1\ll \bigvee F_1$ and $x_2\ll \bigvee F_2$. Since $\ll$ is multiplicative, $M\ll x_1\wedge x_2$. Take $F=\{x_1\wedge x_2\}$, thus $F\subseteq \lceil F_1\rceil\cap \lceil F_2\rceil$ and $ M\subseteq\lceil F\rceil$.

For conditions~(SS2),  assume that $G_1\subseteq \lceil F_1\rceil$ and $G_2\subseteq \lceil F_2\rceil$, where $F_1,F_2,G_1,G_2\in\mathcal{F}_{\tau}$.
Then $\bigvee G_1\ll \bigvee F_1$ and $\bigvee G_2\ll \bigvee F_2$, and hence $\bigvee G_1\wedge \bigvee G_2\ll \bigvee F_1\wedge \bigvee F_2$. Take $F=\{\bigvee F_1\wedge \bigvee F_2\}$ and $G=\{\bigvee G_1\wedge \bigvee G_2\}$, we have $ \lceil G_1\rceil\cap \lceil G_2\rceil\subseteq \lceil G\rceil$ and $G\subseteq \lceil F\rceil\subseteq \lceil F_1\rceil\cap \lceil F_2\rceil)$.
\end{proof}

\section{The category of attribute continuous formal contexts}
   From a categorical view of  point, the previous section has provided object part correspondence between continuous domains and  attribute continuous formal contexts. In this section, we extend this relationship to a categorical equivalence.
 \subsection{$\mathcal{F}$-morphisms}
     On the side of continuous domains, Scott-continuous functions is typically used as the morphisms to form a category.
     In this subsection, we identify a notion of $\mathcal{F}$-morphism between continuous formal attribute  contexts which can be used to represent the Scott-continuous functions between continuous domains. Similar to the case of approximable concepts, the morphisms we defined are relations instead of mappings.

\begin{definition}
 Consider attribute continuous formal contexts $\mathbb{P}_{\tau}=(P_o,P_a,\vDash, \mathcal{F}_{\tau})$ and $\mathbb{P}'_{\tau'}=(P'_o,P'_a,\vDash', \mathcal{F'}_{\tau'})$. An $\mathcal{F}$-morphism from $\mathbb{P}_{\tau}$ to $\mathbb{P}'_{\tau'}$ is a relation~$\mathbb{H}\subseteq \mathcal{F}_{\tau}\times P_a'$, written as $\mathbb{H}:\mathbb{P}_{\tau}\rightarrow\mathbb{P}'_{\tau'}$, such that for any $F,G\in\mathcal{F}_{\tau}$, $F'\sqsubseteq \mathcal{F}'_{\tau'}$ and $M'\sqsubseteq P_a'$, the following conditions hold,
\begin{enumerate}[{\bf({AR}1)}]
\item $(F\mathbb{H} F',x'\in \lceil F'\rceil)\Rightarrow (F,x')\in\mathbb{H}$,
 \item $(G\subseteq \lceil F\rceil, (G,x')\in\mathbb{H})\Rightarrow (F,x')\in\mathbb{H}$,
 \item $F\mathbb{H}M'\Rightarrow (\exists F'\in\mathcal{F}'_{\tau'})(M'\subseteq F',F\mathbb{H}F')$,
 \item $(F,x')\in\mathbb{H}\Rightarrow(\exists F_x\in \mathcal{F}_{\tau})(\exists F_x'\in\mathcal{F}'_{\tau'})(F_x\subseteq \lceil F\rceil, x'\in \lceil F_x'\rceil, F_x\mathbb{H} F_x')$.
 \end{enumerate}
where $F\mathbb{H} X'$ means that $(F,x')\in\mathbb{H}$ for any $x'\in X'$.
\end{definition}

 Given attribute continuous formal context $\mathbb{P}_{\tau}=(P_o,P_a,\vDash, \mathcal{F}_{\tau})$, define a relation $\mathbb{H}\subseteq \mathcal{F}_{\tau}\times P_a$ by $(F,x)\in \mathbb{H}$ if and only if $x\in \lceil F\rceil$, then $\mathbb{H}$ is a special $\mathcal{F}$-morphism on $\mathbb{P}_{\tau}$.

\begin{proposition}\label{p4.2}
For attribute continuous formal contexts $\mathbb{P}_{\tau}=(P_o,P_a,\vDash, \mathcal{F}_{\tau})$ and $\mathbb{P}'_{\tau'}=(P'_o,P'_a,\vDash', \mathcal{F'}_{\tau'})$,
if a relation~$\mathbb{H}\subseteq \mathcal{F}_{\tau}\times P_a'$ satisfies conditions (AR1) and~(AR2), then conditions (AR3) and~(AR4) together are equivalent to the following one:
\begin{enumerate}[\bf{({AR}5)}]
\item $F\mathbb{H} M'\Rightarrow (\exists G\in \mathcal{F}_{\tau})(\exists G'\in\mathcal{F}'_{\tau})(G\subseteq \lceil F\rceil, M'\subseteq \lceil G'\rceil, G\mathbb{H} G')$.
 \end{enumerate}
\end{proposition}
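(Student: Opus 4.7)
The plan is to prove the two implications separately, with (AR1) and (AR2) available throughout as background hypotheses.

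For the forward direction (AR3)+(AR4) $\Rightarrow$ (AR5), I fix $F \in \mathcal{F}_{\tau}$ and a finite set $M' \sqsubseteq P_a'$ with $F\mathbb{H}M'$. For each $x' \in M'$, I apply (AR4) to $(F, x') \in \mathbb{H}$ to obtain $F_{x'} \in \mathcal{F}_{\tau}$ and $F'_{x'} \in \mathcal{F}'_{\tau'}$ satisfying $F_{x'} \subseteq \lceil F\rceil$, $x' \in \lceil F'_{x'}\rceil$, and $F_{x'}\mathbb{H}F'_{x'}$. I then form the finite unions $\hat{G} = \bigcup_{x' \in M'} F_{x'}$ and $\hat{G}' = \bigcup_{x' \in M'} F'_{x'}$, and use (CA1) to promote $\hat{G} \sqsubseteq \lceil F\rceil$ to an element $G \in \mathcal{F}_{\tau}$ with $G \subseteq \lceil F\rceil$ and $\hat{G} \subseteq \lceil G\rceil$. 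An elementwise application of (AR2), using $F_{x'} \subseteq \lceil G\rceil$ together with $F_{x'}\mathbb{H}F'_{x'}$, shows $G \mathbb{H} \hat{G}'$; then (AR3) supplies a $G' \in \mathcal{F}'_{\tau'}$ with $\hat{G}' \subseteq G'$ and $G\mathbb{H}G'$. Finally, the monotonicity property (A3) of $\lceil\cdot\rceil$, combined with $F'_{x'} \subseteq \hat{G}' \subseteq G'$ and $x' \in \lceil F'_{x'}\rceil$, yields $M' \subseteq \lceil G'\rceil$, which is precisely the third requirement of (AR5).

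For the backward direction (AR5) $\Rightarrow$ (AR3)+(AR4), condition (AR4) follows directly by specializing (AR5) to the singleton $M' = \{x'\}$: given $(F, x') \in \mathbb{H}$, applying (AR5) and setting $F_x = G$, $F'_x = G'$ gives exactly the data required. For (AR3), given $F\mathbb{H}M'$, I apply (AR5) to obtain $G \in \mathcal{F}_{\tau}$ and $G' \in \mathcal{F}'_{\tau'}$ with $G \subseteq \lceil F\rceil$, $M' \subseteq \lceil G'\rceil$, and $G\mathbb{H}G'$; then (AR1) applied to $G\mathbb{H}G'$ and each element of $\lceil G'\rceil$ yields $G\mathbb{H}\lceil G'\rceil$, and (AR2) with $G \subseteq \lceil F\rceil$ lifts this to $F\mathbb{H}\lceil G'\rceil$. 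Taking $F' = G'$ then furnishes the witness required by (AR3), since $M' \subseteq \lceil G'\rceil$ and $F\mathbb{H}F'$.

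The main obstacle lies in the forward direction, where one must keep careful bookkeeping of which closures and unions appear at each stage: the finite-union constructions $\hat{G}$ and $\hat{G}'$ are not a priori elements of the consistent selections, so (CA1) on the left side and (AR3) on the right side must be invoked at precisely the right moments to upgrade them to bona fide members of $\mathcal{F}_{\tau}$ and $\mathcal{F}'_{\tau'}$ while simultaneously preserving the relation $\mathbb{H}$ and the closure-containment condition $M' \subseteq \lceil G'\rceil$. The backward direction is, by contrast, essentially a diagonal application of (AR5) combined with (AR1) and (AR2).
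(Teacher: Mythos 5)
Your proof is correct and follows essentially the same route as the paper's: for the forward direction, a pointwise application of (AR4) to the elements of $M'$, aggregation of the witnesses on the left via (CA1) and (AR2) and on the right via (AR3), and then monotonicity of $\lceil\cdot\rceil$ to conclude $M'\subseteq\lceil G'\rceil$; for the converse, specialization of (AR5) to singletons for (AR4) and a combination of (AR5) with (AR1)/(AR2) for (AR3). The only (harmless) difference is that you apply (AR4) directly to the elements of $M'$ rather than first enlarging $M'$ to some $F'\in\mathcal{F}'_{\tau'}$ via (AR3) as the paper does, which slightly streamlines the argument without changing its substance.
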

\begin{proof}
Suppose that $\mathbb{H}\subseteq \mathcal{F}_{\tau}\times P_a'$ satisfies conditions (AR1---AR4). Let $F\in\mathcal{F}_{\tau}$ and $M'\sqsubseteq P'_{a}$ with $F\mathbb{H} M'$. Then by condition~(AR3), there exists some $F'\in\mathcal{F}'_{\tau'}$ such that $M'\subseteq F'$ and $F\mathbb{H} F'$.  Then for any $x'\in F'$, by condition~(AR4), there exist $F_{x'}\in \mathcal{F}_{\tau}$ and $F'_{x'}\in \mathcal{F}'_{\tau'}$ such that $F_{x'}\subseteq \lceil F\rceil$, $F_{x'}\mathbb{H}F'_{x'}$ and $x'\in \lceil F'_{x'}\rceil$.  Since $\bigcup_{x'\in M'}F_{x'}\sqsubseteq \lceil F\rceil$, by condition~(CA1), there exists $G\in\mathcal{F}_{\tau}$ such that $\bigcup_{x'\in M'}G_{x'}\sqsubseteq \lceil G\rceil$ and $G\sqsubseteq \lceil F\rceil$. According to condition~(AR2), we have $G\mathbb{H}\bigcup_{x'\in M'}G'_{x'}$. By condition (AR3), there exists $G'\in \mathcal{F}'_{\tau'}$ such that $\bigcup_{x'\in M'}G_{x'}\sqsubseteq G'$ and $G\mathbb{H}G'$. As has already mentioned, $\bigcup_{x'\in M'}F_{x'}\sqsubseteq G'$ and $x'\in \lceil F_{x'}\rceil$ for any $x'\in M$. So $M'\subseteq \lceil G'\rceil$ holds. In conclusion, condition~(AR5) follows.

Conversely, suppose that $\mathbb{H}\subseteq \mathcal{F}_{\tau}\times P_a'$ satisfies conditions (AR1), (AR4) and (AR5). Since condition (AR4) is a special case of (AR5),  it suffices to prove that $\mathbb{H}$ satisfies condition~(AR3). For this, suppose that $F\mathbb{H} M'$. Then by condition~(AR5), there exists $G\in\mathcal{F}_{\tau} $ and $G'\in \mathcal{F'}_{\tau}$ such that $G \subseteq\lceil F\rceil$, $G\mathbb{H} G'$ and $M'\subseteq \lceil G'\rceil$. For $G \subseteq\lceil F\rceil$ and $G\mathbb{H} G'$, with condition~(AR2), it follows that $F\mathbb{H} G'$.
\end{proof}

In fact, Proposition~\ref{p4.2} provides an alternative description of $\mathcal{F}$-morphism $\mathbb{H}: \mathbb{P}_{\tau}\rightarrow\mathbb{P'}_{\tau'}$. The following basic properties of $\mathcal{F}$-morphisms will be often used in the subsequent section.
\begin{proposition}\label{p4.3}
Let  $\mathbb{H}: \mathbb{P}_{\tau}\rightarrow\mathbb{P'}_{\tau'}$ be an $\mathcal{F}$-morphism. If $F,F_1 \in\mathcal{F}'_{\tau'}$ and $M'\sqsubseteq P'_{a}$, then the following statements hold.
\begin{enumerate}[(1)]
\item  $F\mathbb{H}M'$ if and only if there exists some $G\subseteq \lceil F\rceil$ such that $G\mathbb{H}M'$.
\item If $F_1\subseteq F$ and $F_1\mathbb{H}M'$, then $F\mathbb{H}M'$.
 \end{enumerate}
\end{proposition}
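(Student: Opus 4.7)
My plan is to handle the two parts of Proposition~\ref{p4.3} in sequence, using part~(1) as a lemma for part~(2).

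For part~(1), the ``if'' direction will be an immediate application of condition~(AR2): given $G\subseteq\lceil F\rceil$ and $G\mathbb{H}M'$, for every $x'\in M'$ one has $(G,x')\in\mathbb{H}$, and (AR2) promotes this to $(F,x')\in\mathbb{H}$, which is precisely $F\mathbb{H}M'$.

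For the ``only if'' direction of part~(1), I would start from $F\mathbb{H}M'$ and invoke the equivalent formulation provided by Proposition~\ref{p4.2}, namely condition~(AR5). This hands over $G\in\mathcal{F}_{\tau}$ and $G'\in\mathcal{F}'_{\tau'}$ with $G\subseteq\lceil F\rceil$, $M'\subseteq\lceil G'\rceil$, and $G\mathbb{H}G'$. It then remains to verify $G\mathbb{H}M'$: for each $x'\in M'\subseteq\lceil G'\rceil$, condition~(AR1) applied to $G\mathbb{H}G'$ together with $x'\in\lceil G'\rceil$ yields $(G,x')\in\mathbb{H}$, as required.

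For part~(2), the subtlety is that $F_1\subseteq F$ does not automatically imply $F_1\subseteq\lceil F\rceil$, so (AR2) cannot be applied to $F_1$ directly. The idea is to route through part~(1): from $F_1\mathbb{H}M'$, part~(1) will furnish some $G\subseteq\lceil F_1\rceil$ with $G\mathbb{H}M'$, and the monotonicity of $\lceil\cdot\rceil$, which follows from condition~(A3) of Definition~\ref{d5.5}, will force $\lceil F_1\rceil\subseteq\lceil F\rceil$. Hence $G\subseteq\lceil F\rceil$, and a final application of the backward implication of part~(1) delivers $F\mathbb{H}M'$.

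I do not foresee any serious obstacle. The only step requiring mild care is combining (AR5) with (AR1) in the forward direction of part~(1); once part~(1) is in place, part~(2) follows in essentially one line via the monotonicity of $\lceil\cdot\rceil$.
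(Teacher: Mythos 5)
Your proposal is correct and follows essentially the same route as the paper: the ``if'' direction of part~(1) is condition~(AR2), the ``only if'' direction combines (AR5) (available via Proposition~\ref{p4.2}) with (AR1), and part~(2) uses the monotonicity of $\lceil\cdot\rceil$ from (A3) to reduce to the same situation. Routing part~(2) through part~(1) rather than re-running (AR5) is only a cosmetic difference from the paper's argument.
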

\begin{proof}
(1) Suppose that $F\mathbb{H} M'$. Then by condition~(AR5), there exist $G\in\mathcal{F}_{\tau} $ and $G'\in \mathcal{F'}_{\tau}$ such that $G \subseteq\lceil F\rceil $, $G\mathbb{H} G'$ and $M'\subseteq \lceil G'\rceil $. For $G\mathbb{H} G'$, using condition~(AR1), we have  $G\mathbb{H} \lceil  G'\rceil $. From $M'\subseteq \lceil  G'\rceil $, it follows that $G\mathbb{H} M'$. Conversely, if $G\mathbb{H}M'$ for some $G\subseteq \lceil F\rceil$, then with condition~(AR2), it is obvious that  $F\mathbb{H}M'$.

(2) Assume that $F_1\subseteq F$ and $F_1\mathbb{H} M'$. For $F_1\mathbb{H} M'$, using condition~(AR5), we have $G\in\mathcal{F}_{\tau} $ and $G'\in \mathcal{F'}_{\tau}$ such that $G \subseteq\lceil F_1\rceil $, $G\mathbb{H} G'$ and $M'\subseteq \lceil G'\rceil $. Then $G\mathbb{H} M'$. Because $G \subseteq\lceil F_1\rceil $ and $F_1\subseteq F$, it follows that $G \subseteq\lceil F\rceil $. By condition~(AR2), we have $F\mathbb{H} M'$.
\end{proof}

Let  $\mathbb{H}: \mathbb{P}_{\tau}\rightarrow\mathbb{P'}_{\tau'}$ be an $\mathcal{F}$-morphism. For any subset $X$ of $P_{a}$,  define
\begin{equation}\label{e4.1}
\mathbb{H}(X)=\set{x'\in P'_{a}}{(\exists F\in\mathcal{F}_{\tau})(F\subseteq X,(F,x')\in\mathbb{H})}.
\end{equation}

The next proposition shows that the notion of $\mathcal{F}$-morphisms builds a passage from
continuous formal attribute concepts of an attribute continuous formal context to those of another one.
\begin{proposition}\label{p4.4}
Consider  an $\mathcal{F}$-morphism $\mathbb{H}: \mathbb{P}_{\tau}\rightarrow\mathbb{P'}_{\tau'}$.
\begin{enumerate}[(1)]
\item If $F\in \mathcal{F}_{\tau}$, then $\mathbb{H}(F)$ is a continuous formal attribute concept of $\mathbb{P'}_{\tau'}$, and hence $\mathbb{H}(F)\neq \emptyset$.
    \item  If $Q$ is a continuous formal concept of $\mathbb{P}_{\tau}$, then $\mathbb{H}(Q)$ is a continuous formal concept of $\mathbb{P'}_{\tau'}$.
 \end{enumerate}
\end{proposition}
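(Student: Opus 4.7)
The proof of both parts boils down to verifying condition (CA2) for the set $\mathbb{H}(F)$, respectively $\mathbb{H}(Q)$. Since part~(2) will reduce cleanly to part~(1) once Proposition~\ref{p3.11} is in play, the bulk of the work is part~(1). The essential engine is the equivalent reformulation (AR5) supplied by Proposition~\ref{p4.2}, which packages the existential content we need, together with axioms (AR1) and (AR2), which let us transport membership in $\mathbb{H}$ across the operator $\lceil\cdot\rceil$.

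For part~(1), I would fix $F \in \mathcal{F}_{\tau}$ and take an arbitrary $M' \sqsubseteq \mathbb{H}(F)$. Unpacking the definition~(\ref{e4.1}), each $y' \in M'$ admits a witness $G_{y'} \in \mathcal{F}_{\tau}$ with $G_{y'} \subseteq F$ and $(G_{y'}, y') \in \mathbb{H}$; Proposition~\ref{p4.3}(2) then promotes each to $(F, y') \in \mathbb{H}$, so $F \mathbb{H} M'$ holds (vacuously if $M' = \emptyset$). Applying (AR5) produces $G \in \mathcal{F}_{\tau}$ and $G' \in \mathcal{F}'_{\tau'}$ with $G \subseteq \lceil F \rceil$, $M' \subseteq \lceil G' \rceil$ and $G \mathbb{H} G'$; this $G'$ is the candidate witness. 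The key substep is $\lceil G' \rceil \subseteq \mathbb{H}(F)$: for $y' \in \lceil G' \rceil$, axiom (AR1) turns $G \mathbb{H} G'$ into $(G, y') \in \mathbb{H}$, and axiom (AR2), combined with $G \subseteq \lceil F \rceil$, upgrades this to $(F, y') \in \mathbb{H}$. Since $F \subseteq F$, this places $y'$ in $\mathbb{H}(F)$, concluding (CA2); nonemptiness of $\mathbb{H}(F)$ then follows automatically by taking $M' = \emptyset$.

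For part~(2), I would repeat the same pattern but exploit that $Q$ itself is a continuous formal concept. Given $M' \sqsubseteq \mathbb{H}(Q)$, collect per-element witnesses $G_{y'} \in \mathcal{F}_{\tau}$ with $G_{y'} \subseteq Q$ and $(G_{y'}, y') \in \mathbb{H}$. The finite union $\bigcup_{y' \in M'} G_{y'}$ is a finite subset of $Q$, so Proposition~\ref{p3.11}(2) yields some $F \in \mathcal{F}_{\tau}$ with $F \subseteq Q$ and $\bigcup_{y' \in M'} G_{y'} \subseteq \lceil F \rceil$. Axiom (AR2) converts each $(G_{y'}, y') \in \mathbb{H}$ to $(F, y') \in \mathbb{H}$, so $M' \subseteq \mathbb{H}(F)$. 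Applying part~(1) to $\mathbb{H}(F)$ produces $G' \in \mathcal{F}'_{\tau'}$ with $M' \subseteq \lceil G' \rceil \subseteq \mathbb{H}(F)$, and the trivial monotonicity $\mathbb{H}(F) \subseteq \mathbb{H}(Q)$ (immediate from $F \subseteq Q$ and the definition of $\mathbb{H}(\cdot)$) closes the argument. The main bookkeeping hurdle is to keep distinct the three roles played by elements of $\mathcal{F}_{\tau}$: the witnesses $G_{y'}$ from the definition of $\mathbb{H}(\cdot)$, the aggregating $F$ from Proposition~\ref{p3.11}(2), and the pair $(G, G')$ issued by (AR5); once those are separated, the rest of the verification is a short axiom chase.
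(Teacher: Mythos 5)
Your proof is correct, and for part (1) it takes a genuinely different route from the paper. The paper proves part (1) by first showing (via (AR3), (AR1) and (CA1)) that $\mathbb{H}(F)=\bigcup\set{\lceil F'\rceil}{F'\in\mathcal{F}'_{\tau'},\,F\mathbb{H}F'}$ and that this family is directed, then invoking the directed-union characterization of continuous formal concepts (Proposition~\ref{p3.12} together with Proposition~\ref{p3.10}); you instead verify (CA2) head-on, using Proposition~\ref{p4.3}(2) to get $F\mathbb{H}M'$ and then a single application of (AR5) followed by the (AR1)/(AR2) transport to show $M'\subseteq\lceil G'\rceil\subseteq\mathbb{H}(F)$. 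Your version is shorter and handles $M'=\emptyset$ uniformly, while the paper's version yields the extra structural fact that $\mathbb{H}(F)$ is the directed union of the basic concepts $\lceil F'\rceil$ with $F\mathbb{H}F'$, which it reuses later (e.g.\ in the proof of Theorem~\ref{t4.5}). For part (2) the two arguments coincide in substance --- collect per-element witnesses, aggregate them inside $Q$ via Proposition~\ref{p3.11}(2), promote with (AR2) --- except that you finish by citing part (1) for $\mathbb{H}(F)$ and the monotonicity $\mathbb{H}(F)\subseteq\mathbb{H}(Q)$, whereas the paper reapplies (AR3)/(AR5) directly to land $\lceil G'\rceil$ inside $\mathbb{H}(Q)$; your reduction is the cleaner bookkeeping. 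Both your argument and the paper's tacitly use that members of $\mathcal{F}_{\tau}$ are finite (so that $\bigcup_{y'\in M'}G_{y'}\sqsubseteq Q$), which matches the paper's intent though Definition~\ref{d3.2} does not state it explicitly.
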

\begin{proof}
(1) Suppose that  $F\in \mathcal{F}_{\tau}$. By part~(2) of Proposition~4.3 and equation 4.1, we have
\begin{equation}
\mathbb{H}(F)=\set{x'\in P'_{a}}{(F,x')\in\mathbb{H}}.
\end{equation}
  Moreover, we claim that $$\mathbb{H}(F)=\bigcup \set{\lceil F'\rceil}{F'\in \mathcal{F}_{\tau'}',F\mathbb{H} F'}.$$
  In fact, for any $x'\in \mathbb{H}(F)$, that is, $F\mathbb{H} x'$. By condition~(AR3), there exists $F'\in \mathcal{F}'_{\tau'}$ such that $\{x'\}\subseteq \lceil F'\rceil$ and $F\mathbb{H} F'$. According to condition~(AR1), it follows that $F\mathbb{H} \lceil F'\rceil$. So
  $x'\in \bigcup \set{\lceil F'\rceil}{F'\in \mathcal{F}_{\tau'}',F\mathbb{H} F'}$, and thus
  $ \mathbb{H}(F)\subseteq\bigcup \set{\lceil F'\rceil}{F'\in \mathcal{F}_{\tau'}',F\mathbb{H} F'}$. For the reverse inclusion, let $x'\in \bigcup \set{\lceil F'\rceil}{F'\in \mathcal{F}_{\tau'}',F\mathbb{H} F'}$. Then $x'\in \lceil F'\rceil $ for some $ F'\in \mathcal{F'}_{\tau}$ with $F\mathbb{H} F'$. From condition (AR1), it follows that $F\mathbb{H} \lceil F'\rceil $. Hence $x'\in \mathbb{H}(F)$, which implies that $\bigcup \set{\lceil F'\rceil}{F'\in \mathcal{F}_{\tau'}',F\mathbb{H} F'}\subseteq \mathbb{H}(F)$.

 We now prove that the set
 $\set{\lceil F'\rceil}{F'\in \mathcal{F}_{\tau'}',F\mathbb{H} F'}$ is directed. For this, let $F'_1,F'_2\in \mathcal{F}_{\tau'}$ such that  $F\mathbb{H} F_1'$ and $F\mathbb{H} F_2'$. Then $F\mathbb{H}( F_1'\cup F_2')$. By condition~(AR3), there exists $G'\in\mathcal{F'}$ such that $F_1'\cup F_2' \subseteq \lceil G'\rceil$  and $F\mathbb{H} G'$. As $G'\in \mathcal{F}'_{\tau'}$, by condition~(CA1), there exists $F_3'\in \mathcal{F}'_{\tau'}$ such that $F_1'\cup F_2' \subseteq \lceil F_3'\rceil$ and $F_3'\subseteq \lceil G'\rceil$. Note that $F_1'\cup F_2' \subseteq \lceil F_3'\rceil$, it follows that $\lceil F_1'\rceil \subseteq \lceil F_3'\rceil$ and $\lceil F_2'\rceil \subseteq \lceil F_3'\rceil$. From $F_3'\subseteq \lceil G'\rceil$ and $F\mathbb{H} G'$, by condition~(AR1), we have $F\mathbb{H} F_3$.  In summery, $\set{\lceil F'\rceil}{F'\in \mathcal{F}_{\tau'}',F\mathbb{H} F'}$ is directed.

 Proposition~\ref{p3.10} has proven that $\lceil F'\rceil$ is a continuous formal attribute concept of $\mathbb{P'}_{\tau'}$ for any $F'\in \mathcal{F}_{\tau'}'$.  Then by Proposition~\ref{p3.12}, $\mathbb{H}(F)=\bigcup \set{\lceil F'\rceil}{F'\in \mathcal{F}_{\tau'}',F\mathbb{H} F'}$ is  a continuous formal attribute concept of $\mathbb{P'}_{\tau'}$.

 (2) We show that $\mathbb{H}(Q)$ is a continuous formal concept of $\mathbb{P'}_{\tau'}$ by checking  $\mathbb{H}(Q)$ satisfies condition~(CA2). Let $M'\subseteq\mathbb{H}(Q)$. The subsequent reasoning is divided into two cases. Case~(i): $M'=\emptyset$. By Proposition~\ref{p3.11}, there exists $F\in\mathcal{F}_{\tau}$ such that $F\subseteq Q$, which implies that $\emptyset\neq \mathbb{H}(F)\subseteq \mathbb{H}(Q)$. Taking $x'\in \mathbb{H}(Q)$ ,we have some $F_{x'}\in\mathcal{F}_{\tau}$ such that $F_{x'}\subseteq Q$ and $(F_{x'},x)\in\mathbb{H}$. Using conditions~(AR2) and (AR1), there exists $F'\in\mathcal{F}'_{\tau'}$ satisfying $F_{x'}\mathbb{H}\lceil F'\rceil$. So $M'\subseteq\lceil F'\rceil\subseteq \mathbb{H}(Q)$. Case~(ii): $M'\neq \emptyset$. For any $m'\in M'$, there exists some $F_{m'}\in\mathcal{F}_{\tau}$ such that $F_{m'}\subseteq Q$ and $F_{m'}\mathbb{H} m'$. Since $\bigcup_{m'\in M'}F_{m'}\sqsubseteq Q$, we have some $F\in\mathcal{F}_{\tau}$ such that $F\subseteq Q$ and $\bigcup_{m'\in M'}F_{m'}\subseteq \lceil F\rceil$.  Thus by condition~(AR2), it follows that $F\mathbb{H} M'$.
 This implies that  $M'\subseteq \lceil G'\rceil$ and $F\mathbb{H} G'$ for some $G'\in \mathcal{F}_{\tau'}'$. Hence $M'\subseteq \lceil G'\rceil\subseteq \mathbb{H}(Q)$, as required.
\end{proof}

Now we turn to discuss how Scott-continuous functions between continuous domains can be captured by $\mathcal{F}$-morphisms between attribute continuous formal contexts.
 Let us start with attribute continuous formal contexts $\mathbb{P}_{\tau}=(P_o,P_a,\vDash, \mathcal{F}_{\tau})$ and $\mathbb{P}'_{\tau'}=(P'_o,P'_a,\vDash', \mathcal{F'}_{\tau'})$. And we first show that there is
  a one-to-one correspondence between $\mathcal{F}$-morphisms from $\mathbb{P}_{\tau}$ to $\mathbb{P}'_{\tau'}$ and Scott-continuous functions from $\mathfrak{B}((P_a,\mathcal{F}_{\tau}),\subseteq)$ to $\mathfrak{B}((P_a',\mathcal{F}'_{\tau'}),\subseteq)$.

\begin{theorem} \label{t4.5}
Let $\mathbb{P}_{\tau}$ and $\mathbb{P}'_{\tau'}$ be two attribute continuous formal contexts.
\begin{enumerate}[(1)]
 \item For any  $\mathcal{F}$-morphism~$\mathbb{H}: \mathbb{P}_{\tau}\rightarrow\mathbb{P'}_{\tau'}$, define a function
$\phi_{\mathbb{H}}:\mathfrak{B}(P_a,\mathcal{F}_{\tau})\rightarrow \mathfrak{B}(P_a,\mathcal{F}'_{\tau'})$ by
\begin{equation}\label{e4.3}
\phi_{\mathbb{H}}(Q)=\set{x'\in P'_{a}}{(\exists F\in\mathcal{F}_{\tau})(F\subseteq Q, (F,x')\in\mathbb{H})}.
\end{equation}
Then $\phi_{\mathbb{H}}$ is Scott-continuous.

\item For any Scott-continuous function $\phi : \mathfrak{B}(P_a,\mathcal{F}_{\tau})\rightarrow
    \mathfrak{B}(P_a',\mathcal{F}'_{\tau'})$, define
$\mathbb{H}_{\phi}\subseteq\mathcal{F}_{\tau}\times P'_{a}$ by
 \begin{equation}\label{e4.4}
(F,x')\in\mathbb{H}_{\phi}\Leftrightarrow x'\in \phi(\lceil F\rceil).
\end{equation}
Then $\mathbb{H}_{\phi}$ is an $\mathcal{F}$-morphism from $\mathbb{P}_{\tau}$ to $\mathbb{P}'_{\tau'}$.

\item $\mathbb{H}_{\phi_{\mathbb{H}}}=\mathbb{H}$ and  $\phi_{\mathbb{H}_{\phi}}=\phi$.
\end{enumerate}
\end{theorem}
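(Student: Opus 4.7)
The plan is to handle the three parts in sequence, relying on Proposition~\ref{p4.2} to streamline part~(2).

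For part~(1), I first observe that the set described in~(\ref{e4.3}) coincides with $\mathbb{H}(Q)$ from~(\ref{e4.1}), so Proposition~\ref{p4.4}(2) immediately yields $\phi_{\mathbb{H}}(Q)\in\mathfrak{B}(P_a',\mathcal{F}'_{\tau'})$ and hence well-definedness. Monotonicity of $\phi_{\mathbb{H}}$ is visible from the definition. For Scott-continuity proper: given a directed family $\{Q_i\}_{i\in I}$ in $\mathfrak{B}(P_a,\mathcal{F}_\tau)$, Lemma~\ref{l3.13}(1) identifies $\bigvee_i Q_i$ with $\bigcup_i Q_i$; any witness $F\in\mathcal{F}_\tau$ showing $x'\in\phi_{\mathbb{H}}(\bigcup_i Q_i)$ is finite, so directedness lands $F$ inside some single $Q_j$, giving $x'\in\phi_{\mathbb{H}}(Q_j)$. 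The reverse containment is monotonicity.

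For part~(2), well-definedness of $\phi(\lceil F\rceil)$ comes from Proposition~\ref{p3.10}. Rather than check (AR3) and (AR4) directly, I would verify (AR1), (AR2), and the equivalent reformulation (AR5) from Proposition~\ref{p4.2}. (AR1) comes from applying (CA2) to $F'\sqsubseteq\phi(\lceil F\rceil)$ to obtain $G'\in\mathcal{F}'_{\tau'}$ with $F'\subseteq\lceil G'\rceil\subseteq\phi(\lceil F\rceil)$, then invoking~(\ref{e3.1}) to push $\lceil F'\rceil\subseteq\lceil G'\rceil$. (AR2) is immediate from~(\ref{e3.1}) and monotonicity of $\phi$. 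For (AR5), given $M'\subseteq\phi(\lceil F\rceil)$, Proposition~\ref{p3.11}(2) supplies $G'\in\mathcal{F}'_{\tau'}$ with $M'\subseteq\lceil G'\rceil$ and $G'\subseteq\phi(\lceil F\rceil)$; Proposition~\ref{p3.12} rewrites $\lceil F\rceil$ as the directed union of $\lceil H\rceil$ over $H\in\mathcal{F}_\tau$ with $H\subseteq\lceil F\rceil$, whence Scott-continuity of $\phi$ presents $\phi(\lceil F\rceil)$ as a directed union, and the finite set $G'$ is absorbed into some single $\phi(\lceil G\rceil)$.

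Part~(3) is largely bookkeeping. For $\mathbb{H}_{\phi_{\mathbb{H}}}=\mathbb{H}$, the condition $(F,x')\in\mathbb{H}_{\phi_{\mathbb{H}}}$ unfolds to: some $G\in\mathcal{F}_\tau$ with $G\subseteq\lceil F\rceil$ satisfies $(G,x')\in\mathbb{H}$. The reverse implication is (AR2); the forward one runs (AR4) to obtain $F_x,F'_x$ and then (AR1) to conclude $(F_x,x')\in\mathbb{H}$. For $\phi_{\mathbb{H}_\phi}=\phi$, the inclusion $\phi_{\mathbb{H}_\phi}(Q)\subseteq\phi(Q)$ uses Proposition~\ref{p3.11}(1) to deduce $\lceil F\rceil\subseteq Q$ from $F\subseteq Q$, then monotonicity of $\phi$; the reverse presents $Q$ as a directed supremum of such $\lceil F\rceil$ via Proposition~\ref{p3.12} and swaps $\bigcup$ with $\phi$ using Scott-continuity. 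The main obstacle I foresee is condition~(AR3) as literally stated, which demands $F'\in\mathcal{F}'_{\tau'}$ with $M'\subseteq F'$ (not merely $M'\subseteq\lceil F'\rceil$) sitting inside $\phi(\lceil F\rceil)$; neither Proposition~\ref{p3.11} nor (CA2) delivers a selection element \emph{containing} a prescribed finite set, so I would route through the weaker (AR5) and invoke Proposition~\ref{p4.2} as the intended labor-saving device.
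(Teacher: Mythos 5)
Your proposal is correct and follows essentially the same route as the paper: well-definedness of $\phi_{\mathbb{H}}$ via Proposition~\ref{p4.4}(2), Scott-continuity via Lemma~\ref{l3.13}(1) and finiteness of the witness $F$, verification of (AR1), (AR2) and (AR5) with Proposition~\ref{p4.2} standing in for (AR3)/(AR4) exactly as the paper does, and the same unfoldings for part~(3). The only cosmetic difference is in (AR5), where you apply Proposition~\ref{p3.11}(2) before absorbing into a single $\phi(\lceil G\rceil)$ while the paper does these two steps in the opposite order; both are valid.
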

\begin{proof}
 (1) Let $\mathbb{H}: \mathbb{P}_{\tau}\rightarrow\mathbb{P'}_{\tau'}$ be an $\mathcal{F}$-morphism.
 From part~(2) of Proposition~\ref{p4.3}, the function $\phi_{\mathbb{H}}$ is well-defined. With equation~(\ref{e4.3}), it is clear that $\phi_{\mathbb{H}}$ is order-preserving. Then  $\set{\phi_{\mathbb{H}}(Q_i)}{i\in I}$ is a directed subset  of $\mathfrak{B}(P'_a,\mathcal{F}_{\tau'}')$ for any directed subset $\set{Q_i}{i\in I}$ of $\mathfrak{B}(P_a,\mathcal{F}_{\tau})$. By part~(1) of Lemma~\ref{l3.13}, we know that $\bigvee_{i\in I}Q_i=\bigcup_{i\in I}Q_i$ and $\bigvee_{i\in I}\phi_{\mathbb{H}}(Q_i)=\bigcup_{i\in I}\phi_{\mathbb{H}}(Q_i)$.
   Now we show that $\phi_{\mathbb{H}}$ is Scott-continuous by checking that $\phi_{\mathbb{H}}(\bigcup_{i\in I}Q_i)=\bigcup_{i\in I}\phi_{\mathbb{H}}(Q_i)$.
Since $\bigcup_{i\in I}\phi_{\mathbb{H}}(Q_i)\subseteq \phi_{\mathbb{H}}(\bigcup_{i\in I}Q_i)$ is clear, we need only to
 prove the reverse inclusion holds. Suppose that $x'\in \phi_{\mathbb{H}}(\bigcup_{i\in I}Q_i)$, then there exists some $F\in \mathcal{F}_{\tau}$ such that $F\sqsubseteq \bigcup_{i\in I}Q_i$ and $(F,x')\in\mathbb{H}$. This implies that $F\sqsubseteq Q_j$ for some $j\in I$. Thus $x'\in  \phi_{\mathbb{H}}(Q_j)\subseteq\bigcup_{i\in I}\phi_{\mathbb{H}}(Q_i)$, and hence $\phi_{\Theta}(\bigcup_{i\in I}Q_i)\subseteq\bigcup_{i\in I}\phi_{\mathbb{H}}(Q_i)$.

(2) Suppose that $\phi$ is  a Scott-continuous function from $\mathfrak{B}(P_a,\mathcal{F}_{\tau})$ to $\mathfrak{B}(P_a',\mathcal{F}'_{\tau'})$. We show that the relation $\mathbb{H}_{\phi}$ is an $\mathcal{F}$-morphism from $\mathbb{P}_{\tau}$ to $\mathbb{P}'_{\tau'}$ by checking that $\mathbb{H}_{\phi}$ satisfies conditions~(AR1), (AR2) and (AR5). Take $F, G\in\mathcal{F}_{\tau}$, $F'\sqsubseteq \mathcal{F}'_{\tau'}$ and $M'\sqsubseteq P_a$.

For condition~(AR1), if $F\mathbb{H}_{\phi}F'$ and $x'\in \lceil F'\rceil$, then $F'\subseteq \phi(\lceil F\rceil)$. Since $\lceil F\rceil\in \mathfrak{B}(P_a,\mathcal{F}_{\tau})$, by part~(2) of Proposition~\ref{p4.4}, it follows that $\phi(\lceil F\rceil)\in \mathfrak{B}(P_a',\mathcal{F}_{\tau'}')$. This implies that $\lceil F'\rceil \subseteq \phi(\lceil F\rceil)$. Thus $F\mathbb{H}_{\phi} \lceil F'\rceil $, and hence $(F,x')\in \mathbb{H}_{\phi}$.

For condition~(AR2), if $G\subseteq \lceil F\rceil$ and $(G,x')\in\mathbb{H}_{\phi}$, then $\lceil G\rceil\subseteq\lceil F\rceil$ and $x'\in \phi(\lceil G\rceil)$. This implies that $x'\in \phi (\lceil F\rceil)$, that is, $(F,x')\in \mathbb{H}_{\phi}$.

 For condition~(AR5), if $F \mathbb{H}_{\phi}M'$, then $M'\sqsubseteq\phi(\lceil F\rceil)$. Since $\lceil F\rceil$ is the directed union of the set $\set{\lceil G\rceil}{G\in\mathcal{F}_{\tau},G\sqsubseteq \lceil F\rceil}$ and $\phi$ is Scott-continuous, we have  $$\phi(\lceil F\rceil)=\phi(\bigcup \set{\lceil G\rceil}{G\in\mathcal{F}_{\tau},G\sqsubseteq \lceil F\rceil})=\bigcup \set{\phi(\lceil G\rceil)}{G\in\mathcal{F}_{\tau},G\sqsubseteq \lceil F\rceil}.$$ Thus $M'\sqsubseteq \phi(\lceil G\rceil)$ for some $G\in\mathcal{F}_{\tau}$ with $G\subseteq \lceil F\rceil$. To $M'\sqsubseteq \phi(\lceil G\rceil)$, using part~(2) of Proposition~\ref{l3.13}, we have some $G'\in \mathcal{F}_{\tau'}'$  satisfying $G'\subseteq \phi(\lceil G\rceil)$ and $M'\subseteq \lceil G'\rceil$. To sum up, there exist some $G\in\mathcal{F}_{\tau}$
and $G'\in \mathcal{F}_{\tau'}'$ such that $G\subseteq \lceil F\rceil$,  $M'\subseteq\lceil G'\rceil$ and $G\mathbb{H}_{\phi}G'$.

(3) For any $F\sqsubseteq X$ and $x'\in P_a'$, we have
\begin{align*}
(F,x')\in \mathbb{H}_{\phi_{\mathbb{H}}}&\Leftrightarrow x'\in \phi_{\mathbb{H}}(\lceil F\rceil)\\
&\Leftrightarrow (\exists G\in \mathcal{F}_{\tau})(G\subseteq \lceil F\rceil, (G,x')\in\mathbb{H})\\
&\Leftrightarrow (F,x')\in\mathbb{H}).
\end{align*}
This proves that $\mathbb{H}_{\phi_{\mathbb{H}}}=\mathbb{H}$.

For any $Q \in \mathfrak{B}(P_a,\mathcal{F}_{\tau})$, we have
\begin{align*}
\phi_{\mathbb{H}_{\phi}}(Q)&=\set{x'\in P_a'}{(\exists F\in\mathcal{F}_{\tau})(F\sqsubseteq Q,(F,x')\in\mathbb{H}_{\phi})}\\
&=\set{x'\in P_a'}{(\exists F\in\mathcal{F}_{\tau})(F\sqsubseteq Q,x\in\phi(\lceil F\rceil))}\\
&=\bigcup \set{\phi(\lceil F\rceil)}{(\exists F\in\mathcal{F}_{\tau})F\sqsubseteq Q}\\
&=\phi(\bigcup\set{\lceil F\rceil}{(\exists F\in\mathcal{F}_{\tau})F\sqsubseteq Q})\\
&=\phi(Q).
\end{align*}
This proves that $\phi_{\mathbb{H}_{\phi}}=\phi$.
\end{proof}
 Next, we investigate the relationship between Scott-continuous functions from $(D,\leq)$ to $(D',\leq')$ and
 $\mathcal{F}$-morphisms from $Rep(D)$ to  $Rep(D')$. Before this, we need the following two Lemmas.

 \begin{lemma}\label{l4.6} Let $(D,\leq)$ be a continuous domain with a basis $B_D$, and $(D',\leq')$  a continuous domain with a basis $B_{D'}$.
 If $f: D\rightarrow D'$ is a  Scott continuous function, then for any $x\in D$, we have
 \begin{equation}\label{e4.5}
 \dda f(x) \cap B_{D'}=\set{x'\in B_{D'}}{(\exists y\in B_{D})(y\ll x,x'\ll'f(y))}.
 \end{equation}
 \end{lemma}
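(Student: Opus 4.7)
The plan is to prove the equality by establishing the two set inclusions separately. Let me denote the right-hand side by $R$.

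For the inclusion $R \subseteq \dda f(x) \cap B_{D'}$, suppose $x' \in B_{D'}$ and there exists $y \in B_D$ with $y \ll x$ and $x' \ll' f(y)$. From $y \ll x$ I get $y \leq x$, so by monotonicity of the Scott-continuous function $f$ (which follows from Scott-continuity), $f(y) \leq' f(x)$. Combining $x' \ll' f(y) \leq' f(x)$ yields $x' \ll' f(x)$, and since $x' \in B_{D'}$ by assumption, we have $x' \in \dda f(x) \cap B_{D'}$. This direction is routine.

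The reverse inclusion $\dda f(x) \cap B_{D'} \subseteq R$ is the substantive one. Suppose $x' \in B_{D'}$ with $x' \ll' f(x)$. I would proceed as follows. First, since $B_D$ is a basis of $(D, \leq)$, the set $B_D \cap \dda x$ is directed with $\bigvee (B_D \cap \dda x) = x$. Applying Scott-continuity of $f$,
\[
f(x) = f\!\left(\bigvee (B_D \cap \dda x)\right) = \bigvee \{\, f(y) \mid y \in B_D,\, y \ll x \,\},
\]
and the latter set is directed in $D'$ by monotonicity. Now I invoke the interpolation property of $\ll'$ on $D'$ (Proposition~2.2) to get some $z \in D'$ with $x' \ll' z \ll' f(x)$. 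Applying the definition of $z \ll' f(x)$ to the directed supremum above yields some $y \in B_D$ with $y \ll x$ and $z \leq' f(y)$. Then $x' \ll' z \leq' f(y)$ gives $x' \ll' f(y)$, placing $x'$ into $R$.

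The step I expect to be the only subtlety is the use of interpolation: without it, one would only be able to conclude $x' \leq' f(y)$ from the definition of way-below applied to $x' \ll' f(x)$, which is not enough to establish the desired $x' \ll' f(y)$. Interpolation provides the intermediate element $z$ that upgrades a $\leq'$ conclusion into the required $\ll'$ conclusion.
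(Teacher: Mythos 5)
Your proof is correct and follows essentially the same route as the paper: both rewrite $f(x)$ as the directed supremum $\bigvee f(\dda x\cap B_D)$ via Scott-continuity of $f$ and the basis property of $B_D$, and then unwind the definition of the way-below relation. If anything you are more careful than the paper, whose proof dismisses the final equivalence as following from ``the definition of $\ll'$ and the monotonicity of $f$'' --- that alone only yields $x'\leq' f(y)$ for some $y\ll x$, and the explicit interpolation step you supply (an intermediate $z$ with $x'\ll' z\ll' f(x)$) is exactly what is needed to upgrade this to the required $x'\ll' f(y)$.
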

 \begin{proof}
 For any $x\in D$ and $x'\in B_{D'}$, since
 \begin{align*}
x'\ll' \dda f(x)&\Leftrightarrow x'\ll' f(\bigvee(\dda x\cap B_D))\\
&\Leftrightarrow x'\ll' \bigvee f(\dda x\cap B_D)\\
&\Leftrightarrow (\exists y\in B_{D})(y\ll x,x'\ll'f(y)).
\end{align*}
The last equivalence follows from the definition of $\ll'$ and the monotonicity of $f$.
\end{proof}
 \begin{lemma}\label{l4.7}
 Let $(D,\leq)$ be a continuous domain with a basis $B_D$ and $(D',\leq')$  a continuous domain with a basis $B_{D'}$.
  For any $\mathcal{F}$-morphism $\mathbb{G}$ from  $Rep(D)=(D,B_D,\geq,\mathcal{F}_{\tau})$ to  $Rep(D)=(D',B_D',\geq',\mathcal{F}_{\tau'})$ and $x\in D$, put
 \begin{equation}\label{e4.6}
  I_x= \set{x'\in L' }{ (\exists F\in \mathcal{F}_\tau)(F\subseteq \dda x\cap B_D, (F,x')\in \mathbb{G})}.
 \end{equation}
 Then  $I_x$ has a least upper bound~$\bigvee I_x $ in $D'$.
\end{lemma}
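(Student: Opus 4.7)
The plan is to prove the lemma by first establishing that $I_x$ is a directed subset of $D'$, and then invoking the fact that $D'$, being a continuous domain, is a dcpo, so that $\bigvee I_x$ exists automatically.

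For nonemptiness, I would note that $B_D\cap\dda x$ is directed by the basis axiom, hence nonempty, and pick any $y\in B_D\cap\dda x$. Then $\{y\}\in\mathcal{F}_\tau$ since $\bigvee\{y\}=y\in\{y\}$, and $\{y\}\subseteq\dda x\cap B_D$. By Proposition~\ref{p4.4}(1), $\mathbb{G}(\{y\})$ is a continuous formal concept of $Rep(D')$, in particular nonempty; any $x'\in\mathbb{G}(\{y\})$ satisfies $(\{y\},x')\in\mathbb{G}$ (the only nonempty $F\in\mathcal{F}_\tau$ with $F\subseteq\{y\}$ being $\{y\}$ itself) and therefore belongs to $I_x$.

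The main step, and the principal obstacle, is to show that $I_x$ is directed. Given $x'_1,x'_2\in I_x$, I would fix $F_i\in\mathcal{F}_\tau$ with $F_i\subseteq\dda x\cap B_D$ and $(F_i,x'_i)\in\mathbb{G}$ for $i=1,2$; then $\bigvee F_1,\bigvee F_2\ll x$. The interpolation property applied to the finite set $\{\bigvee F_1,\bigvee F_2\}\ll x$ yields $z\in D$ with $\bigvee F_1,\bigvee F_2\ll z\ll x$, and directedness of $B_D\cap\dda x$ with supremum $x$ lets me pick $y\in B_D\cap\dda x$ with $z\leq y$. For every $f\in F_i$, $f\leq\bigvee F_i\ll z\leq y$, so $F_i\subseteq\dda y\cap B_D=\lceil\{y\}\rceil$. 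Condition~(AR2) then gives $(\{y\},x'_i)\in\mathbb{G}$, i.e.\ $\{y\}\mathbb{G}\{x'_1,x'_2\}$, and condition~(AR5) produces $G\in\mathcal{F}_\tau$ and $G'\in\mathcal{F}'_{\tau'}$ with $G\subseteq\lceil\{y\}\rceil\subseteq\dda x\cap B_D$, $\{x'_1,x'_2\}\subseteq\lceil G'\rceil$, and $G\mathbb{G}G'$. Setting $x'_3:=\bigvee G'\in G'$ gives $(G,x'_3)\in\mathbb{G}$, whence $x'_3\in I_x$; and $\{x'_1,x'_2\}\subseteq\lceil G'\rceil=\dda x'_3\cap B_{D'}$ yields $x'_1,x'_2\leq' x'_3$.

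With $I_x$ nonempty and directed, $\bigvee I_x$ exists in $D'$. The delicate part is the directedness argument, where the interpolation property of $\ll$, the basis structure of $B_D$, and conditions~(AR2) and~(AR5) must be combined to compress the witnesses $F_1,F_2$ into a single basis element $y$ with $\bigvee F_1,\bigvee F_2\ll y\ll x$, which then permits~(AR5) to supply a common upper bound $x'_3$ inside $I_x$.
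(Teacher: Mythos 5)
Your proof is correct, but it takes a genuinely different route from the paper's. The paper's argument is essentially one line: it observes that $I_x=\mathbb{G}(\dda x\cap B_D)$ by comparing equations~(\ref{e4.1}) and~(\ref{e4.6}), recalls from the proof of Theorem~\ref{t3.14} that $\dda x\cap B_D$ is a continuous formal concept of $Rep(D)$, and then applies part~(2) of Proposition~\ref{p4.4} to conclude that $I_x$ is a continuous formal concept of $Rep(D')$; directedness then comes for free from the characterization of such concepts via condition~(R2) (equivalently Proposition~\ref{p3.12}), and the supremum exists because $D'$ is a dcpo. You instead re-derive directedness from scratch: interpolation compresses the two witnesses $F_1,F_2$ into a single basis element $y$ with $\bigvee F_1,\bigvee F_2\ll y\ll x$, condition~(AR2) transports $x_1',x_2'$ to the pair $(\{y\},x_i')$, and condition~(AR5) then manufactures the common upper bound $x_3'=\bigvee G'$ inside $I_x$; your nonemptiness argument via Proposition~\ref{p4.4}(1) applied to a singleton is also sound. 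What the paper's route buys is brevity and reuse of the already-established correspondence between $\mathcal{F}$-morphisms and concepts; what yours buys is a self-contained verification that makes explicit exactly how the axioms (AR2) and (AR5) interact with the way-below relation and the basis structure of $Rep(D)$ --- in effect you unfold the proofs of Propositions~\ref{p4.4} and~\ref{p3.12} in this special case. Both are valid; no gap.
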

\begin{proof}
For any $x\in D$, as has been stated  $\dda x\cap B_D$ is a
 continuous formal concepts of $Rep(D)$. With equations~(\ref{e4.1}) and (\ref{e4.6}), it is clear that  $I_x=\mathbb{G}(\dda x\cap B_D)$. Then from part~(2) of Proposition~\ref{p4.4}, it follows that $I_x$ is a
 continuous formal concepts of $Rep(D')$. This implies that $I_x$ is a directed subset of $D'$, and hence $\bigvee I_x \in D'$.
\end{proof}
The following result tells us that there is a one-to-one correspondence between Scott-continuous functions from $(D,\leq)$ to $(D',\leq')$ and
 $\mathcal{F}$-morphisms from $Rep(D)$ to  $Rep(D')$.
\begin{theorem}
Let $(D,\leq)$ be a continuous domain with a basis $B_D$, and $(D',\leq')$  a continuous domain with a basis $B_{D'}$.
\begin{enumerate}[(1)]
 \item
 Consider a Scott continuous function $f: D\rightarrow D'$. For any $F\sqsubseteq B_D$ with $\bigvee F\in F$ and $x'\in B_{D'}$, define a relation $\mathbb{G}_f$
 by
 \begin{equation}\label{e4.7}
 (F,x')\in\mathbb{G}_{f}\Leftrightarrow x'\ll' f(\bigvee F).
 \end{equation}
 Then $\mathbb{G}_{f}$ is an $\mathcal{F}$-morphism from $Rep(D)$ to  $Rep(D')$.
  \item  For any $\mathcal{F}$-morphism $\mathbb{G}$ from $Rep(D)$ to  $Rep(D')$ and $x\in D$, define
     \begin{equation}\label{e4.8}
       f_{\mathbb{G}}(x)=\bigvee \set{x'\in D'}{(\exists F\in \mathcal{F}_{\tau})(F\subseteq \dda x\cap B_{D}, (F,x')\in \mathbb{G})}.
       \end{equation}
  Then $f_{\mathbb{G}}:D\rightarrow D'$ is a Scott-continuous function.
   \item Moreover $f=f_{\mathbb{G}_{f}}$ and $\mathbb{G}= \mathbb{G}_{f_{\mathbb{G}}}$.
  \end{enumerate}
\end{theorem}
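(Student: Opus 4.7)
The plan is to verify the three assertions in turn, leaning heavily on the explicit descriptions
\[
\mathcal{F}_\tau = \set{F \sqsubseteq B_D}{\textstyle\bigvee F \in F}, \qquad \lceil F\rceil = B_D \cap \dda({\textstyle\bigvee F})
\]
from the proof of Theorem~\ref{t3.14}, together with the interpolation property of $\ll$, the Scott continuity of $f$, and the streamlined characterization of $\mathcal{F}$-morphisms by conditions (AR1), (AR2), (AR5) given in Proposition~\ref{p4.2}.

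For part~(1), I would check (AR1), (AR2), (AR5) for $\mathbb{G}_f$. Conditions (AR1) and (AR2) are essentially unpacking of the definitions: (AR1) because $F\mathbb{G}_f F'$ forces $\bigvee F' \ll' f(\bigvee F)$ and hence every $x' \in \lceil F'\rceil = \dda(\bigvee F') \cap B_{D'}$ satisfies $x' \ll' f(\bigvee F)$; (AR2) because $G \subseteq \lceil F\rceil$ forces $\bigvee G \ll \bigvee F$, whence $f(\bigvee G) \leq' f(\bigvee F)$. The only substantive verification is (AR5): from $M' \ll' f(\bigvee F)$, interpolation in $D'$ and the basis $B_{D'}$ produce $b' \in B_{D'}$ with $M' \ll' b' \ll' f(\bigvee F)$; since $\dda(\bigvee F) \cap B_D$ is directed with supremum $\bigvee F$ and $f$ is Scott continuous, some $y \in \dda(\bigvee F)\cap B_D$ satisfies $b' \ll' f(y)$, and the singletons $G = \{y\}$, $G' = \{b'\}$ are the required witnesses.

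For part~(2), well-definedness of $f_{\mathbb{G}}$ is given by Lemma~\ref{l4.7}, and monotonicity is immediate from $\dda x \cap B_D \subseteq \dda y \cap B_D$ whenever $x \leq y$. The substantive content is Scott continuity, where the $\leq$ direction is monotonicity and the $\geq$ direction proceeds as follows: given a directed $D_0 \subseteq D$ and any $y' \in I_{\bigvee D_0}$, one obtains $F \in \mathcal{F}_\tau$ with $F \subseteq \dda(\bigvee D_0) \cap B_D$ and $(F, y') \in \mathbb{G}$; applying interpolation to the finite set $F \ll \bigvee D_0$ gives $z \in D$ with $F \ll z \ll \bigvee D_0$, then $z \leq d$ for some $d \in D_0$, which upgrades to $F \ll d$ and hence $y' \in I_d \subseteq \da f_{\mathbb{G}}(d)$.

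For part~(3), the identity $f = f_{\mathbb{G}_f}$ is a routine double inequality at each $x \in D$, using Scott continuity of $f$ for one direction and Lemma~\ref{l4.6} together with continuity of $D'$ for the other. The inclusion $\mathbb{G} \subseteq \mathbb{G}_{f_{\mathbb{G}}}$ follows by applying (AR4) to $(F,x') \in \mathbb{G}$, producing $F_x \subseteq \lceil F\rceil$ and $F'_x$ with $F_x \mathbb{G} F'_x$ and $x' \ll' \bigvee F'_x$; since $\bigvee F'_x \in I_{\bigvee F}$, it is bounded above by $f_{\mathbb{G}}(\bigvee F)$, giving $x' \ll' f_{\mathbb{G}}(\bigvee F)$. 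The main obstacle is the reverse inclusion $\mathbb{G}_{f_{\mathbb{G}}} \subseteq \mathbb{G}$: from $x' \ll' f_{\mathbb{G}}(\bigvee F) = \bigvee I_{\bigvee F}$, the naive application of directedness of $I_{\bigvee F}$ only yields $x' \leq' y'$ for some $y' \in I_{\bigvee F}$, which is too weak to feed into (AR1). My strategy is to interpolate in $D'$ and use the basis to strengthen this: obtain $b' \in B_{D'}$ with $x' \ll' b' \ll' f_{\mathbb{G}}(\bigvee F)$, so $b' \leq' y'$ for some $y' \in I_{\bigvee F}$ and hence $x' \ll' y'$, i.e.\ $x' \in \lceil \{y'\}\rceil$. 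Unpacking $y' \in I_{\bigvee F}$ gives $G \in \mathcal{F}_\tau$ with $G \subseteq \lceil F\rceil$ and $(G, y') \in \mathbb{G}$, so (AR2) yields $(F,y') \in \mathbb{G}$, and finally (AR1) applied to the singleton $\{y'\} \in \mathcal{F}'_{\tau'}$ delivers $(F, x') \in \mathbb{G}$.
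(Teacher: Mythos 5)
Your proposal is correct and follows essentially the same route as the paper: verifying (AR1), (AR2), (AR5) for $\mathbb{G}_f$ via interpolation in $D'$ and Scott continuity, proving continuity of $f_{\mathbb{G}}$ by showing $I_{\bigvee S}\subseteq\bigcup_{d\in S}I_d$ through interpolation of $\bigvee F\ll\bigvee S$, and using Lemma~\ref{l4.6} for $f=f_{\mathbb{G}_f}$. In fact your treatment of the inclusion $\mathbb{G}_{f_{\mathbb{G}}}\subseteq\mathbb{G}$ is more careful than the paper's, whose equivalence chain silently passes from $x'\ll'\bigvee I_{\bigvee F}$ to the existence of $y'\in I_{\bigvee F}$ with $x'\ll' y'$ --- a step that requires the interpolation-plus-basis argument you supply, not mere directedness of $I_{\bigvee F}$.
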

\begin{proof}
(1) For any Scott-continuous function $f : D \rightarrow D'$, we check that the relation $\mathbb{G}_f$ is an $\mathcal{F}$-morphism  from $Rep(D)$ to  $Rep(D')$ in the following:

For condition (AR1), if $F'\in \mathcal{F}_{\tau'}'$ and $F\mathbb{G}_f F'$, then $F'\ll' f(\bigvee F)$. Note that $\bigvee F'\in F'$, it follows that $\bigvee F'\ll'f(\bigvee F)$. This implies that $x'\ll'f(\bigvee F)$ and $(F,x')\in \mathbb{G}_f$ for any $x'\in\dda \bigvee F=\lceil F\rceil$.

For condition (AR2), suppose that $F\in\mathcal{F}_{\tau}$, $G\subseteq \lceil F\rceil$ and $(G,x')\in \mathbb{G}_f$. Then $G\ll \bigvee F$ and $x'\ll'f(\bigvee G)$. As $f$ is order-preserving, we have $x'\ll'f(\bigvee F)$. This means that  $(F,x')\in \mathbb{G}_f$.

 For condition (AR5), suppose that $F\mathbb{G}_f M'$ with $M'\sqsubseteq D'$. Then $M'\ll' f(\bigvee F)$. By the interpolation property of $\ll'$, there exists some $d'\in B_{D'}$ such that $M'\ll' d'\ll' f(\bigvee F)$.  Note that $f(\bigvee F)=f(\bigvee (\dda (\bigvee F)))=\bigvee (\dda f(\bigvee F))$, we have some $d\in \dda (\bigvee F)\cap B_D$ with $d'\ll' f(d)$. Set $G=\{d\}$ and $G'=\{d'\}\cup M'$. Thus $G\in \mathcal{F}_{\tau}$ and $G'\in \mathcal{F}_{\tau'}'$ such that $G\sqsubseteq \lceil F\rceil$, $M'\subseteq \lceil G'\rceil$ and $G\mathbb{G}_f G'$.

(2) Given an $\mathcal{F}$-morphism $\mathbb{G}$ from $Rep(D)$ to  $Rep(D)$. By Lemma~\ref{l4.7}, the function~$f_{\mathbb{G}}$ defined is well-defined.  Comparing equation~(\ref{e4.6}) and \ref{e4.8}), it is easy to see that $f_{\mathbb{G}}(x)=\bigvee I_x$, for any $x\in D$. We now prove that $f_{\mathbb{G}}$ is Scott-continuous by checking that $f_{\mathbb{G}}(\bigvee S)=\bigvee f_{\mathbb{G}} (S)$ for any directed subset $S$ of $D$.

   It is clear that $I_x\subseteq I_y$ for any $x,y \in D$ with $x\leq y$, which means that $f_{\mathbb{G}}$ is order-preserving. Therefore, $\bigvee f_{\mathbb{G}} (S)\leq f_{\mathbb{G}}(\bigvee S)$.
   Conversely, trivial checks verify that $\bigvee f_{\mathbb{G}} (S)= \bigvee \set{\bigvee I_d}{d\in S}=\bigvee (\bigcup{_{d\in S}}I_d).$
    Since $f_{\mathbb{G}}(\bigvee S)=\bigvee I_{\bigvee S}$, so that to complete the proof, it suffices to show that $I_{\bigvee S}\subseteq\bigcup{_{d\in S}}I_d$. If $x'\in  I_{\bigvee S}$, then $(F,x')\in\mathbb{G} $ for some $F\in \mathcal{F}_{\tau}$ with $F\sqsubseteq \dda \bigvee S\cap B_D$. Because $\bigvee F\in F$, we have $\bigvee F\ll\bigvee S$. By the definition of continuous domain,  there exists some $d\in S$ such that $\bigvee F\ll' d$. Thus $F\sqsubseteq \dda d \cap B_D$. This implies that $x'\in I_d$,
      and hence $I_{\bigvee S}\subseteq\bigcup{_{d\in S}}I_d$.

 (3) For any $x\in D$, since
 \begin{align*}
f_{\mathbb{G}_{f}}(x)&=\bigvee\set{x'\in B_{D'}}{(\exists F\in \mathcal{F}_{\tau})(F\sqsubseteq \dda x\cap B_D, (F,x')\in\mathbb{G}_{f}})\\
&=\bigvee\set{x'\in B_{D'}}{(\exists F\in \mathcal{F}_{\tau})(F\sqsubseteq \dda x\cap B_D,  x'\ll'f(\bigvee F)}\\
&=\bigvee \set{x'\in B_{D'}}{(\exists y\in B_D)( y\ll x, x'\ll'f(y))}\\
&=\bigvee (\dda f(x)\cap B_{D'})\\
&=f(x),
\end{align*}
 it follows  that $f=f_{\mathbb{G}_{f}}$.

 For any $F\sqsubseteq \mathcal{F}_{\tau}$ and $x'\in D'$, since
\begin{align*}
(F,x')\in\mathbb{G}_{f_{\mathbb{G}}}&\Leftrightarrow x'\ll'f_{\mathbb{G}}(\bigvee F)\\
&\Leftrightarrow x'\ll'\bigvee\set{y'\in D'}{(\exists F_1\in\mathcal{F}_{\tau})(F_1\subseteq \dda (\bigvee F)\cap B_D, (F,y')\in\mathbb{G}})\\
&\Leftrightarrow (\exists y'\in D', \exists F_1\in\mathcal{F}_{\tau})(F_1\sqsubseteq \dda (\bigvee F)\cap B_D,(F,y')\in\mathbb{G}, x'\ll y')\\
&\Leftrightarrow ( \exists F_1\in\mathcal{F}_{\tau})(F_1\sqsubseteq\lceil F\rceil,(F_1,x')\in\mathbb{G}\\
&\Leftrightarrow (F,x')\in\mathbb{G},
\end{align*}
 it follows  that $\mathbb{G}=\mathbb{G}_{f_{\mathbb{G}}}$.
\end{proof}
\subsection{Categorical equivalence}

The equivalence between categories demonstrates the essential identicalness between mathematical structures.
In this subsection, we establish the categorical equivalence between attribute continuous formal contexts and continuous domains. To make our result more transparent, we first recall some basic notions and results about category theory.
More notions of category theory can be  referred to~\cite{maclane71}.

Roughly speaking, a \emph{category} ${\bf C}$ consists of a collection of objects, a collection of morphism and a partial operator, named morphism, which obeys two laws. First, for each object~$A$ there exists an identity morphism~Id$_A$. Second, composition~$\circ$ is associative when defined.
The objects $A$ and $B$ in a category ${\bf C}$ is said to be  \emph{isomorphic} to each other if there
is a pair of morphisms $f : A \rightarrow B, g : B \rightarrow A $ such that $g \circ f = \text{Id}_A$ and
$f \circ  g =\text{Id}_B$.
A \emph{functor} from a category ${\bf C}$ to a category ${\bf D}$ is a map of objects to objects and morphisms to morphisms that
also preserves identities and composition.

Let $\mathcal{G}: {\bf C}\rightarrow {\bf D}$ be a functor. If, for all objects $A$ and $B$ of ${\bf C}$, the induced mapping $f\mapsto \mathcal{G}(f)$ between the set of morphisms from $A$ to $B$ and the set of morphisms from $\mathcal{G}(A)$ to $\mathcal{G}(B)$ is injective (respectively, surjective), then $\mathcal{G}$ is said to be \emph{faithful} (respectively, \emph{full}).

 We use the following well-known fact to prove the equivalence between two categories.

\begin{lemma}~\cite{maclane71}
Let ${\bf C}$ and ${\bf D}$ be two categories. Then ${\bf C}$ and ${\bf D}$ are categorically equivalent if and only if there exists a functor~$\mathcal{G}:{\bf C}\rightarrow{\bf D}$ such that $\mathcal{G}$ is full, faithful and essentially surjective on objects, that is for every object~$D$ of ${\bf D}$, there exists some object $C$ of ${\bf C}$ such that $\mathcal{G}(C)$ and $D$ are  isomorphic to each other.
\end{lemma}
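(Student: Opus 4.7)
The plan is to handle the two directions of the stated equivalence separately, with the reverse implication being the substantive part.

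For the forward direction, assume $\mathbf{C}$ and $\mathbf{D}$ are categorically equivalent, so there exist functors $\mathcal{G}:\mathbf{C}\rightarrow\mathbf{D}$ and $\mathcal{H}:\mathbf{D}\rightarrow\mathbf{C}$ together with natural isomorphisms $\eta:\mathrm{Id}_{\mathbf{C}}\Rightarrow \mathcal{H}\circ\mathcal{G}$ and $\varepsilon:\mathcal{G}\circ\mathcal{H}\Rightarrow \mathrm{Id}_{\mathbf{D}}$. Essential surjectivity of $\mathcal{G}$ is then witnessed directly by the component $\varepsilon_D:\mathcal{G}(\mathcal{H}(D))\to D$ for each object $D$ of $\mathbf{D}$. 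For faithfulness, if $\mathcal{G}(f)=\mathcal{G}(g)$ for parallel morphisms $f,g:A\to B$ in $\mathbf{C}$, then applying $\mathcal{H}$ and using naturality of $\eta$ yields $\eta_B\circ f = \mathcal{H}\mathcal{G}(f)\circ \eta_A = \mathcal{H}\mathcal{G}(g)\circ\eta_A = \eta_B\circ g$, and cancelling the isomorphism $\eta_B$ gives $f=g$. For fullness, given any $h:\mathcal{G}(A)\to \mathcal{G}(B)$, I would set $f:=\eta_B^{-1}\circ \mathcal{H}(h)\circ \eta_A$ and verify $\mathcal{G}(f)=h$ using naturality of $\varepsilon$ together with the triangle-type identity $\varepsilon_{\mathcal{G}(B)}\circ \mathcal{G}(\eta_B)=\mathrm{Id}_{\mathcal{G}(B)}$, which in turn follows from the fact that $\varepsilon$ and $\mathcal{G}\eta$ are both natural isomorphisms of the same endofunctor to itself.

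For the reverse direction, suppose $\mathcal{G}$ is full, faithful, and essentially surjective. I would construct a quasi-inverse $\mathcal{H}:\mathbf{D}\to\mathbf{C}$ using the axiom of choice: for each object $D$ of $\mathbf{D}$ pick an object $\mathcal{H}(D)$ of $\mathbf{C}$ together with a chosen isomorphism $\varepsilon_D:\mathcal{G}(\mathcal{H}(D))\to D$. On morphisms $h:D_1\to D_2$, the key step is to use fullness to obtain, and faithfulness to uniquely determine, a morphism $\mathcal{H}(h):\mathcal{H}(D_1)\to \mathcal{H}(D_2)$ characterized by $\mathcal{G}(\mathcal{H}(h))=\varepsilon_{D_2}^{-1}\circ h\circ \varepsilon_{D_1}$.

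It remains to verify that $\mathcal{H}$ is indeed a functor, that $\varepsilon$ is a natural isomorphism $\mathcal{G}\circ\mathcal{H}\Rightarrow \mathrm{Id}_{\mathbf{D}}$ (which is essentially built in by construction), and to produce the other natural isomorphism $\eta:\mathrm{Id}_{\mathbf{C}}\Rightarrow \mathcal{H}\circ\mathcal{G}$. The component $\eta_A$ would be taken as the unique morphism $A\to \mathcal{H}(\mathcal{G}(A))$ with $\mathcal{G}(\eta_A)=\varepsilon_{\mathcal{G}(A)}^{-1}$, again invoking fullness and faithfulness; each $\eta_A$ is an isomorphism because its image under the faithful functor $\mathcal{G}$ is one, and $\mathcal{G}$ reflects isomorphisms when full and faithful. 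Naturality of $\eta$ and functoriality of $\mathcal{H}$ both amount to equations between morphisms in $\mathbf{C}$ whose images under $\mathcal{G}$ coincide by direct computation in $\mathbf{D}$; faithfulness then upgrades these equalities to the required ones in $\mathbf{C}$. The main obstacle is therefore organizational rather than conceptual: arranging the diagram chases so that every verification is reduced to an equality in $\mathbf{D}$ between morphisms of the form $\mathcal{G}(-)$, which can then be transported back to $\mathbf{C}$ by faithfulness.
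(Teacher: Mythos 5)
The paper does not prove this lemma at all: it is quoted verbatim as a known fact from Mac Lane's \emph{Categories for the Working Mathematician} and used as a black box in the proof of Theorem~4.12. Your proposal is therefore being compared against a citation rather than an argument, and what you have written is essentially the standard textbook proof (Mac Lane, Theorem~IV.4.1): the forward direction extracts essential surjectivity from $\varepsilon$, faithfulness from naturality of $\eta$ plus cancellation of the isomorphism $\eta_B$, and fullness from a transported morphism; the reverse direction builds a quasi-inverse by choosing, for each object $D$, an object $\mathcal{H}(D)$ and an isomorphism $\varepsilon_D$, and defines $\mathcal{H}$ on morphisms via fullness and faithfulness. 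That overall architecture is correct and is exactly the intended proof.

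One justification in your forward direction is genuinely wrong as stated. For the fullness of $\mathcal{G}$ you invoke the triangle-type identity $\varepsilon_{\mathcal{G}(B)}\circ\mathcal{G}(\eta_B)=\mathrm{Id}_{\mathcal{G}(B)}$ and claim it holds ``because $\varepsilon$ and $\mathcal{G}\eta$ are both natural isomorphisms of the same endofunctor to itself.'' Two natural isomorphisms between the same pair of functors need not be mutually inverse (consider the identity and $-\mathrm{id}$ as natural automorphisms of the identity functor on vector spaces), and for an arbitrary equivalence the triangle identities genuinely can fail; they hold only after one replaces $\varepsilon$ to obtain an adjoint equivalence. The standard repair is to avoid the triangle identity entirely: by the same argument that shows $\mathcal{G}$ is faithful (applied with $\varepsilon$ in place of $\eta$), the functor $\mathcal{H}$ is also faithful; then for $f:=\eta_B^{-1}\circ\mathcal{H}(h)\circ\eta_A$ naturality of $\eta$ gives $\mathcal{H}\mathcal{G}(f)=\eta_B\circ f\circ\eta_A^{-1}=\mathcal{H}(h)$, and faithfulness of $\mathcal{H}$ yields $\mathcal{G}(f)=h$. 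With that substitution your proof is complete; the reverse direction, including the use of choice and the reduction of all verifications to equalities in $\mathbf{D}$ transported back by faithfulness, is sound as sketched.
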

Let $\mathbb{H}$ be an
$\mathcal{F}$-morphism from $\mathbb{P}_{\tau}=(P_o,P_a,\vDash, \mathcal{F}_{\tau})$ to $\mathbb{P}'_{\tau'}=(P'_o,P'_a,\vDash', \mathcal{F}_{\tau'}')$ and $\mathbb{H}'$ be an
$\mathcal{F}$-morphism from $\mathbb{P}_{\tau'}'=(P_o',P_a',\vDash', \mathcal{F}_{\tau'}')$ to $\mathbb{P}''_{\tau''}=(P''_o,P''_a,\vDash'', \mathcal{F}_{\tau''}'')$. Define $\mathbb{H}\circ \mathbb{H}'\subseteq \mathcal{F}_{\tau}\times P_a''$ by
\begin{equation}\label{e4.9}
(F,x'')\in\mathbb{H}\circ \mathbb{H}'\Leftrightarrow (\exists G\in\mathcal{F'}_{\tau'})(F\mathbb{H} G,(G,x'')\in\mathbb{H}'),
\end{equation}
and $\mathbb{I}_{P_a}\subseteq \mathcal{F}_{\tau}\times P_a$ by
\begin{equation}\label{e4.10}
(F,x)\in\mathbb{I}_{P_a} \Leftrightarrow x\in \lceil F\rceil.
\end{equation}
Routine checks verify that $\mathbb{H}\circ \mathbb{H}'$ is an $\mathcal{F}$-morphisms from $\mathbb{P}_{\tau}$ to $\mathbb{P}''_{\tau''}$ and
 $\mathbb{I}_{P_a}$ is an $\mathcal{F}$-morphism from $\mathbb{P}_{\tau}=(P_o,P_a,\vDash, \mathcal{F}_{\tau})$ to itself.

 Conditions (AR1) and (AR2) yield that the relation $\mathbb{I}_{P_a}$ is the identity morphism of $\mathbb{P}_{\tau}$. Using the same argument as checking the associative law of a traditional relation composition, it is easy to see that the composition~$\circ$  is also associative.

 Thus,
the attribute continuous formal contexts as objects with $\mathcal{F}$-morphisms as morphisms form a category ${\bf ACC}$.
We use ${\bf CD}$ to present the category of continuous domains with Scott-continuous functions.

In the rest of this section, we establish the categorical equivalence between ${\bf ACC}$ and ${\bf CD}$. To this end,  we first make a functor between them.

\begin{proposition}\label{p4.10}
$\mathcal{G}: {\bf ACC}\rightarrow {\bf CD}$ is a functor which maps every attribute continuous formal context $\mathbb{P}_{\tau}=(P_o,P_a,\vDash, \mathcal{F}_{\tau})$  to the continuous domain $(\mathfrak{B}(P_a,\mathcal{F}),\subseteq)$ and every $\mathcal{F}$-morphism $\mathbb{H}:\mathbb{P}_{\tau}\rightarrow \mathbb{P}'_{\tau'}$ to the Scott-continuous function $\phi_{\mathbb{H}}: \mathfrak{B}(P_a,\mathcal{F})\rightarrow \mathfrak{B}(P_a',\mathcal{F}')$, where $\phi_{\mathbb{H}}$ is defined by equation~(\ref{e4.3}).
\end{proposition}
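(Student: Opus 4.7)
The plan is to verify the four functoriality requirements in sequence, drawing on results already assembled in the paper. First, the object assignment is well-defined by Theorem~\ref{t3.14}, which asserts that $(\mathfrak{B}(P_a,\mathcal{F}_\tau),\subseteq)$ is a continuous domain for every attribute continuous formal context $\mathbb{P}_\tau$. Second, the morphism assignment is well-defined by part~(1) of Theorem~\ref{t4.5}, which shows that the map $\phi_{\mathbb{H}}$ defined by equation~(\ref{e4.3}) is Scott-continuous for every $\mathcal{F}$-morphism $\mathbb{H}$. Hence it only remains to check preservation of identities and of composition.

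For identity preservation, I would fix $Q\in\mathfrak{B}(P_a,\mathcal{F}_\tau)$ and combine equations~(\ref{e4.3}) and (\ref{e4.10}) to obtain
\[
\phi_{\mathbb{I}_{P_a}}(Q)=\{x\in P_a : (\exists F\in\mathcal{F}_\tau)(F\subseteq Q,\ x\in\lceil F\rceil)\}=\bigcup\set{\lceil F\rceil}{F\in\mathcal{F}_\tau,\,F\subseteq Q},
\]
and the latter set equals $Q$ by part~(2) of Proposition~\ref{p3.12}. Therefore $\phi_{\mathbb{I}_{P_a}}$ is the identity on $\mathfrak{B}(P_a,\mathcal{F}_\tau)$.

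For composition preservation I would fix $\mathcal{F}$-morphisms $\mathbb{H}:\mathbb{P}_\tau\to\mathbb{P}'_{\tau'}$ and $\mathbb{H}':\mathbb{P}'_{\tau'}\to\mathbb{P}''_{\tau''}$, and establish $\phi_{\mathbb{H}\circ\mathbb{H}'}=\phi_{\mathbb{H}'}\circ\phi_{\mathbb{H}}$ by a two-sided inclusion on values. The inclusion $\phi_{\mathbb{H}\circ\mathbb{H}'}(Q)\subseteq(\phi_{\mathbb{H}'}\circ\phi_{\mathbb{H}})(Q)$ is immediate from equation~(\ref{e4.9}): a witness $F\in\mathcal{F}_\tau$ with $F\subseteq Q$, $F\mathbb{H}G$ and $(G,x'')\in\mathbb{H}'$ forces $G\subseteq\phi_{\mathbb{H}}(Q)$, giving $x''\in\phi_{\mathbb{H}'}(\phi_{\mathbb{H}}(Q))$. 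The reverse inclusion is the technical core: given $G\in\mathcal{F}'_{\tau'}$ with $G\subseteq\phi_{\mathbb{H}}(Q)$ and $(G,x'')\in\mathbb{H}'$, every $g\in G$ comes with its own witness $F_g\in\mathcal{F}_\tau$ such that $F_g\subseteq Q$ and $(F_g,g)\in\mathbb{H}$, and I must build a single $F\in\mathcal{F}_\tau$ with $F\subseteq Q$ that witnesses $F\mathbb{H}G$. To do so I would form $\bigcup_{g\in G}F_g\sqsubseteq Q$ and invoke part~(2) of Proposition~\ref{p3.11} to obtain $F\in\mathcal{F}_\tau$ with $F\subseteq Q$ and $\bigcup_{g\in G}F_g\subseteq\lceil F\rceil$; condition~(AR2) then promotes each $(F_g,g)\in\mathbb{H}$ to $(F,g)\in\mathbb{H}$, so $F\mathbb{H}G$, and equation~(\ref{e4.9}) yields $x''\in\phi_{\mathbb{H}\circ\mathbb{H}'}(Q)$.

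The main obstacle is precisely this amalgamation step inside the reverse inclusion for composition: the hypothesis $G\subseteq\phi_{\mathbb{H}}(Q)$ supplies only a separate witness $F_g$ for each $g\in G$ pointwise, while the composed relation $\mathbb{H}\circ\mathbb{H}'$ demands a single member of $\mathcal{F}_\tau$ simultaneously witnessing all of $G$. Resolving this requires combining the $\tau$-consistency condition~(CA1) (through Proposition~\ref{p3.11}) with the monotonicity condition~(AR2) on $\mathcal{F}$-morphisms; once these tools are in play the argument closes cleanly and both functorial laws follow.
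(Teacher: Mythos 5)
Your proposal is correct and follows essentially the same route as the paper: well-definedness of the object and morphism assignments via Theorem~\ref{t3.14} and Theorem~\ref{t4.5}, the same computation showing $\phi_{\mathbb{I}_{P_a}}(Q)=\bigcup\set{\lceil F\rceil}{F\in\mathcal{F}_{\tau},F\subseteq Q}=Q$, and the same composition law. The only difference is that the paper compresses the composition check into a chain of equivalences whose nontrivial backward step it leaves implicit, whereas you correctly isolate and prove that amalgamation step (combining the pointwise witnesses $F_g$ via Proposition~\ref{p3.11}(2) and condition~(AR2)), which is exactly the argument needed to justify the paper's ``$\Leftrightarrow$''.
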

\begin{proof}
By Theorems~\ref{t3.14} and~\ref{t4.5}, the corresponding objects map and morphism map are well-defined.
For any $Q\in\mathfrak{B}(P_a,\mathcal{F}_{\tau})$, we have
\begin{align*}
\mathcal{G}(\mathbb{I}_{P_a})(Q)&=\phi_{\mathbb{I}_{P_a}}(Q) \\
&=\set{x\in P_a}{(\exists F\in \mathcal{F}_{\tau})(F\sqsubseteq Q, x\in \lceil F\rceil})\\
&=\bigcup\set{\lceil F\rceil}{(\exists F\in \mathcal{F}_{\tau})(F\sqsubseteq Q)}\\
&=Q.
\end{align*}
This implies that $\mathcal{G}$ preserves the identity morphism.

 Let $\mathbb{H}$ be an
$\mathcal{F}$-morphism from $\mathbb{P}_{\tau}$ to $\mathbb{P}_{\tau'}'$ and $\mathbb{H}'$ an
$\mathcal{F}$-morphisms from $\mathbb{P}_{\tau'}'=(P_o',P_a',\vDash', \mathcal{F}_{\tau'}')$ to $\mathbb{P}''_{\tau''}=(P''_o,P''_a,\vDash'', \mathcal{F}_{\tau''}'')$.
For any $Q\in\mathfrak{B}(P_a,\mathcal{F})$ and $x''\in P_a''$, we have
\begin{align*}
x''\in \mathcal{G}(\mathbb{H}'\circ\mathbb{H})(Q)&\Leftrightarrow x''\in f_{\mathbb{H}'\circ\mathbb{H}}(Q) \\
&\Leftrightarrow (\exists F\in \mathcal{F}_{\tau})(F\subseteq Q,(F,x'')\in(\mathbb{H}'\circ\mathbb{H}))\\
&\Leftrightarrow (\exists F\in \mathcal{F}_{\tau},\exists F'\in \mathcal{F'}_{\tau'})(F\subseteq Q,F\mathbb{H} F',(F',x'')\in\mathbb{H}')\\
&\Leftrightarrow (\exists F'\in \mathcal{F}_{\tau'}')( F'\subseteq f_{\mathbb{H}}(Q),(F',x'')\in\mathbb{H}')\\
&\Leftrightarrow x''\in f_{\mathbb{H}'}(\mathcal{G}(\mathbb{H})(Q))\\
&\Leftrightarrow x''\in\mathcal{G}(\mathbb{H}')(\mathcal{G}(\mathbb{H})(Q)).
\end{align*}
This implies that $\mathcal{G}(\mathbb{H}'\circ\mathbb{H})=\mathcal{G}(\mathbb{H}')\circ\mathcal{G}(\mathbb{H})$, that is $\mathcal{G}$ preserves the composition.
\end{proof}
\begin{remark}

Similarly to Proposition~\ref{p4.10}, we can also prove that:
$\mathcal{H}: {\bf CD}\rightarrow {\bf ACC}$ is a functor which maps every continuous domain~$(D,\leq)$ to $Rep(D)=(B_D, D, \geq,\mathcal{F}_D),$
and for any Scott-continuous functions~$f : D \rightarrow D'$ to $\mathbb{G}_f,$
where $\mathbb{G}_f$ is defined  by equation~(\ref{e4.7}).
\end{remark}

\begin{theorem}\label{t4.12}
 The category~${\bf ACC}$ is equivalent to~${\bf CD}$.
\end{theorem}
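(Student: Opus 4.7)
The plan is to apply the preceding lemma by verifying that the functor $\mathcal{G}:\mathbf{ACC}\rightarrow\mathbf{CD}$ constructed in Proposition~\ref{p4.10} is simultaneously essentially surjective on objects, faithful, and full. All three conditions should fall out of results already established in Sections~3 and~4, so the argument amounts to assembling those pieces rather than fresh constructions.

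For essential surjectivity on objects, I will invoke Theorem~\ref{t3.14}: given any continuous domain $(D,\leq)$ with a chosen basis $B_D$, the associated attribute continuous formal context $Rep(D)=(D,B_D,\geq,\mathcal{F}_{\tau})$ satisfies $(D,\leq)\cong(\mathfrak{B}(B_D,\mathcal{F}_{\tau}),\subseteq)=\mathcal{G}(Rep(D))$. Hence every object of $\mathbf{CD}$ lies, up to isomorphism, in the image of $\mathcal{G}$.

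For fullness and faithfulness, I will rely on Theorem~\ref{t4.5}, which is tailored exactly for this purpose. Given attribute continuous formal contexts $\mathbb{P}_{\tau}$ and $\mathbb{P}'_{\tau'}$, the assignment $\mathbb{H}\mapsto\phi_{\mathbb{H}}$ from $\mathcal{F}$-morphisms to Scott-continuous functions coincides with the morphism part of $\mathcal{G}$; by part~(2) of Theorem~\ref{t4.5} every Scott-continuous function $\phi:\mathfrak{B}(P_a,\mathcal{F}_{\tau})\rightarrow\mathfrak{B}(P_a',\mathcal{F}'_{\tau'})$ admits a candidate preimage $\mathbb{H}_{\phi}$; and by part~(3) the two assignments are mutually inverse, so $\mathbb{H}\mapsto\phi_{\mathbb{H}}$ is a bijection between hom-sets. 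This bijection gives injectivity (faithfulness) and surjectivity (fullness) of the action of $\mathcal{G}$ on morphisms in a single stroke.

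The main obstacle should be essentially bookkeeping: one must confirm that the bijection supplied by Theorem~\ref{t4.5} really is the hom-set map induced by $\mathcal{G}$ (which holds by construction in Proposition~\ref{p4.10}) and then quote the equivalence-detection lemma. If a more direct argument via the second functor $\mathcal{H}:\mathbf{CD}\rightarrow\mathbf{ACC}$ from the remark following Proposition~\ref{p4.10} is desired, one could alternatively exhibit natural isomorphisms $\mathcal{G}\circ\mathcal{H}\cong\mathrm{Id}_{\mathbf{CD}}$ and $\mathcal{H}\circ\mathcal{G}\cong\mathrm{Id}_{\mathbf{ACC}}$ using the isomorphisms $f,g$ produced in the proof of Theorem~\ref{t3.14}, but the route through the lemma is shorter and keeps all verifications within the already-proven framework.
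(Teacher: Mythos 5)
Your proposal is correct and follows essentially the same route as the paper: essential surjectivity from Theorem~\ref{t3.14}, fullness from the existence of $\mathbb{H}_{\phi}$ with $\phi_{\mathbb{H}_{\phi}}=\phi$, and faithfulness from $\mathbb{H}_{\phi_{\mathbb{H}}}=\mathbb{H}$ (the paper re-derives this identity by an explicit chain of equivalences, whereas you simply quote part~(3) of Theorem~\ref{t4.5}, but the content is identical). No gaps.
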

\begin{proof}
According to Theorem~\ref{t3.14}, we see that the categories~${\bf ACC}$ and ${\bf CD}$ are essential surjective on objects. We now only need  to show that the functor~$\mathcal{G}$ defined in Proposition~\ref{p4.10} is full and faithful.

Let $\phi:\mathfrak{B}(P_a,\mathcal{F}_{\tau})\rightarrow \mathfrak{B}(P'_a,\mathcal{F}_{\tau'}')$ be a Scott-continuous function. From Theorem~\ref{t4.5}, we obtain an $\mathcal{F}$-morphism~$\mathbb{H}_{\phi}$ such that $\mathcal{G}(\mathbb{H}_{\phi})=\phi_{\mathbb{H}_{\phi}}=\phi$.
This implies that the functor $\mathcal{G}$ is full.

 Let $\mathbb{H}_1,\mathbb{H}_2: \mathbb{P}_{\tau} \rightarrow \mathbb{P}_{\tau'}'$ be $\mathcal{F}$-morphisms with $\phi_{\mathbb{H}_1}=\phi_{\mathbb{H}_2}$, where $\phi_{\mathbb{H}_1}$ and $\phi_{\mathbb{H}_2}$ are defined by equation~(\ref{e4.3}). For any $F\in \mathcal{F}_{\tau}$, we  have
\begin{align*}
(F,x')\in\mathbb{H}_1&\Leftrightarrow(\exists G\in\mathcal{F}_{\tau})(G\sqsubseteq F ,(G,x')\in\mathbb{H}_1)\\
&\Leftrightarrow x'\in\phi_{\mathbb{H}_1}(\lceil F\rceil)\\
&\Leftrightarrow x'\in\phi_{\mathbb{H}_2}(\lceil F\rceil)\\
&\Leftrightarrow(\exists G\in\mathcal{F}_{\tau})(G\sqsubseteq F ,(G,x')\in\mathbb{H}_2)\\
&\Leftrightarrow (F,x')\in\mathbb{H}_2.
\end{align*}
This implies that $\mathbb{H}_1=\mathbb{H}_2$, and hence the functor $\mathcal{G}$ is faithful.
\end{proof}

With $\mathcal{F}$-morphisms being morphisms, the subclasses of attribute continuous formal contexts introduced in Section 3.3 allows the building of special subcategories of ${\bf ACC}$. We write ${\bf DACC}$, ${\bf PACC}$, ${\bf TACC}$, ${\bf CACC}$ and ${\bf MACC}$ for the categories of dense attribute continuous formal contexts, of pointed attribute continuous formal contexts, of topped attribute continuous formal contexts, consistent attribute continuous formal contexts and multiplicative attribute continuous formal contexts, respectively. They are all full subcategories of ${\bf ACC}$. And we write ${\bf AD}$, ${\bf PCD}$, ${\bf TCD}$, ${\bf BCD}$ and ${\bf SCS}$ for the full subcategories of ${\bf CD}$ in which all objects are algebraic domains, pointed continuous domains, topped continuous domains, bounded complete domains and stably continuous semilattices, respectively.

 Similar to the proof of Theorem~\ref{t4.12}, it is not difficult to show that the categories~${\bf DACC}$, ${\bf PACC}$, ${\bf TACC}$, ${\bf CACC}$ and ${\bf MACC}$ are equivalent to~${\bf AD}$, ${\bf PCD}$, ${\bf TCD}$, ${\bf BCD}$ and ${\bf SCS}$, respectively.
\section{Conclusions and future work}

This paper has introduced  notions of attribute continuous context and continuous formal concept. It is shown that the set of continuous formal concepts of an attribute continuous formal context generate exactly the continuous domains, and  the category of attributes continuous formal contexts is equivalent to that of continuous domains. The results enrich the link between the two relatively independent areas: FCA and continuous domains, as outlined in~\cite{zhang06}.

As same as the classical FCA, there are  dual definitions  based on objects rather than attributes. Based on these dual definitions, we can also provide a representation for continuous domains.  Though the relationship between  continuous formal attribute concepts and  continuous formal object concepts is an interesting problem,
we do not investigate it in the paper.  Because it  has no bearing on with the subject at issue.

This paper has also studied the representations of a variety of subclasses of continuous domains, for example, algebraic domains, bounded complete domains and stably continuous semilattice. And the cases of continuous lattices and algebraic lattices can be obtained as a consequence.
It is worth mentioning that (1) L-domains and FS-domains can also be represented by attribute continuous formal contexts plus some additional requirements. But the proof of the representations  of these two subclasses are relatively complex. (2) For continuous lattices, there is a different representation in FCA from our mentioned above. We leave these two cases as a subject of a separate paper. It would be interesting to find appropriate conditions for other subclasses of continuous domains.

It also should be pointed out that there are  a number of open problems related to continuous semilattces. For example, a possible representation of continuous semilattices is still unknown. Proposition~\ref{p3.24} partially solves this problem in the case of the way-below relation being multiplicative. However, it is not difficult to illustrate that condition~(SS1) is a sufficient condition but not a necessary condition to force the generated domain  to be continuous semilattices, and condition~(SS2) is a sufficient condition but not a necessary condition to force the way-below relation to be multiplicative.
\bibliographystyle{plain}

\end{document}